\newif\ifanon\anonfalse
\newif\ifcomments\commentsfalse
\makeatletter \@input{tex.flags} \makeatother
\def\jh{\jhnote}
\definecolor{DarkGreen}{rgb}{0.1,0.5,0.1}
\definecolor{DarkRed}{rgb}{0.5,0.1,0.1}
\definecolor{DarkBlue}{rgb}{0.1,0.1,0.5}
\theoremstyle{plain}
\newtheorem{theorem}{Theorem}[section]
\newtheorem{lemma}[theorem]{Lemma}
\newtheorem{proposition}[theorem]{Proposition}
\newtheorem{corollary}[theorem]{Corollary}
\theoremstyle{definition}
\newtheorem{definition}[theorem]{Definition}
\theoremstyle{remark}
\newtheorem{remark}[theorem]{Remark}
\DeclareMathOperator{\Rext}{\mathbb{R}_{\geq0}^\infty}
\DeclareMathOperator{\inito}{\mathbf{0}}
\DeclareMathOperator{\termo}{\mathbf{1}}
\DeclareMathOperator{\fix}{\mathrm{fix}}
\DeclareMathSymbol{:}{\mathrel}{operators}{`:}%
\DeclareMathSymbol{!}{\mathord}{operators}{`!}%
\newcommand{\catname}[1]{{\mathsf{#1}}}
\DeclareMathOperator{\Met}{\catname{Met}}
\DeclareMathOperator{\Set}{\catname{Set}}
\DeclareMathOperator{\CPO}{\catname{CPO}}
\DeclareMathOperator{\SubCPO}{\catname{SubCPO}}
\DeclareMathOperator{\MetCPO}{\catname{MetCPO}}
\DeclareMathOperator{\CLat}{\catname{CLat}_\wedge}
\def\blet{\mathop{\textbf{let}}}
\def\bfix{\mathop{\textbf{fix}}}
\def\bin{\mathop{\textbf{in}}}
\def\bunfold{\mathop{\textbf{unfold}}}
\def\bfold{\mathop{\textbf{fold}}}
\def\binl{\mathop{\textbf{inl}}\nolimits}
\def\binr{\mathop{\textbf{inr}}\nolimits}
\def\bcase{\mathop{\textbf{case}}\nolimits}
\def\bof{\mathop{\textbf{of}}}
\def\bfold{\mathop{\textbf{fold}}}
\def\bunfold{\mathop{\textbf{unfold}}}
\def\myrel#1\over#2{\mathrel{\mathop{\kern0pt#2}\limits_{#1}}}
\def\rlist{\mathop{\textsf{list}}}
\newcommand{\bool}{\mathbb{B}}
\newcommand{\rname}[1]{\quad #1}
\newcommand{\brname}[1]{(#1)\xspace}
\newcommand{\ruleconst}{\ensuremath{\brname{\mathrm{Const}}}}
\newcommand{\ruleplus}{\ensuremath{\brname{\mathrm{Plus}}}}
\newcommand{\rulevar}{\ensuremath{\brname{\mathrm{Var}}}}
\newcommand{\ruleitens}{\ensuremath{\brname{\otimes I}}}
\newcommand{\ruleetens}{\ensuremath{\brname{\otimes E}}}
\newcommand{\ruleiamp}{\ensuremath{\brname{\with I}}}
\newcommand{\ruleeamp}{\ensuremath{\brname{\with E}}}
\newcommand{\ruleisuml}{\ensuremath{\brname{+ I_l}}}
\newcommand{\ruleisumr}{\ensuremath{\brname{+ I_r}}}
\newcommand{\ruleesum}{\ensuremath{\brname{+ E}}}
\newcommand{\ruleiapp}{\ensuremath{\brname{\multimap I}}}
\newcommand{\ruleeapp}{\ensuremath{\brname{\multimap E}}}
\newcommand{\ruleibang}{\ensuremath{\brname{! I}}}
\newcommand{\ruleebang}{\ensuremath{\brname{! E}}}
\newcommand{\ruleiunit}{\ensuremath{\brname{1I}}}
\newcommand{\ruleimu}{\ensuremath{\brname{{\mu}I}}}
\newcommand{\ruleemu}{\ensuremath{\brname{{\mu}E}}}
\title{A Semantic Account of Metric Preservation}
\begin{document}

\toappear{}

\maketitle

\begin{abstract}
  \emph{Program sensitivity} measures how robust a program is to small changes
  in its input, and is a fundamental notion in domains ranging from
  differential privacy to cyber-physical systems.  A natural way to formalize
  program sensitivity is in terms of metrics on the input and output spaces,
  requiring that an $r$-sensitive function map inputs that are at distance $d$
  to outputs that are at distance at most $r \cdot d$. Program sensitivity is
  thus an analogue of Lipschitz continuity for programs.

  Reed and Pierce introduced \emph{Fuzz}, a functional language with a linear
  type system that can express program sensitivity. They show soundness
  operationally, in the form of a \emph{metric preservation} property.  Inspired
  by their work, we study program sensitivity and metric preservation from a
  denotational point of view.  In particular, we introduce \emph{metric CPOs}, a
  novel semantic structure for reasoning about computation on metric spaces, by
  endowing CPOs with a compatible notion of distance. This structure is useful
  for reasoning about metric properties of programs, and specifically about
  program sensitivity. We demonstrate metric CPOs by giving a model for the
  deterministic fragment of Fuzz.
\end{abstract}

\category{F.3.2}{Logics and Meaning of Programs}{Semantics of Programming Languages}

\keywords

domain theory, program sensitivity, metric spaces, Lipschitz continuity

\section{Introduction}

In many applications, programs should not be too sensitive to small variations
in their inputs. For example, cyber-physical systems must cope with measurement
errors from the outside world, whereas differential privacy~\citep{DMNS06} tries
to protect the privacy of individuals in a database by bounding the influence
that the presence of each individual has on the result of database queries.
\emph{Program sensitivity} (or \emph{Lipschitz continuity}) has recently emerged
as a useful tool for reasoning about such requirements.  Roughly speaking,
sensitivity is a measure of how much the results of the program may vary when
the program is run on nearby inputs. More formally, a function $f:X \to Y$ is
\emph{$r$-sensitive} if $d_X(f(x), f(y)) \leq r \cdot d_Y(x, y)$ for every pair of
inputs $x,y \in X$, where $d_S$ is a function assigning a non-negative
\emph{distance} to pairs of elements of a set $S$.

Motivated by its useful applications, many techniques have been proposed for
reasoning about program sensitivity formally, including static analyses for
imperative programs~\cite{DBLP:conf/sigsoft/ChaudhuriGLN11}, relational program
logics~\cite{BGGHRS15}, and relational refinement types~\cite{BartheKOB12}.  In
this work, we focus on the approach proposed by \citet{Reed:2010} in the
\emph{Fuzz} programming language.\footnote{%
  The language did not have a name at first; ``Fuzz'' was only introduced later
  (e.g.~\citep{HaeberlenPN11,DBLP:conf/popl/GaboardiHHNP13}).} %
Fuzz is a purely functional PCF-like language that provides a clean,
compositional sensitivity analysis for higher-order programs.  This analysis is
implemented as a linear indexed type system: every Fuzz type ${\tau}$ is endowed with
a notion of distance, and function types $!_r {\tau} \multimap {\sigma}$ carry a numeric index $r$
describing their sensitivity.

Establishing soundness for Fuzz is challenging due to the presence of general
recursive functions and types. The central technical result, \emph{metric
  preservation}~\cite{Reed:2010}, relied on the definition of intricate,
syntactic logical relations that mixed step-indexing and metric information. The
logical relations were used for two purposes: to define distances, and to prove
soundness. This mixed approach obscures the connection between Fuzz programs and
the theory of metric spaces.

In this paper, we propose an alternative, domain-theoretic treatment of
sensitivity and metric preservation in the presence of general recursion.  Our
main contribution is a new notion of \emph{metric CPO}, which is a complete
partial order endowed with a compatible metric, in the sense that every open
ball is stable under limits of ${\omega}$-chains.  While simple, this notion of
compatibility provides a natural extension of the notion of sensitivity to
partial functions and has received little attention in the literature.  We use
metric CPOs to build a model of Fuzz that validates metric
preservation.\footnote{%
  While Fuzz allows probabilistic sampling to model algorithms from differential
  privacy, the probabilistic features of Fuzz are largely orthogonal to the
  sensitivity analysis. We keep the discussion focused on sensitivity analysis,
  leaving modeling of the probabilistic features for future work.}
This model helps clarify some aspects of the analysis of Fuzz; for
instance, a result on least fixed points on metric CPOs gave us a much more
precise encoding of recursive functions in Fuzz (cf.  \Cref{lem:metcpo-kleene}
and \Cref{sec:recursive-functions}).

By grounding our work on well-established domain-theoretic notions, we can
leverage a vast array of tools to model recursive functions and
types. Technically, we first show that metric CPOs have the appropriate
structure for solving recursive domain equations, following the approach laid
out by \citet{Smyth:82,Freyd1991}, and others.  Then, we prove the adequacy of
the denotational semantics of Fuzz with respect to its operational semantics by
adapting a method due to \citet{Pitts:1996} for constructing a family of
type-indexed logical relations. We use fibrational category theory as a key
technical ingredient, for smoothly lifting colimits of CPOs to the metric
setting and for defining relations on metric CPOs.

While our work is primarily motivated by Fuzz, we believe that metric CPOs can
provide useful guidance for studying metric aspects of programs.  For instance,
differential privacy is a form of non-expansiveness~\cite[Proposition
4.1]{Reed:2010}, but that result applies to total functions, and it is not clear
what it means to partial ones.  Another intriguing question is
evaluating what constructs from the theory of metric spaces could be
incorporated in the design of languages and libraries.  For instance, the Banach
fixed-point theorem, a central tool in analysis, has a constructive
interpretation that permits approximating a fixed point up to arbitrary
precision, but it requires reasoning about the sensitivity of programs.  We plan
to investigate these and other directions in future work.

\paragraph*{Outline.}

We will begin with a simplified setting that highlights the core features of
sensitivity analysis \emph{without} general recursion, reviewing basic notions
of metric spaces (\Cref{sec:metric-spaces}) and showing how they yield a model
of a terminating fragment of Fuzz (\Cref{sec:core-fuzz}). Then, we introduce
metric CPOs in \Cref{sec:metric-cpos} and demonstrate how the constructions in
the terminating fragment can be naturally lifted to this setting, and how we can
use these structures to interpret recursive definitions of functions and data
types. We use these tools to extend our model of Fuzz with recursive types and
to prove metric preservation in \Cref{sec:recursive-types}.  We conclude with a
discussion of related work and some promising directions for future work
(\Cref{sec:related-work,sec:conclusion}).

\section{Metric Spaces}
\label{sec:metric-spaces}

We begin by studying the essence of sensitivity analysis in the simplest
setting, with metric spaces and total functions. Most results here are
standard, and covered in more detail in other works (e.g.~\citep{Monoidal}).

Let $\Rext \triangleq \{ r \in \mathbb{R} \mid r \geq 0 \} \cup \{\infty\}$ be the set of \emph{extended
  non-negative reals}. We extend addition and the order relation on $\mathbb{R}$ to
$\Rext$ by setting
\begin{mathpar}
  \infty + r = r + \infty = \infty, \and
  r \leq \infty, \and \text{for every } r \in \mathbb{R}_{\geq0} .
\end{mathpar}

An \emph{extended pseudo-metric space} is a tuple $(X,
d_X)$, where $X$ is a set and $d_X : X^2 \to \Rext$ is a \emph{metric}: a function
satisfying%
\begin{enumerate}[(i)]
\item $d_X(x, x) = 0$
\item $d_X(x, y) = d_X(y, x)$; and
\item the \emph{triangle inequality} $d_X(x,z) \leq d_X(x,y) + d_X(y,z)$.
\end{enumerate}

An extended pseudo-metric space differs from the classic notion of metric space
in two respects. First, two points can be at distance $0$ from each other
without being equal; we don't impose the axiom $d(x, y) = 0 \implies x = y$. Second,
since distances range over $\Rext$, pairs of points can be infinitely apart.  We
simplify the exposition by henceforth referring to extended pseudo-metric spaces
simply as metric spaces. In additional to standard metric spaces, such as the
real numbers $\mathbb{R}$ under the Euclidean metric, we will consider metrics defined on
products, sums, and functions; \Cref{fig:metric-spaces} summarizes these
constructions.

\begin{figure}
  \centering
  \begin{tabular}{c|c}
    Space (Carrier) & {$d(a, b)$} \\[0.5em] \hline\\[-0.3em]
    {$\mathbb{R}$} & {$|a - b|$} \\[0.5em]
    {$\termo$} & {$0$} \\[0.5em]
    {$r \cdot X$ ($X$)} & {$r \cdot d_X(a, b)$} \\[0.5em]
    {$X \with Y$ ($X \times Y$)} & {$\max(d_X(a_1, b_1), d_Y(a_2, b_2))$} \\[0.5em]
    {$X \otimes Y$ ($X \times Y$)} & {$d_X(a_1, b_1) + d_Y(a_2, b_2)$} \\[0.5em]

    & {$d_X(a, b) \text{ if $a, b \in X$}$} \\
    {$X + Y$}
    & {$d_Y(a, b) \text{ if $a, b \in Y$}$} \\
    & {$\infty \text{ otherwise}$} \\[0.5em]

    {$X \to Y$} & {$\sup_{x \in X}d_Y(a(x),b(x))$}
  \end{tabular}
  \caption{Basic metric spaces}
  \label{fig:metric-spaces}
\end{figure}

The essence of sensitivity analysis lies in the notion of
\emph{non-expansiveness}. A function $f : X \to Y$ between metric spaces is
non-expansive if $d_Y(f(x_1), f(x_2)) \leq d_X(x_1, x_2)$ for all $x_1, x_2 \in X$.  Metric
spaces and non-expansive functions form a category $\Met$ with rich structure,
which we develop in the remainder of this section.  Non-expansiveness subsumes
the notion of function sensitivity, thanks to the metric \emph{scaling}
operation (cf. \Cref{fig:metric-spaces}). Unpacking definitions, an
$r$-sensitive function $X \to Y$ is exactly a non-expansive function from the
$r$-scaled space $r \cdot X$ to $Y$.

To define scaling by $r$, we extend multiplication to $\Rext$:
\begin{align*}
  r \cdot \infty & \triangleq \infty &
  \infty \cdot r
  & \triangleq
  \begin{cases}
    0 & \text{if $r = 0$} \\
    \infty & \text{otherwise.}
  \end{cases}
\end{align*}
It is important to point out that multiplication on $\Rext$ is
\emph{non-commutative} since $0 \cdot \infty = \infty$ and $\infty \cdot 0 = 0$.
Otherwise, it is well-behaved: it is associative, monotone in both arguments,
and it distributes over addition. We will later see that this treatment of
$\infty$ is crucial for scaling to distribute over sums, and for modeling
function sensitivity in the presence of non-termination.

\jh{Is this next point interesting? It interrupts the flow.}
If $f \in \Met(X, Y)$, then $f \in \Met(r \cdot X, s \cdot Y)$ for any $r$ and $s$ such that
$r \geq s$.  In categorical language, this means that scaling extends to a
bifunctor $\Rext \times \Met \to \Met$, where $\Rext$ is regarded as the category
arising from the order $\geq$.

Now that we have pinned down the basic definitions for metric spaces, we turn
our attention to simple constructions for building spaces. These operations will
be used to interpret more complex types, as usual. The first observation is that
there are two natural metrics on a product space $X \times Y$, denoted $X \with Y$
and $X \otimes Y$. The first one combines distances by taking the maximum, while
the second one adds them up. These two metrics correspond to different
sensitivity analyses. For instance, addition on real numbers is a non-expansive
function $\mathbb{R} \otimes \mathbb{R} \to \mathbb{R}$, but not for the signature $\mathbb{R} \with \mathbb{R} \to \mathbb{R}$.

Categorically speaking, there are other differences between the metrics. The
first, $X \with Y$ yields the usual notion of Cartesian product on $\Met$:
given two non-expansive functions $f : Z \to X$ and $g : Z \to Y$, the function $\langle f,
g\rangle  : Z \to X \times Y$ defined by \[ \langle f,g\rangle (z) \triangleq (f(z), g(z)) \] is non-expansive for $X
\with Y$. Furthermore, note that the projections
\begin{mathpar}
  {\pi}_1 : X \times Y \to X \and {\pi}_2 : X \times Y \to Y
\end{mathpar}
are trivially non-expansive for this metric.

The second, product $X \otimes Y$ also supports the non-expansive projections ${\pi}_i$,
but not pairing. Instead, it allows us to split the metric of a space: the
diagonal function ${\delta}(x) = (x, x)$ is a non-expansive function \[ (r + s) \cdot X \to
(r \cdot X) \otimes (s \cdot X). \] Furthermore, currying and function application are
non-expansive under this metric. More precisely, $(\Met, {\otimes}, \termo)$ is a
symmetric monoidal category, and there is an adjunction $(-) \otimes X \dashv \Met(X, -)$
making this structure closed. Here, non-expansive functions are endowed with the
supremum metric on functions defined on \Cref{fig:metric-spaces}.

We can also define a metric on the disjoint union of two spaces, placing
elements from different components infinitely far apart. Note that this metric
yields a coproduct on $\Met$: if $f : X \to Z$ and $g : Y \to Z$, then the
case-analysis function $[f, g] : X + Y \to Z$ defined as
\begin{mathpar}[]
  [f, g]({\iota}_1(x)) \triangleq f(x) \and
  [f, g]({\iota}_2(y)) \triangleq g(y),
\end{mathpar}
is non-expansive, where ${\iota}_1 : X \to X + Y$ and ${\iota}_2 : Y \to X + Y$ are the (trivially
non-expansive) canonical injections.

We conclude with several useful identities that relate scaling to the above
constructions:
\begin{align*}
  r \cdot (X \with Y) & = r \cdot X \with r \cdot Y \\
  r \cdot (X \otimes Y) & = r \cdot X \otimes r \cdot Y \\
  r \cdot (X + Y) & = r \cdot X + r \cdot Y \\
  r \cdot (s \cdot X) & = (rs) \cdot X.
\end{align*}
The case for sums relies crucially on the fact that $0 \cdot \infty = \infty$, which
guarantees that the copies of $X$ and $Y$ in $X + Y$ remain infinitely apart
after scaling. This point was overlooked in the original Fuzz
work~\citep{Reed:2010}, where $0 \cdot \infty$ is defined as $0$. In that case, the
identity only holds for $r > 0$ strictly.

\section{Core Fuzz}
\label{sec:core-fuzz}

We now show how to model a fragment of Fuzz without general recursion. The
syntax, summarized in \Cref{fig:fuzz-syntax}, is based on a ${\lambda}$-calculus with
products and sums, with a few modifications. First, Fuzz has two pair
constructors, $(e_1, e_2)$ and $\langle e_1, e_2\rangle $, corresponding to the two products.  The
first one is eliminated using case analysis ($\blet\,(x, y) = e \bin e'$),
whereas the second one is eliminated using the projections ${\pi}_i$. The $!$
constructor boxes its argument, which can later be unboxed with the form
$\blet\,{!}x = e \bin e'$. This constructor marks where we need to scale the
metric of a space. For concreteness we will include real numbers $k$ and a unit
$()$ value, and addition on real numbers.

\begin{figure}[h]
  \centering
  \begin{align*}
    e \in E & ::= x \mid k \in \mathbb{R} \mid e_1 + e_2 \mid () \\
          & \mid {\lambda}x.\, e \mid e_1\; e_2 \\
          & \mid (e_1, e_2) \mid \blet\,(x, y) = e \bin e' \\
          & \mid \langle e_1, e_2\rangle  \mid \pi_i\; e \\
          & \mid {{!}e} \mid \blet\,{{!}x} = e \bin e' \\
          & \mid \binl e \mid \binr e
            \mid (\bcase e \bof \binl x \implies e_l \mid \binr y \Rightarrow e_r) \\
    v \in V & ::= k \in \mathbb{R} \mid () \mid {\lambda}x.\,e \\
          & \mid (v_1, v_2) \mid \langle v_1, v_2\rangle  \mid {!v} \mid \binl v \mid \binr v
  \end{align*}
  \caption{Syntax of Core Fuzz}
  \label{fig:fuzz-syntax}
\end{figure}

Fuzz programs run under a standard call-by-value big-step semantics. We write $e
\hookrightarrow v$ to say that term $e$ evaluates to value $v$ (also a term). We omit the
definition of this relation, which can be found in the original
paper~\citep{Reed:2010}.

The type system is more interesting. Terms are typed with judgments of the form
${\Gamma} \vdash e : {\sigma}$, where ${\Gamma}$ is a typing environment and ${\sigma}$ is a type. The complete
definition is given in \Cref{fig:fuzz-typing}. The type system is inspired by
bounded linear logic, with a few idiosyncratic points.  First, judgments track
the sensitivity of each variable used in a term.  More precisely, a binding $x
:_r {\sigma}$ in an environment ${\Gamma}$ means that the variable $x$ has type ${\sigma}$ under ${\Gamma}$
and that terms typed under ${\Gamma}$ are $r$-sensitive with respect to $x$. Most rules
use environment scaling ($r{\Gamma}$) and addition (${\Gamma} + {\Delta}$) to track
sensitivities. Note that the latter operation is only defined when ${\Gamma}$ and ${\Delta}$
agree on the types of all variable bindings.\footnote{%
  In the original paper~\citep{Reed:2010}, two environments $\Gamma$, $\Delta$
  can be added also when a variable appears either only in $\Gamma$ or only in
  $\Delta$. For simplicity, here we require instead all the variables to appear
  both in $\Gamma$ and $\Delta$. These are essentially equivalent, since we can
  always assume that the sensitivity of a variable is $0$.}  Second, an
abstraction ${\lambda}x.\,e$ can only be typed if $e$ is $1$-sensitive on $x$
(cf. $\ruleiapp$). Functions of different sensitivities must take arguments in a
scaled type ${!_r}{\sigma}$ and unwrap them using $\blet$ (cf. $\ruleebang$).

\newcommand{\dfuzzvar}{
  \inferrule
  { (x :_r {\sigma}) \in {\Gamma} \\ r \geq 1 }
  { {\Gamma} \vdash x : {\sigma} }
  \rname{\rulevar}
}
\newcommand{\dfuzzconst}{
  \inferrule
  { k \in \mathbb{R} }
  { {\Gamma} \vdash k : \mathbb{R} }                    \rname{\ruleconst}
}
\newcommand{\dfuzzplus}{
  \inferrule
  { {\Gamma} \vdash e_1 : \mathbb{R} \\ {\Delta} \vdash e_2 : \mathbb{R} }
  { {\Gamma} + {\Delta} \vdash e_1 + e_2 : \mathbb{R} } \rname{\ruleplus}
}
\newcommand{\dfuzziunit}{
  \inferrule
  { }
  { {\Gamma} \vdash () : 1 } \rname{\ruleiunit}
}
\newcommand{\dfuzzitens}{
  \inferrule
  { {\Gamma} \vdash e_1 : {\sigma} \\ {\Delta} \vdash e_2 : {\tau} }
  { {\Gamma} + {\Delta} \vdash (e_1, e_2) : {\sigma} \otimes {\tau} } \rname{\ruleitens}
}
\newcommand{\dfuzzetens}{
  \inferrule
  { {\Gamma} \vdash e : {\sigma}_1 \otimes {\sigma}_2  \\
    {\Delta}, x :_r {\sigma}_1, y :_r {\sigma}_2 \vdash e' : {\tau} }
  { r{\Gamma} + {\Delta} \vdash \blet\,(x, y) = e \bin e' : {\tau}} \rname{\ruleetens}
}
\newcommand{\dfuzziamp}{ \inferrule { {\Gamma} \vdash e_1 : {\sigma} \\ {\Gamma} \vdash e_2 : {\tau} }
  { {\Gamma} \vdash \langle e_1, e_2\rangle  : {\sigma} \with {\tau} } \rname{\ruleiamp}
}
\newcommand{\dfuzzeamp}{
  \inferrule
  { {\Gamma} \vdash e : {\sigma}_1 \with {\sigma}_2 }
  { {\Gamma} \vdash {\pi}_i \;e : {\sigma}_i }                  \rname{\ruleeamp}
}
\newcommand{\dfuzziapp}{
  \inferrule
    { {\Gamma}, x :_1 {\sigma} \vdash  e : {\tau} }
  { {\Gamma} \vdash  {\lambda}x.\, e : {\sigma} \multimap {\tau} } \rname{\ruleiapp}
}
\newcommand{\dfuzzeapp}{
  \inferrule
  { {\Gamma} \vdash  e_1 : {\sigma} \multimap {\tau} \\
    {\Delta} \vdash  e_2 : {\sigma} }
  { {\Gamma} + {\Delta} \vdash e_1\;e_2 : {\tau} }           \rname{\ruleeapp}
}
\newcommand{\dfuzzibang}{
  \inferrule
  { {\Gamma} \vdash e : {\sigma} }
  { r{\Gamma} \vdash {{!}e} : {{!_r}{\sigma}} } \rname{\ruleibang}
}
\newcommand{\dfuzzebang}{
  \inferrule
  { {\Gamma} \vdash e_1 : {{!_s}{\sigma}} \\ {\Delta}, x :_{rs} {\sigma} \vdash e_2 : {\tau}}
  { r{\Gamma} + {\Delta} \vdash \blet {{!}x} = e_1 \bin e_2 : {\tau} } \rname{\ruleebang}
}
\newcommand{\dfuzzisuml}{
  \inferrule
  { {\Gamma} \vdash e : {\sigma} }
  { {\Gamma} \vdash \binl e : {\sigma} + {\tau} } \rname{\ruleisuml}
}
\newcommand{\dfuzzisumr}{
  \inferrule
  { {\Gamma} \vdash e : {\tau} }
  { {\Gamma} \vdash \binr e : {\sigma} + {\tau} } \rname{\ruleisumr}
}
\newcommand{\dfuzzesum}{
  \inferrule
  { {\Gamma} \vdash e : {\sigma}_1 + {\sigma}_2 \\
    {\Delta}, x :_r {\sigma}_1 \vdash e_l : {\tau} \\
    {\Delta}, y :_r {\sigma}_2 \vdash e_r : {\tau} }
  { r{\Gamma} + {\Delta} \vdash \bcase e \bof
    \binl x \implies e_l
    \mid \binr y \implies e_r : {\tau} } \rname{\ruleesum}
}
\begin{figure*}
  \centering
  \begin{align*}
    & r, s \in \Rext &
    {\sigma}, {\tau} & ::= \mathbb{R} \mid 1 \mid {\sigma} \multimap {\tau}
    \mid {\sigma} \otimes {\tau} \mid {\sigma} \with {\tau} \mid {\sigma} + {\tau} \mid {{!_r}{\sigma}} &
    {\Gamma}, {\Delta} & ::= \emptyset \mid {\Gamma}, x :_r {\sigma}
  \end{align*}
\begin{mathpar}
  \inferrule
    { {\Gamma} = x_1 :_{r_1} {\sigma}_1,\ldots,x_n :_{r_n} {\sigma}_n }
    { r{\Gamma} = x_1 :_{r \cdot r_1} {\sigma}_1,\ldots,x_n :_{r \cdot r_n} {\sigma}_n } \and
  \inferrule
    { {\Gamma} = x_1 :_{r_1} {\sigma}_1,\ldots,x_n :_{r_n} {\sigma}_n \\
      {\Delta} = x_1 :_{s_1} {\sigma}_1,\ldots,x_n :_{s_n} {\sigma}_n }
    { {\Gamma} + {\Delta} = x_1 :_{r_1 + s_1} {\sigma}_1,\ldots,x_n :_{r_n + s_n} {\sigma}_n } \\
  \dfuzzvar \and
  \dfuzzconst \and
  \dfuzzplus \and
  \dfuzziunit \\
  \dfuzziapp \and
  \dfuzzeapp \and
  \dfuzzitens \and
  \dfuzzetens \and
  \dfuzziamp \and
  \dfuzzeamp \\
  \dfuzzibang \and
  \dfuzzebang \\
  \dfuzzisuml \and
  \dfuzzisumr \and
  \dfuzzesum
\end{mathpar}
  \caption{Core Fuzz Typing Rules}
  \label{fig:fuzz-typing}
\end{figure*}

The Fuzz type system essentially corresponds to the constructions of last
section, and can be interpreted in metric spaces in a straightforward
manner. Given a type ${\sigma}$, we define a metric space $\llbracket {\sigma}\rrbracket $ with the rules
\begin{align*}
      \llbracket \mathbb{R}\rrbracket  & \triangleq \mathbb{R} &
      \llbracket 1\rrbracket  & \triangleq \termo \\
  \llbracket {\sigma} \multimap {\tau}\rrbracket  & \triangleq \Met(\llbracket {\sigma}\rrbracket , \llbracket {\tau}\rrbracket ) &
  \llbracket {\sigma} \otimes {\tau}\rrbracket  & \triangleq \llbracket {\sigma}\rrbracket  \otimes \llbracket {\tau}\rrbracket  \\
  \llbracket {\sigma} \with {\tau}\rrbracket  & \triangleq \llbracket {\sigma}\rrbracket  \with \llbracket {\tau}\rrbracket  &
  \llbracket {!_r}{\sigma}\rrbracket  & \triangleq r \cdot \llbracket {\sigma}\rrbracket .
\end{align*}

Each environment ${\Gamma}$ is interpreted as a tensor product, scaled by the
corresponding sensitivities:
\begin{align*}
           \llbracket \emptyset\rrbracket  & \triangleq \termo &
  \llbracket {\Gamma}, x :_r {\sigma}\rrbracket  & \triangleq \llbracket {\Gamma}\rrbracket  \otimes (r \cdot \llbracket {\sigma}\rrbracket )
\end{align*}
We sometimes treat elements of $\llbracket {\Gamma}\rrbracket $ as maps from variables in ${\Gamma}$ to elements
of the denotations of their types.
We can show by a straightforward induction how this interpretation interacts
with scaling and addition.

\begin{lemma}
  \label{lem:env-scaling-addition}
  For every $r$ and ${\Gamma}$, $\llbracket r{\Gamma}\rrbracket  = r \cdot \llbracket {\Gamma}\rrbracket $. For every ${\Gamma}$ and ${\Delta}$, if ${\Gamma} + {\Delta}$ is
  defined, then the diagonal function ${\delta}(x) = (x, x)$ is a non-expansive
  function $\llbracket {\Gamma} + {\Delta}\rrbracket  \to \llbracket {\Gamma}\rrbracket  \otimes \llbracket {\Delta}\rrbracket $.
\end{lemma}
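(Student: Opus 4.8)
The plan is to prove both claims by structural induction on the typing environment $\Gamma$, with $\Delta$ ranging over environments that share the same variables and types (as is forced whenever $\Gamma + \Delta$ is defined). In both cases the empty-environment base case is immediate: $r\varnothing = \varnothing$ and $\llbracket \varnothing \rrbracket = \termo$, so for the first claim it suffices to note $r \cdot \termo = \termo$ (the one-point space has all distances $0$, and $r \cdot 0 = 0$ for every $r$, including $\infty \cdot 0 = 0$); for the second claim both source and target are one-point spaces, so the diagonal is trivially non-expansive.

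For the first claim, in the inductive step write $\Gamma = \Gamma', x :_s \sigma$, so that $r\Gamma = r\Gamma', x :_{rs}\sigma$ by the definition of environment scaling. Unfolding the interpretation and applying the induction hypothesis $\llbracket r\Gamma' \rrbracket = r \cdot \llbracket \Gamma' \rrbracket$ gives $\llbracket r\Gamma \rrbracket = (r \cdot \llbracket \Gamma' \rrbracket) \otimes ((rs) \cdot \llbracket \sigma \rrbracket)$. The identities $r \cdot (X \otimes Y) = (r \cdot X) \otimes (r \cdot Y)$ and $r \cdot (s \cdot X) = (rs) \cdot X$ from \Cref{sec:metric-spaces} rewrite the right-hand side as $r \cdot (\llbracket \Gamma' \rrbracket \otimes (s \cdot \llbracket \sigma \rrbracket)) = r \cdot \llbracket \Gamma \rrbracket$, closing the induction.

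For the second claim, in the inductive step write $\Gamma = \Gamma', x :_r \sigma$ and $\Delta = \Delta', x :_s \sigma$, so $\Gamma + \Delta = (\Gamma' + \Delta'), x :_{r+s}\sigma$ and hence $\llbracket \Gamma + \Delta \rrbracket = \llbracket \Gamma' + \Delta' \rrbracket \otimes ((r+s) \cdot \llbracket \sigma \rrbracket)$. I would assemble the required map from three non-expansive pieces. First, the induction hypothesis supplies a non-expansive diagonal $\llbracket \Gamma' + \Delta' \rrbracket \to \llbracket \Gamma' \rrbracket \otimes \llbracket \Delta' \rrbracket$, and the non-expansiveness of the diagonal $(r+s)\cdot X \to (r \cdot X) \otimes (s \cdot X)$ noted in \Cref{sec:metric-spaces} supplies a non-expansive diagonal $(r+s)\cdot\llbracket \sigma \rrbracket \to (r \cdot \llbracket \sigma \rrbracket) \otimes (s \cdot \llbracket \sigma \rrbracket)$. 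Tensoring these two maps (using that $\otimes$ is a bifunctor on $\Met$) yields a non-expansive map into $(\llbracket \Gamma' \rrbracket \otimes \llbracket \Delta' \rrbracket) \otimes ((r \cdot \llbracket \sigma \rrbracket) \otimes (s \cdot \llbracket \sigma \rrbracket))$. Composing with the middle-four interchange isomorphism $(A \otimes B) \otimes (C \otimes D) \cong (A \otimes C) \otimes (B \otimes D)$ — built from the associator and symmetry of the symmetric monoidal structure $(\Met, \otimes, \termo)$, and therefore an isometry — lands in $\llbracket \Gamma \rrbracket \otimes \llbracket \Delta \rrbracket$.

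The one point requiring care is verifying that this composite is literally the diagonal $\delta(x) = (x,x)$, and not merely some non-expansive map of the correct type. Tracing an element $(g, v)$ through the construction, the two component diagonals produce $((g, g), (v, v))$, and the interchange isomorphism reorders this to $((g, v), (g, v))$, which is exactly $\delta(g, v)$ once we recall that $\llbracket \Gamma \rrbracket$ and $\llbracket \Delta \rrbracket$ share the carrier of environments over the common variables of $\Gamma$ and $\Delta$. I expect this coherence bookkeeping to be the only mildly delicate step; the genuine arithmetic subtlety of $\Rext$ — in particular the distributivity $(r+s)\cdot d = r \cdot d + s \cdot d$ together with edge cases such as $0 \cdot \infty = \infty$ — is already packaged inside the metric-splitting property invoked above, so it does not resurface here.
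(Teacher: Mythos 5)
Your proof is correct and follows exactly the route the paper intends: the paper dispatches this lemma with the remark that it follows ``by a straightforward induction,'' and your argument is that induction carried out explicitly, using the scaling identities $r \cdot (X \otimes Y) = r\cdot X \otimes r \cdot Y$, $r\cdot(s\cdot X) = (rs)\cdot X$, and the non-expansive splitting $\delta : (r+s)\cdot X \to (r\cdot X)\otimes(s\cdot X)$ from \Cref{sec:metric-spaces}. The care you take in checking that the assembled composite is literally the diagonal (via the middle-four interchange isometry) is exactly the right bookkeeping.
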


Finally, each typing derivation ${\Gamma} \vdash e : {\sigma}$ yields a non-expansive function $\llbracket e\rrbracket 
: \llbracket {\Gamma}\rrbracket  \to \llbracket {\sigma}\rrbracket $ by structural induction:

\begin{description}
\item[\rulevar] $\llbracket x\rrbracket (a) \triangleq a(x)$.
\item[\ruleconst] $\llbracket k\rrbracket  \triangleq k$.
\item[\ruleplus] $\llbracket e_1 + e_2\rrbracket  \triangleq (+) \circ (\llbracket e_1\rrbracket  \otimes \llbracket e_2\rrbracket ) \circ {\delta}$.
\item[\ruleiunit] $\llbracket ()\rrbracket  \triangleq {\star}$, where ${\star}$ is the unique element of the
  singleton $\termo$.
\item[\ruleiapp] $\llbracket {\lambda}x.\,e\rrbracket  \triangleq {\lambda}\llbracket e\rrbracket $, where ${\lambda}$ denotes currying.
\item[\ruleeapp] $\llbracket e_1\,e_2\rrbracket  \triangleq {\epsilon} \circ (\llbracket e_1\rrbracket  \otimes \llbracket e_2\rrbracket ) \circ {\delta}$, where ${\epsilon}$ denotes function
  application.
\item[\ruleitens] $\llbracket (e_1, e_2)\rrbracket  \triangleq (\llbracket e_1\rrbracket  \otimes \llbracket e_2\rrbracket ) \circ {\delta}$.
\item[\ruleetens] $\llbracket \blet\,(x, y) = e_1 \bin e_2\rrbracket  \triangleq \llbracket e_2\rrbracket  \circ (id \otimes (r \cdot \llbracket e_1\rrbracket )) \circ {\delta}$,
  where $r$ is the sensitivity of $x$ and $y$ in $e_2$.
\item[\ruleiamp] $\llbracket \langle e_1, e_2\rangle \rrbracket  \triangleq \langle \llbracket e_1\rrbracket , \llbracket e_2\rrbracket \rangle $.
\item[\ruleeamp] $\llbracket {\pi}_i e\rrbracket  \triangleq {\pi}_i \llbracket e\rrbracket $.
\item[\ruleibang] $\llbracket ! e\rrbracket  \triangleq r \cdot \llbracket e\rrbracket $, where $r$ is the corresponding scaling
  factor.
\item[\ruleebang] $\llbracket \blet\,{!x} = e_1 \bin e_2\rrbracket  \triangleq \llbracket e_2\rrbracket  \circ (id \otimes (r \cdot \llbracket e_1\rrbracket )) \circ {\delta}$.
\item[\ruleisuml] $\llbracket \binl e\rrbracket  \triangleq {\iota}_1 \circ \llbracket e\rrbracket $.
\item[\ruleisumr] $\llbracket \binr e\rrbracket  \triangleq {\iota}_2 \circ \llbracket e\rrbracket $.
\item[\ruleesum] $\llbracket \bcase e \bof \binl x \implies e_l \mid \binr y \implies e_r\rrbracket  \triangleq [\llbracket e_l\rrbracket ,\llbracket e_r\rrbracket ]
  \circ (r \cdot \llbracket e\rrbracket )$, where $r$ is the sensitivity of $x$ and $y$.
\end{description}

We will tacitly identify the denotation of typed closed terms $\vdash e : {\sigma}$ with
elements $\llbracket e\rrbracket  \in \llbracket {\sigma}\rrbracket $ in what follows. We begin with the following standard
lemma, showing that the denotational semantics behaves well with respect to
weakening. As usual, the proof follows by simple induction on the typing
derivation.

\begin{lemma}[Weakening]
  \label{lem:core-weakening}
  Let $e$ be a typed term such that ${\Gamma}_1, {\Gamma}_2 \vdash e : {\sigma}$. For any other environment
  ${\Delta}$, we have a derivation ${\Gamma}_1, {\Delta}, {\Gamma}_2 \vdash e : {\sigma}$ whose semantics is equal to $\llbracket e\rrbracket 
  \circ {\pi}_{\Gamma}$, where ${\pi}_{\Gamma} : \llbracket {\Gamma}_1, {\Delta}, {\Gamma}_2\rrbracket  \to \llbracket {\Gamma}_1, {\Gamma}_2\rrbracket $ discards all components
  corresponding to ${\Delta}$.
\end{lemma}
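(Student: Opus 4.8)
The plan is to induct on the derivation of $\Gamma_1, \Gamma_2 \vdash e : \sigma$, constructing at each step a weakened derivation of $\Gamma_1, \Delta, \Gamma_2 \vdash e : \sigma$ and checking that its denotation equals $\llbracket e \rrbracket \circ \pi_\Gamma$. A useful preliminary observation is that the semantic half of the claim is insensitive to the annotations chosen in $\Delta$: since environments denote iterated tensor products, $\pi_\Gamma$ is a projection out of a $\otimes$ and is therefore non-expansive whatever sensitivities $\Delta$ records. Hence the annotations in $\Delta$ constrain only typability, and I am free to choose them conveniently in each rule; the canonical choice, compatible with the earlier remark that a variable's sensitivity may always be taken to be $0$, is to insert the variables of $\Delta$ at sensitivity $0$.

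The cases fall into three groups. The leaves \rulevar, \ruleconst, and \ruleiunit are immediate: inserting $\Delta$ leaves the side conditions intact (the binding witnessing \rulevar is untouched), and the denotation either ignores the environment or reads off a variable not in $\Delta$, so it factors through $\pi_\Gamma$ at once. The context-preserving rules \ruleiamp, \ruleeamp, \ruleisuml, \ruleisumr, and \ruleiapp pass a single environment to their premises (binding a fresh variable at the tail in the case of \ruleiapp). For these I would apply the induction hypothesis to each premise with the same $\Delta$, inserted before any bound variable, and then reapply the rule; the semantic equation then follows from compositionality and the naturality of $\langle -,- \rangle$, $\pi_i$, $\iota_i$, and currying under precomposition by $\pi_\Gamma$ (for \ruleiapp, using that the adjunction turns $\pi_\Gamma \otimes \mathrm{id}$ on the premise into $\pi_\Gamma$ on the conclusion).

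The substantive cases are the context-splitting and scaling rules \ruleplus, \ruleitens, \ruleeapp, \ruleetens, \ruleebang, \ruleesum, and \ruleibang, whose conclusions have the shape $r\Phi + \Psi$. Because $\Gamma_1, \Gamma_2 = r\Phi + \Psi$ forces $\Phi$ and $\Psi$ to carry the same variables in the same order, I can align them with the split as $\Phi = \Phi_1, \Phi_2$ and $\Psi = \Psi_1, \Psi_2$ and apply the induction hypothesis to each premise, inserting $\Delta$ before any bound variable. The only remaining constraint is that the inserted columns recombine to the annotations prescribed for $\Delta$ in the conclusion; inserting them at sensitivity $0$ makes this uniform, since $r \cdot 0 = 0$ (valid even for $r = \infty$) and $0 + 0 = 0$ keep both the scaled and the additive combinations at $0$, matching $\Gamma_1, 0 \cdot \Delta, \Gamma_2$. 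When an additive rule is needed to realize genuinely nonzero annotations, they can be absorbed into its unscaled premise, leaving only the purely scaled \ruleibang to factor its weakening through the scalar $r$.

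I expect the semantic verification in this last group to be the main obstacle. For \ruleplus, say, whose denotation is $(+) \circ (\llbracket e_1 \rrbracket \otimes \llbracket e_2 \rrbracket) \circ \delta$, one must show that substituting the weakened subterm denotations (each factoring through its own projection by the induction hypothesis) and precomposing with the new diagonal on $\llbracket \Gamma_1, \Delta, \Gamma_2 \rrbracket$ reproduces $\llbracket e_1 + e_2 \rrbracket \circ \pi_\Gamma$. This reduces to a diagram chase resting on the naturality of $\delta$ with respect to the projection that forgets $\Delta$ — copying and then discarding $\Delta$ from each copy agrees with discarding $\Delta$ and then copying — together with the fact that the $0 \cdot \Delta$ branch is projected away. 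The genuinely scaled rules \ruleetens, \ruleebang, \ruleesum, and \ruleibang additionally invoke \Cref{lem:env-scaling-addition} to commute the scalar past the projection. Once these coherence facts are recorded, each case is a mechanical rewrite and the induction closes.
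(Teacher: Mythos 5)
Your proof takes exactly the route the paper does: the paper records no argument beyond ``the proof follows by simple induction on the typing derivation,'' and your case analysis---splitting the context-sharing rules from the context-splitting ones, using naturality of $\delta$ and the projections, and invoking \Cref{lem:env-scaling-addition} for the scaled rules---is precisely the bookkeeping that sentence elides. The one point to watch is \ruleibang{}: when the scaling factor $r$ is $0$ or $\infty$, every annotation in the conclusion $r\Gamma$ lies in $\{0,\infty\}$ (recall that here $0 \cdot \infty = \infty$ and $\infty \cdot 0 = 0$), so a given $\Delta$ carrying a finite nonzero sensitivity cannot be ``factored through the scalar $r$'' as you propose; your uniform sensitivity-$0$ insertion does go through every rule (since $r \cdot 0 = 0$ and $0 + 0 = 0$), but it establishes the lemma only for that choice of annotations on $\Delta$, which is presumably the intended reading given the paper's footnote that a variable's sensitivity can always be taken to be $0$.
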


To state a substitution lemma, we introduce some terminology and notation. We
define a \emph{substitution} as a finite partial map from variables to
values,\footnote{%
  A similar result holds for the substitution of arbitrary expressions, but we
  will not need this generality.}  and use $\vec{v}$ to range over them. We
write $e[\vec{v}]$ for the simultaneous substitution of the values $\vec{v}(x)$
for the variables $x$ in $e$. We say that a substitution $\vec{v}$ is well-typed
under ${\Gamma}$, written $\vec{v} : {\Gamma}$, if for all types ${\sigma}$, $\vdash \vec{v}(x) : {\sigma}$ if
and only if there exists $r$ such that $(x :_r {\sigma}) \in {\Gamma}$. We can readily lift the
semantics of terms to substitutions by assigning well-typed substitutions to
denotations $\llbracket \vec{v}\rrbracket  \in \llbracket {\Gamma}\rrbracket $ in the obvious way. Then:

\begin{lemma}[Substitution]
  \label{lem:core-substitution}
  Let $e$ be a well-typed term
  \[ {\Gamma}, {\Delta} \vdash e : {\sigma}, \] and $\vec{v} : {\Gamma}$ be a well-typed
  substitution.  Then, there is a derivation of
  \[ {\Delta} \vdash e[\vec{v}] : {\sigma}, \] Furthermore, this derivation has semantics
  \[ \llbracket e[\vec{v}]\rrbracket  = \llbracket e\rrbracket (\llbracket \vec{v}\rrbracket ,-). \]
\end{lemma}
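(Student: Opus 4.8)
The plan is to proceed by structural induction on the derivation of $\Gamma{}, \Delta{} \vdash{} e : \sigma{}$, establishing the syntactic claim and the semantic identity $\llbracket e[\vec{v}] \rrbracket = \llbracket e \rrbracket(\llbracket \vec{v} \rrbracket, -)$ together. I read $\Gamma{}$ as the portion of the context that $\vec{v}$ replaces and $\Delta{}$ as the portion that survives. The crucial bookkeeping observation is that $\vec{v} : \Gamma{}$ constrains only the \emph{types} recorded in $\Gamma{}$, not their sensitivity annotations; hence the induction hypothesis will apply verbatim to any context obtained from $\Gamma{}$ by re-annotating sensitivities, which is exactly what the rules that scale and add contexts produce in their premises. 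The base cases are direct: for \ruleconst and \ruleiunit the term is closed, so $e[\vec{v}] = e$ and \Cref{lem:core-weakening} re-derives it over $\Delta{}$ with constant semantics; for \rulevar on $x :_r \sigma{}$, either $x$ lives in $\Delta{}$, where the same rule applies and the side condition $r \geq{} 1$ is inherited, or $x$ lives in $\Gamma{}$, where $x[\vec{v}] = \vec{v}(x)$ is closed and weakening yields $\Delta{} \vdash{} \vec{v}(x) : \sigma{}$. In each variable subcase the equation $\llbracket x \rrbracket(a) = a(x)$ makes the semantic identity immediate.

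The substance of the proof lies in the rules that decompose the context, namely \ruleplus, \ruleitens, \ruleetens, \ruleeapp, \ruleibang, \ruleebang, and \ruleesum. Take \ruleeapp, with premises over contexts $\Phi{}$ and $\Psi{}$ and conclusion over $\Phi{} + \Psi{} = \Gamma{}, \Delta{}$ (I rename the rule's metavariables to avoid a clash with the lemma's). Because $\Phi{} + \Psi{}$ is defined, $\Phi{}$ and $\Psi{}$ carry the same variables and types as $\Gamma{}, \Delta{}$, so each splits along the $\Gamma{}/\Delta{}$ boundary as $\Phi{} = \Gamma{}_\Phi{}, \Delta{}_\Phi{}$ and $\Psi{} = \Gamma{}_\Psi{}, \Delta{}_\Psi{}$, where $\Gamma{}_\Phi{}$ and $\Gamma{}_\Psi{}$ agree with $\Gamma{}$ up to sensitivities. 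Invoking the induction hypothesis on each premise---legitimate since $\vec{v}$ is still well-typed for the re-annotated $\Gamma{}_\Phi{}$ and $\Gamma{}_\Psi{}$---gives $\Delta{}_\Phi{} \vdash{} e_1[\vec{v}] : \sigma{} \multimap{} \tau{}$ and $\Delta{}_\Psi{} \vdash{} e_2[\vec{v}] : \sigma{}$, which recombine under \ruleeapp once one checks $\Delta{}_\Phi{} + \Delta{}_\Psi{} = \Delta{}$, using that addition acts componentwise. The scaling rule \ruleibang follows the same pattern with $r\Phi{} = \Gamma{}, \Delta{}$ and the fact that scaling commutes with the split; the binding rules \ruleetens, \ruleebang, and \ruleesum additionally use that $\vec{v}$ consists of closed values and so leaves the freshly bound variables untouched, letting substitution pass under the binder.

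For the semantic identities I unfold the definitions of $\llbracket - \rrbracket{}$, which uniformly duplicate the incoming environment with $\delta{}$ and then route the copies through the monoidal and cartesian structure. Appealing to \Cref{lem:env-scaling-addition} to identify $\llbracket r\Phi{} \rrbracket{}$ and $\llbracket \Phi{} + \Psi{} \rrbracket{}$ with the expected scaled tensors, the target equation reduces to the compatibility of partial application by the constant $\llbracket \vec{v} \rrbracket$ with $\delta{}$ and with the bifunctors $\otimes{}$, $\with$, and $+$: duplicating the whole environment and then fixing its $\Gamma{}$-component agrees with fixing first and then duplicating, because $\delta{}$ is natural and $\llbracket \vec{v} \rrbracket$ is constant. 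Composing this observation with the induction hypotheses for the immediate subterms closes each case.

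I expect the main obstacle to be exactly this bookkeeping for the context-splitting rules. On the syntactic side one must check that scaling and addition commute with the decomposition along the $\Gamma{}/\Delta{}$ boundary, so that the induction hypothesis applies and the recombined context is literally $\Delta{}$; on the semantic side one must check that the diagonalize-then-split discipline defining $\llbracket - \rrbracket{}$ is stable under fixing the $\Gamma{}$-component to $\llbracket \vec{v} \rrbracket$. Neither step is deep, but they are where the linearity of the type system makes the argument more delicate than the usual cartesian substitution lemma.
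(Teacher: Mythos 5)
Your proof is correct and takes exactly the route the paper intends: the paper omits the proof of \Cref{lem:core-substitution}, treating it (like \Cref{lem:core-weakening}) as a routine induction on the typing derivation, and your induction is the standard way to carry that out. Your key bookkeeping observation---that $\vec{v} : \Gamma{}$ constrains only types and not sensitivity annotations, so the inductive hypothesis survives the re-annotations produced by context scaling and addition---is precisely the point that makes the linear-context cases go through.
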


With this lemma, we can show:

\begin{lemma}[Preservation]
  \label{lem:core-preservation}
  If $\vdash e : {\sigma}$ and $e \hookrightarrow v$, then $\vdash v : {\sigma}$ and the semantics of both typing
  judgments are equal.
\end{lemma}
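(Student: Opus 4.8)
The plan is to proceed by induction on the derivation of the big-step evaluation $e \hookrightarrow{} v$, proving both claims simultaneously: that $\vdash{} v : \sigma{}$ is derivable, and that $\llbracket e \rrbracket  = \llbracket v \rrbracket $ as elements of $\llbracket \sigma{} \rrbracket $. The cases split into the rules that return a value immediately (a value evaluates to itself, where there is nothing to prove) and the rules that perform real computation. For each computing rule I would first invert the typing derivation $\vdash{} e : \sigma{}$ to recover typings for the immediate subterms; since the system is syntax-directed on the head constructor, this inversion is immediate. Because $e$ is closed, every environment appearing in the premises is forced to be $\varnothing{}$: the side conditions $\Gamma{} + \Delta{} = \varnothing{}$, $r\Gamma{} = \varnothing{}$, and $r\Gamma{} + \Delta{} = \varnothing{}$ arising in rules such as \ruleeapp, \ruleitens, \ruleibang, \ruleebang, and \ruleesum all collapse to $\Gamma{} = \Delta{} = \varnothing{}$, so the subterm typings are again closed and the induction hypothesis applies to each evaluated premise.

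The interesting cases are those whose evaluation rule substitutes a value into a body and re-evaluates, namely application, $\otimes{}$-elimination, $!$-elimination, and $\bcase$. I illustrate with application, $e_1\,e_2 \hookrightarrow{} v$ obtained from $e_1 \hookrightarrow{} \lambda{}x.\,e$, $e_2 \hookrightarrow{} v_2$, and $e[v_2/x] \hookrightarrow{} v$. Inversion of \ruleeapp gives $\vdash{} e_1 : \sigma{} \multimap{} \tau{}$ and $\vdash{} e_2 : \sigma{}$; the induction hypothesis yields $\vdash{} \lambda{}x.\,e : \sigma{} \multimap{} \tau{}$ with $\llbracket e_1 \rrbracket  = \llbracket \lambda{}x.\,e \rrbracket $, and $\vdash{} v_2 : \sigma{}$ with $\llbracket e_2 \rrbracket  = \llbracket v_2 \rrbracket $. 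Inverting \ruleiapp gives $x :_1 \sigma{} \vdash{} e : \tau{}$, so $[v_2/x]$ is a well-typed substitution and \Cref{lem:core-substitution} provides a derivation $\vdash{} e[v_2/x] : \tau{}$ together with the semantic equation $\llbracket e[v_2/x] \rrbracket  = \llbracket e \rrbracket (\llbracket v_2 \rrbracket , -)$. A final use of the induction hypothesis on $e[v_2/x] \hookrightarrow{} v$ delivers $\vdash{} v : \tau{}$ and $\llbracket e[v_2/x] \rrbracket  = \llbracket v \rrbracket $. Chaining these equalities and unfolding the compositional clause $\llbracket e_1\,e_2 \rrbracket  = \epsilon{} \circ{} (\llbracket e_1 \rrbracket  \otimes{} \llbracket e_2 \rrbracket ) \circ{} \delta{}$ on closed arguments — where $\epsilon{}$ is application and $\llbracket \lambda{}x.\,e \rrbracket $ is the currying of $\llbracket e \rrbracket $ — closes the case. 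The remaining substituting cases follow the same pattern, each time matching the relevant semantic clause against the semantic equation of \Cref{lem:core-substitution}.

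The main obstacle I anticipate is purely element-level bookkeeping in the $!$ rules. In \ruleebang the body is typed under $x :_{rs} \sigma{}$ and the semantic clause carries a scaling factor, $\llbracket \blet {!x} = e_1 \bin e_2 \rrbracket  = \llbracket e_2 \rrbracket  \circ{} (id \otimes{} (r \cdot{} \llbracket e_1 \rrbracket )) \circ{} \delta{}$, while inversion of \ruleibang on $\vdash{} {!v} : {!_s}\sigma{}$ produces $\vdash{} v : \sigma{}$. The point to get right is that scaling a space or a map changes only its metric, never its underlying carrier or function: $\llbracket {!_s}\sigma{} \rrbracket  = s \cdot{} \llbracket \sigma{} \rrbracket $ has the same elements as $\llbracket \sigma{} \rrbracket $, so $\llbracket {!v} \rrbracket $ and $\llbracket v \rrbracket $ denote the same element, and $r \cdot{} \llbracket e_1 \rrbracket $ acts on that element exactly as $\llbracket e_1 \rrbracket $ does. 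Once this observation is in place the index arithmetic $rs$ disappears at the level of elements and the equation $\llbracket e_2 \rrbracket (\llbracket v \rrbracket ) = \llbracket e_2[v/x] \rrbracket  = \llbracket v' \rrbracket $ goes through just as in the application case. No deeper difficulty is expected, since the metric (non-expansiveness) content of the semantics plays no role here: preservation is an equation between elements, and the metric structure is only along for the ride.
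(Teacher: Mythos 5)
Your proof is correct and matches the approach the paper intends: the paper omits the proof of this lemma but introduces it with ``With this lemma, we can show:'' immediately after the Substitution Lemma, signalling exactly the standard induction on the evaluation derivation, with inversion of the (syntax-directed) typing rules and \Cref{lem:core-substitution} handling the $\beta$-like cases. Your observation that scaling changes only the metric and not the underlying carrier, so the sensitivity indices are invisible at the level of elements, is precisely the bookkeeping point that makes the $!$ and environment-splitting cases go through.
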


Together, the lemmas provide a short proof of metric preservation for our simple
fragment of Fuzz.

\begin{theorem}[Metric Preservation]
  \label{thm:core-metric-preservation}
  Suppose that we have a well-typed program
  \[ {\Gamma} \vdash e : {\sigma}, \] and well-typed substitutions $\vec{v} : {\Gamma}$ and $\vec{v}' :
  {\Gamma}$. Then, there are well-typed values $v$ and $v'$ such that
  \begin{mathpar}
    e[\vec{v}] \hookrightarrow v \and \text{and} \and e[\vec{v}'] \hookrightarrow v'.
  \end{mathpar}
  Furthermore,
  \[ d_{\llbracket {\sigma}\rrbracket }(\llbracket v\rrbracket , \llbracket v'\rrbracket ) \leq d_{\llbracket {\Gamma}\rrbracket }(\llbracket \vec{v}\rrbracket , \llbracket \vec{v}'\rrbracket ). \]
\end{theorem}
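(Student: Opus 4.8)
The plan is to assemble the three preceding lemmas into a single chain of semantic equalities relating the denotations of the output values to $\llbracket e \rrbracket$ applied to the input substitutions, and then to close the argument with the non-expansiveness of $\llbracket e \rrbracket$. The whole proof is essentially a diagram chase through the equalities supplied by substitution and preservation, with non-expansiveness used exactly once at the end.

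First I would instantiate the Substitution Lemma (\Cref{lem:core-substitution}) on the given program $\Gamma{} \vdash{} e : \sigma{}$, taking the environment $\Delta{}$ there to be empty. Since $\vec{v} : \Gamma{}$, this produces a derivation of the closed judgment $\vdash{} e[\vec{v}] : \sigma{}$ whose semantics satisfies $\llbracket e[\vec{v}] \rrbracket  = \llbracket e \rrbracket (\llbracket \vec{v} \rrbracket )$ (the trailing argument of the lemma is the unique element of $\llbracket \varnothing{} \rrbracket  = \termo$, so it vanishes under our tacit identification of closed denotations with elements). Applying the same lemma to $\vec{v}'$ yields $\vdash{} e[\vec{v}'] : \sigma{}$ with $\llbracket e[\vec{v}'] \rrbracket  = \llbracket e \rrbracket (\llbracket \vec{v}' \rrbracket )$. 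Thus both substituted terms are closed and well typed at $\sigma{}$.

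Next I would obtain the values and then transport the semantics through evaluation. Because Core Fuzz has no general recursion, the closed terms $e[\vec{v}]$ and $e[\vec{v}']$ evaluate to values $v$ and $v'$, giving $e[\vec{v}] \hookrightarrow{} v$ and $e[\vec{v}'] \hookrightarrow{} v'$ and hence the existence part of the statement. Applying the Preservation Lemma (\Cref{lem:core-preservation}) to each reduction, I get $\vdash{} v : \sigma{}$ and $\vdash{} v' : \sigma{}$ together with the equalities $\llbracket v \rrbracket  = \llbracket e[\vec{v}] \rrbracket $ and $\llbracket v' \rrbracket  = \llbracket e[\vec{v}'] \rrbracket $. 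Chaining these with the identities from the previous step gives $\llbracket v \rrbracket  = \llbracket e \rrbracket (\llbracket \vec{v} \rrbracket )$ and $\llbracket v' \rrbracket  = \llbracket e \rrbracket (\llbracket \vec{v}' \rrbracket )$. Since every typing derivation denotes a non-expansive function $\llbracket e \rrbracket  : \llbracket \Gamma{} \rrbracket  \rightarrow{} \llbracket \sigma{} \rrbracket $, I conclude
\[ d_{\llbracket \sigma{} \rrbracket }(\llbracket v \rrbracket , \llbracket v' \rrbracket ) = d_{\llbracket \sigma{} \rrbracket }(\llbracket e \rrbracket (\llbracket \vec{v} \rrbracket ), \llbracket e \rrbracket (\llbracket \vec{v}' \rrbracket )) \leq{} d_{\llbracket \Gamma{} \rrbracket }(\llbracket \vec{v} \rrbracket , \llbracket \vec{v}' \rrbracket ), \]
which is exactly the desired bound.

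The only ingredient not packaged in the three lemmas is the existence of the values $v$ and $v'$, i.e.\ termination of the evaluation relation, and I expect this to be the main point needing separate justification: the denotational model lives in $\Met$ and does not, on its own, certify operational termination. For this terminating fragment the gap is filled by a standard normalization argument, since Core Fuzz is essentially the simply-typed $\lambda{}$-calculus with products and sums. I would note that this is precisely the step that fails once general recursion and recursive types are added in \Cref{sec:recursive-types}, which is why the harder case requires the metric-CPO machinery rather than this short proof; everything else here is a mechanical composition of semantic equalities sealed by the single, essential use of non-expansiveness.
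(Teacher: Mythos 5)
Your proposal is correct and matches the paper's own proof essentially step for step: substitution to close the terms and compute their denotations, a standard normalization argument for existence of the values, preservation to transfer the denotations to the values, and a single application of non-expansiveness of $\llbracket e \rrbracket$ to obtain the bound. The paper likewise dispatches termination with the remark that ``well-typed terms normalize by standard techniques,'' so your identification of that as the only ingredient outside the three lemmas is exactly right.
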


\begin{proof}
  By \Cref{lem:core-substitution}, both $e[\vec{v}]$ and $e[\vec{v}']$ have type
  ${\sigma}$ under the empty environment, and their denotations are equal to
  $\llbracket e\rrbracket (\llbracket \vec{v}\rrbracket )$ and $\llbracket e\rrbracket (\llbracket \vec{v}'\rrbracket )$. By non-expansiveness of $\llbracket e\rrbracket $,
  \begin{equation}
    \label{eq:core-metric-preservation}
    d_{\llbracket {\sigma}\rrbracket }(\llbracket e\rrbracket (\llbracket \vec{v}\rrbracket ), \llbracket e\rrbracket (\llbracket \vec{v}'\rrbracket )) \leq d_{\llbracket {\Gamma}\rrbracket }(\llbracket \vec{v}\rrbracket , \llbracket \vec{v}'\rrbracket ).
  \end{equation}
  We can show by standard techniques that well-typed terms normalize, and thus
  we find values $v$ and $v'$ such that $e[\vec{v}] \hookrightarrow v$ and $e[\vec{v}'] \hookrightarrow v'$.
  By \Cref{lem:core-preservation}, both $v$ and $v'$ have type ${\sigma}$ under the
  empty environment, and their denotations are equal to those of $e[\vec{v}]$
  and $e[\vec{v}']$. Thus, \labelcref{eq:core-metric-preservation} yields the
  desired result.
\end{proof}

\section{Metric CPOs}
\label{sec:metric-cpos}

While metric spaces suffice for the core fragment of Fuzz studied so far, they
lack the structure needed to model the full language with
non-terminating expressions and recursive types.  To handle these features, we
will use the domain-theoretic notion of \emph{complete partial order}.  We
first review the basic theory of these structures, and then show how to refine
them into \emph{metric CPOs}, which enable sensitivity analysis in the presence
of general recursion.

\subsection{Preliminaries}

Let $(X, \sqsubseteq)$ be a poset (i.e., a set with a reflexive, transitive,
and anti-symmetric relation). We say that $X$ is \emph{complete} (or a
\emph{CPO}, for short) if every ${\omega}$-chain of elements of $X$
\[ x_0 \sqsubseteq x_1 \sqsubseteq x_2 \sqsubseteq \cdots \] has a least upper bound, denoted $\bigsqcup_i x_i$. If
$X$ possesses a least element ${\bot}$, we say that $X$ is \emph{pointed}.

A function $f : X \to Y$ between CPOs is \emph{monotone} if $x \sqsubseteq x'$ implies $f(x)
\sqsubseteq f(x')$; in particular, $f$ maps ${\omega}$-chains to ${\omega}$-chains.  It is
\emph{continuous} if it preserves least upper bounds: $f\left(\bigsqcup_i
  x_i\right) = \bigsqcup_i f(x_i)$.  Continuous functions between CPOs are the
morphisms of a category, $\CPO$.  Note that continuous functions also form a CPO
under the point-wise order $f \sqsubseteq g \iff {\forall}x.\,f(x) \sqsubseteq g(x)$, with least upper bounds of
chains given by
\[ \left(\bigsqcup_i f_i\right)(x) = \bigsqcup_i f_i(x). \]
If the codomain is pointed, then the CPO is pointed as well, with the constant
function that returns ${\bot}$ as the least element.

Continuous functions are useful because they allow us to interpret recursive
definitions as fixed points.

\begin{theorem}[Kleene]
  \label{thm:kleene}
  Let $X$ be a pointed CPO. Every continuous function $f : X \to X$ has a least
  fixed point, given by
  \[ \fix(f) = \bigsqcup_i f^i({\bot}). \] That is, $\fix(f) = f(\fix(f))$, and
  $\fix(f) \sqsubseteq x$ for every $x$ such that $x = f(x)$. The mapping $f \mapsto \fix(f)$
  defines a continuous function $\fix : \CPO(X, X) \to X$.
\end{theorem}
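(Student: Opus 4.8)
The plan is to split the statement into four claims and dispatch the first three by routine inductions, reserving the real work for the continuity of $\fix$. First I would check that $(f^i(\perp{}))_i$ is a genuine $\omega{}$-chain: an induction on $i$ shows $f^i(\perp{}) \sqsubseteq{} f^{i+1}(\perp{})$, with base case $\perp{} \sqsubseteq{} f(\perp{})$ (since $\perp{}$ is least) and inductive step obtained by applying the monotone map $f$ to the previous inequality. Completeness of $X$ then guarantees the lub exists, so $\fix(f)$ is well-defined.

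For the fixed-point equation, I would compute $f(\fix(f)) = f(\bigsqcup_i f^i(\perp{})) = \bigsqcup_i f^{i+1}(\perp{})$ using continuity of $f$, and observe that prepending the least element $f^0(\perp{}) = \perp{}$ leaves the lub unchanged, so this equals $\fix(f)$. Minimality is equally direct: for any $x$ with $x = f(x)$, an induction on $i$ (base $\perp{} \sqsubseteq{} x$, step using monotonicity of $f$ and $f(x) = x$) shows every $f^i(\perp{}) \sqsubseteq{} x$, whence $x$ bounds the chain and $\fix(f) \sqsubseteq{} x$.

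The substantive part is showing $\fix : \CPO(X, X) \rightarrow{} X$ is continuous, where the function space carries the pointwise order with pointwise lubs. Monotonicity comes first: if $f \sqsubseteq{} g$, then an induction on $i$ yields $f^i(\perp{}) \sqsubseteq{} g^i(\perp{})$ (the step chains $f(f^i(\perp{})) \sqsubseteq{} g(f^i(\perp{})) \sqsubseteq{} g(g^i(\perp{}))$, using $f \sqsubseteq{} g$ pointwise and monotonicity of $g$), so $\fix(f) \sqsubseteq{} \fix(g)$. Given a chain $(f_n)_n$ with lub $f = \bigsqcup_n f_n$, this immediately gives the easy inequality $\bigsqcup_n \fix(f_n) \sqsubseteq{} \fix(f)$. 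For the reverse, I would first establish the identity $f^i(\perp{}) = \bigsqcup_n f_n^i(\perp{})$ for each fixed $i$ by induction on $i$: in the step I expand $f^{i+1}(\perp{}) = (\bigsqcup_m f_m)(\bigsqcup_n f_n^i(\perp{}))$, push the outer lub through using pointwise lubs and the inductive hypothesis, push each inner $f_m$ through its argument using continuity, and arrive at the double lub $\bigsqcup_m \bigsqcup_n f_m(f_n^i(\perp{}))$, whose diagonal is $f_k(f_k^i(\perp{})) = f_k^{i+1}(\perp{})$. Collapsing the double lub to the diagonal closes the induction, and applying the same collapse once more gives $\fix(f) = \bigsqcup_i \bigsqcup_n f_n^i(\perp{}) = \bigsqcup_n \bigsqcup_i f_n^i(\perp{}) = \bigsqcup_n \fix(f_n)$.

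The main obstacle is precisely the interchange lemma that both collapses rely on: for a family $(a_{m,n})$ monotone in each argument, the two iterated lubs and the diagonal lub $\bigsqcup_k a_{k,k}$ all coincide. I would prove it by a squeezing argument — each $a_{m,n} \sqsubseteq{} a_{k,k}$ for $k = \max(m,n)$ bounds the iterated lubs above by the diagonal lub, while each diagonal term trivially sits below either iterated lub — together with a check that the diagonal $(a_{k,k})_k$ is itself an $\omega{}$-chain, so that its lub exists in $X$. This lemma, applied twice (once inside the induction on $i$ and once to finish), is what powers the whole continuity argument; the remaining bookkeeping is mechanical.
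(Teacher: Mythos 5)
Your proposal is correct: the well-definedness of the chain, the fixed-point equation via continuity of $f$, minimality by induction, and the continuity of $\fix$ via the lub-interchange/diagonal lemma are all sound, and this is the standard textbook argument. The paper itself states this as a classical result (Kleene's theorem) and gives no proof, so there is nothing to compare against; your treatment, including the care taken with the double-lub interchange that powers the continuity of $\fix : \CPO(X,X) \rightarrow{} X$, is exactly what a complete proof requires.
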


We use CPOs to represent outcomes of a computation. Any set $X$ can be regarded
as a CPO under the trivial discrete order $x \sqsubseteq x' \iff x = x'$.  We use this order
for sets of first-order values, such as $\mathbb{R}$ or $\bool$. If $X$ and $Y$ are CPOs
then so is $X \times Y$, with ordering
\begin{align*}
  (x, y) \sqsubseteq (x', y') \iff x \sqsubseteq x' \wedge y \sqsubseteq y',
\end{align*}
and the disjoint union $X + Y$, with ordering
\begin{align*}
  {\iota}_i(x) \sqsubseteq {\iota}_j(x') & \iff i = j \wedge x \sqsubseteq x'.
\end{align*}
These constructions, with the obvious projections and injections, yield
categorical products and sums in $\CPO$. The singleton set $\termo$ is a
terminal object in this category.  Currying and uncurrying continuous functions
makes $\CPO$ a cartesian-closed category.

As it is typical, we represent computations that may run forever with pointed CPOs of the form
$X_{\bot}$, constructed by adjoining a distinguished least element ${\bot}$ to a CPO
$X$. The copy of $X$ in $X_{\bot}$ models computations that terminate
successfully, whereas ${\bot}$ models divergence. This construction extends
to a functor on $\CPO$ in the obvious way. This functor has the structure of a
monad, where the unit ${\eta} : X \to X_{\bot}$ injects $X$ into $X_{\bot}$, and the
multiplication $X_{{\bot}{\bot}} \to X_{\bot}$ collapses the two bottom elements into a single
one. We write $\CPO_{\bot}$ for the Kleisli category of this monad. Its morphisms are
continuous functions $X \to Y_{\bot}$, and composition of two arrows $g : Y \to Z_{\bot}$ and
$f : X \to Y_{\bot}$ is given by $g^{\dagger}f$, where $g^{\dagger} : Y_{\bot} \to Z_{\bot}$ is the Kleisli lifting
of $g$:
\begin{align*}
  g^{\dagger}({\bot}) & = {\bot} \\
  g^{\dagger}(y) & = g(y) & \text{if $y \neq {\bot}$}.
\end{align*}
Note that there is a natural transformation $t : X_{\bot} \times Y_{\bot} \to (X \times Y)_{\bot}$,
corresponding to forcing a pair of computations:
\begin{equation}
  \label{eq:forcing}
  t(x, y) =
  \begin{cases}
    (x, y) & \text{if $x \neq {\bot}$ and $y \neq {\bot}$} \\
    {\bot} & \text{otherwise.}
  \end{cases}
\end{equation}
This, along with the unit ${\eta}_{\termo} : \termo \to \termo_{\bot}$, makes $(-)_{\bot}$ into a
lax symmetric monoidal functor. We use arrows in $\CPO_{\bot}$ to model programs in a
call-by-value discipline, which take fully computed values as inputs and may
either terminate or run forever.

\subsection{Adding Metrics}

In order to extend the sensitivity analysis of \Cref{sec:metric-spaces} on
CPOs, we seek to define a category of CPOs with metrics that is similar to
$\Met$ in structure. In particular, we would like non-expansive functions to
correspond to objects in this category, and to be closed under least upper
bounds so that they can form a CPO.

Let's think about how this might hold. Suppose that we have an ${\omega}$-chain
$(f_i)_{i \in \mathbb{N}}$ of non-expansive continuous functions $X \to Y$, where both $X$
and $Y$ are CPOs endowed with metrics. To show that the limit $\bigsqcup_i f_i$
is non-expansive, we must show that for any pair of inputs $x$ and $x'$,
\[ d\left(\bigsqcup_i f_i(x), \bigsqcup_i f_i(x')\right) \leq d(x, x'), \] assuming
that $d(f_i(x), f_i(x')) \leq d(x, x')$ for every $i \in \mathbb{N}$. Unfortunately, this does
not hold in general. For instance, let $\mathbb{N}_\infty$ be the CPO of natural numbers with
the usual linear (not flat) order, extended with a greatest element $\infty$. We can
define a metric on the disjoint union $X = \mathbb{N}_\infty + \mathbb{N}_\infty$ by setting
\begin{align*}
  d({\iota}_1(n), {\iota}_2(n)) & =
  \begin{cases}
    1 & \text{if $n = \infty$} \\
    0 & \text{otherwise,}
  \end{cases}
\end{align*}
and by stipulating that all other pairs of distinct points are infinitely
apart. Then, the functions $f_n : X \to X$ ($n \in \mathbb{N}$), defined by
\[ f_n({\iota}_k(m)) \triangleq {\iota}_k(n), \] are non-expansive and form an ${\omega}$-chain, but do not
satisfy the above properties since at the limit we have
\begin{align*}
  d\left(\bigsqcup_n f_n({\iota}_1(0)), \bigsqcup_n f_n({\iota}_2(0))\right)
  & = d({\iota}_1(\infty), {\iota}_2(\infty)) \\
  & = 1 \nleq d({\iota}_1(0), {\iota}_2(0)).
\end{align*}
So, we impose additional restrictions on the metrics we consider.

\begin{definition}
  A \emph{pre-metric CPO} is a CPO $X$ endowed with a metric. We say that $X$ is
  a \emph{metric CPO} if its metric is compatible with the underlying partial
  order, in the following sense. Let $r \in \Rext$, and $(x_i)_{i \in \mathbb{N}}$ and
  $(x'_i)_{i \in \mathbb{N}}$ be two ${\omega}$-chains on $X$, such that $d(x_i, x'_i) \leq r$ for
  all $i$. Then \[ d\left(\bigsqcup_i x_i, \bigsqcup_i x'_i\right) \leq r. \]
  Metric CPOs and continuous, non-expansive functions between them form a
  category, which we call $\MetCPO$.
\end{definition}

All CPO constructions from the last section can be lifted to metric CPOs.\footnote{%
  We will later see in \Cref{sec:domain-equations} how to lift much of the
  structure of $\CPO$ to $\MetCPO$ in a principled way, via a general
  fibrational construction.}  For instance, any discrete CPO with a metric is a
metric CPO. Another simple case is sums.

\begin{lemma}
  If $X$ and $Y$ are metric CPOs, then so are $X + Y$ and $X_{\bot}$, under the
  sum metric of \Cref{sec:metric-spaces}. Furthermore, $X + Y$
  and the canonical injections give a coproduct on $\MetCPO$.
\end{lemma}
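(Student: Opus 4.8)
The plan is to handle the three assertions in turn, in each case reducing to structure already available---the coproduct of $\CPO$ from the preliminaries and the sum-metric computations of \Cref{sec:metric-spaces}---so that the only genuinely new work is checking the compatibility condition of the definition. The crucial observation throughout is that in the coproduct order $\iota_1(x) \sqsubseteq \iota_2(x')$ can never hold, so every $\omega$-chain is confined to a single component, while cross-component distances are $\infty$; together these let me assume that two chains at bounded distance lie in a common component and then invoke the compatibility already assumed for $X$ and $Y$. This reduction is the one mild obstacle; everything else is inherited from the corresponding facts in $\CPO$ and $\Met$.

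For $X + Y$: it is a CPO with the coproduct order recalled in the preliminaries, and the sum metric is a metric by the standard checks of \Cref{sec:metric-spaces}, so $X + Y$ is at least a pre-metric CPO. For compatibility, take chains $(z_i)$ and $(z'_i)$ with $d(z_i, z'_i) \le r$ for all $i$. If $r = \infty$ the conclusion is immediate. If $r < \infty$, then $d(z_0, z'_0) < \infty$ forces $z_0$ and $z'_0$ into the same component; since each chain is confined to one component, both chains live in a common component, say $z_i = \iota_1(w_i)$ and $z'_i = \iota_1(w'_i)$ with $d_X(w_i, w'_i) \le r$. Compatibility of $X$ gives $d_X(\bigsqcup_i w_i, \bigsqcup_i w'_i) \le r$, and since $\bigsqcup_i z_i = \iota_1(\bigsqcup_i w_i)$ and likewise for $z'$, this transports back to the desired bound on $X + Y$.

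For $X_\perp$: it is a pointed CPO by the lifting construction of the preliminaries, and the sum metric places $\perp$ infinitely far from every point of $X$, so it is a pre-metric CPO. Compatibility runs along the same lines: for $r = \infty$ there is nothing to check, and for $r < \infty$ the hypothesis $d(z_i, z'_i) \le r < \infty$ forces $z_i = \perp$ exactly when $z'_i = \perp$, so both chains leave $\perp$ at the same index $N$. Discarding the initial stretch of $\perp$'s, which does not affect least upper bounds, reduces the claim to compatibility of $X$ applied to the tails lying in $X$.

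For the coproduct property, the injections $\iota_1, \iota_2$ are non-expansive by construction and continuous as the $\CPO$-injections, hence $\MetCPO$-morphisms. Given non-expansive continuous maps $f : X \to Z$ and $g : Y \to Z$ into a metric CPO $Z$, the case-analysis function $[f, g]$ is already the unique \emph{continuous} mediating map by the universal property in $\CPO$, so uniqueness comes for free and it only remains to verify that $[f, g]$ is non-expansive. This repeats the $\Met$ argument of \Cref{sec:metric-spaces}: on each component $[f, g]$ inherits non-expansiveness from $f$ or $g$, and across components the source distance is $\infty$, so the required inequality holds trivially.
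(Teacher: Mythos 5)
Your proof is correct; the paper states this lemma without proof (introducing it as ``another simple case''), and your argument is precisely the routine verification it has in mind: $\omega$-chains in $X+Y$ are confined to one component, a finite bound on distances forces the two chains into a common component (or, for $X_\perp$, forces them to leave $\perp$ at the same index), and the coproduct structure is inherited from $\CPO$ with non-expansiveness of $[f,g]$ checked componentwise and trivially across components. The only point you gloss over is the degenerate case where both chains in $X_\perp$ are constantly $\perp$, which is immediate since $d(\perp,\perp)=0$.
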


Since ${\bot}$ is infinitely apart from every other point, any morphism $f
: X \to Y_{\bot}$ has the same termination behavior for any pair of inputs that are at
finite distance. Just as in the previous section, we can extend $(-)_{\bot}$ to a
monad on $\MetCPO$, yielding a corresponding Kleisli category $\MetCPO_{\bot}$
representing potentially non-terminating computations.

We can also lift the cartesian product on $\Met$ to $\MetCPO$.

\begin{lemma}
  \label{lem:metcpo-cartesian}
  Let $X$ and $Y$ be metric CPOs. The product metric $X \with Y$, with the
  standard CPO structure over $X \times Y$, is a metric CPO. The projections ${\pi}_1 :
  X \with Y \to X$ and ${\pi}_2 : X \with Y \to Y$ are non-expansive continuous
  functions, and make $X \with Y$ a cartesian product in $\MetCPO$.
\end{lemma}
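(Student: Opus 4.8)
The plan is to decompose the statement into three obligations and dispatch each one, leaning on facts already established for $\Met$ and $\CPO$ and doing genuinely new work only for the compatibility condition. Recall from \Cref{sec:metric-spaces} that the max metric makes $X \with Y$ a cartesian product in $\Met$, and that $X \times Y$ with the componentwise order is a cartesian product in $\CPO$, with least upper bounds of $\omega{}$-chains computed componentwise. Thus the three things to verify are: (a) the max metric is compatible with the componentwise CPO structure, so that $X \with Y$ is a bona fide metric CPO; (b) the projections $\pi{}_1, \pi{}_2$ are simultaneously continuous and non-expansive; and (c) the pairing of two $\MetCPO$ morphisms is again a $\MetCPO$ morphism.

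For (a), the key observation is that lubs in $X \times Y$ are computed coordinatewise. Given two $\omega{}$-chains $(p_i)$ and $(p'_i)$ in $X \with Y$ with $d_{X \with Y}(p_i, p'_i) \le r$ for all $i$, write $p_i = (x_i, y_i)$ and $p'_i = (x'_i, y'_i)$. Since the order is componentwise, $(x_i), (x'_i)$ are $\omega{}$-chains in $X$ and $(y_i), (y'_i)$ are $\omega{}$-chains in $Y$, and the hypothesis unfolds to $d_X(x_i, x'_i) \le r$ and $d_Y(y_i, y'_i) \le r$ for every $i$. Applying the compatibility condition of the metric CPOs $X$ and $Y$ separately gives
\[ d_X\Bigl(\bigsqcup_i x_i, \bigsqcup_i x'_i\Bigr) \le r \quad\text{and}\quad d_Y\Bigl(\bigsqcup_i y_i, \bigsqcup_i y'_i\Bigr) \le r, \]
and since $\bigsqcup_i p_i = (\bigsqcup_i x_i, \bigsqcup_i y_i)$, taking the max yields $d_{X \with Y}(\bigsqcup_i p_i, \bigsqcup_i p'_i) \le r$, as required.

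For (b) and (c), continuity of the projections and of the pairing, together with the universal property at the level of underlying CPOs, are exactly the content of the cartesian product in $\CPO$, so nothing new is needed there; we only check non-expansiveness on top. The projections are non-expansive since $d_X(\pi{}_1(a), \pi{}_1(b)) \le \max(d_X(\pi{}_1(a), \pi{}_1(b)), d_Y(\pi{}_2(a), \pi{}_2(b))) = d_{X \with Y}(a, b)$, and symmetrically for $\pi{}_2$. Given continuous non-expansive $f : Z \to X$ and $g : Z \to Y$, the pairing $\langle f, g\rangle$ is non-expansive because
\[ d_{X \with Y}(\langle f, g\rangle(z), \langle f, g\rangle(z')) = \max(d_X(f(z), f(z')), d_Y(g(z), g(z'))) \le d_Z(z, z'), \]
using that both arguments of the max are bounded by $d_Z(z, z')$. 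Uniqueness is inherited from the $\CPO$ product, since a $\MetCPO$ morphism is determined by its underlying continuous function. The only step requiring care — and the crux of the lemma — is the compatibility check in (a), where it is essential that lubs are taken coordinatewise so that the one-dimensional compatibility hypotheses on $X$ and $Y$ can be invoked independently; everything else is a routine combination of the separately-known metric-space and domain-theoretic structure.
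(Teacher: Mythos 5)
Your proof is correct, and it fills in an argument that the paper itself omits: Lemma~\ref{lem:metcpo-cartesian} is stated without proof, being treated as one of the routine liftings of $\CPO$ and $\Met$ structure. Your decomposition is exactly the expected one, and you correctly identify the one point of substance --- that $\max(a,b) \leq r$ splits into coordinatewise bounds, so the compatibility axiom for $X$ and $Y$ can be invoked directly, in contrast with the tensor product of Lemma~\ref{lem:metcpo-tensor}, where the paper must instead route through the $\liminf$ characterization of Lemma~\ref{lem:metcpo-liminf}.
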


Dealing with the tensor product and its additive metric requires more care.  The
following characterization of metric CPOs comes in handy.

\begin{lemma}
  \label{lem:metcpo-liminf}
  A pre-metric CPO $X$ is a metric CPO if and only if for every pair of
  ${\omega}$-chains on $X$, $(x_i)_{i \in \mathbb{N}}$ and $(x'_i)_{i \in \mathbb{N}}$, we have
  \[ d\left(\bigsqcup_i x_i, \bigsqcup_i x'_i\right) \leq \liminf_i d(x_i, x'_i). \]
\end{lemma}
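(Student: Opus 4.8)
The plan is to prove the two implications separately, with the forward direction (from the defining compatibility condition of a metric CPO to the $\liminf$ bound) requiring the bulk of the work.

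First I would dispatch the easy direction. Suppose the $\liminf$ inequality holds, and let $(x_i)$, $(x'_i)$ be $\omega$-chains with $d(x_i, x'_i) \leq r$ for every $i$. Since every term of the sequence $d(x_i, x'_i)$ is bounded above by $r$, its $\liminf$ is also at most $r$; combining this with the hypothesis gives $d(\bigsqcup_i x_i, \bigsqcup_i x'_i) \leq \liminf_i d(x_i, x'_i) \leq r$, which is exactly the compatibility condition defining a metric CPO.

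For the converse, assume $X$ is a metric CPO and fix two chains $(x_i)$, $(x'_i)$. Write $L = \liminf_i d(x_i, x'_i)$; if $L = \infty$ there is nothing to prove, so I may assume $L$ is finite and fix $\varepsilon > 0$. The crucial observation is that a cofinal subsequence of an $\omega$-chain has the same least upper bound as the whole chain: if $(i_n)$ is strictly increasing, then $\bigsqcup_n x_{i_n} = \bigsqcup_i x_i$, because each $x_{i_n}$ lies below $\bigsqcup_i x_i$ while every $x_i$ lies below some $x_{i_n}$. The definition of $\liminf$ guarantees that there are infinitely many indices $i$ with $d(x_i, x'_i) < L + \varepsilon$, so I can extract a strictly increasing subsequence $(i_n)$ along which $d(x_{i_n}, x'_{i_n}) \leq L + \varepsilon$ for all $n$. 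Applying the metric CPO condition to the subchains $(x_{i_n})_n$ and $(x'_{i_n})_n$ with the uniform bound $r = L + \varepsilon$ yields $d(\bigsqcup_n x_{i_n}, \bigsqcup_n x'_{i_n}) \leq L + \varepsilon$, and by cofinality the two least upper bounds coincide with $\bigsqcup_i x_i$ and $\bigsqcup_i x'_i$. Letting $\varepsilon \to 0$ then gives the desired bound $d(\bigsqcup_i x_i, \bigsqcup_i x'_i) \leq L$.

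The main obstacle — and the reason the naive approach fails — is that the compatibility condition only supplies a bound when the distances $d(x_i, x'_i)$ are controlled \emph{uniformly} over all $i$, whereas $\liminf$ only controls them along some subsequence. Bounding by the uniform supremum $\sup_{i \geq k} d(x_i, x'_i)$ and taking $k \to \infty$ would merely recover a $\limsup$ bound, which is too weak. The subsequence-extraction step, combined with the stability of least upper bounds under passage to cofinal subsequences, is precisely what lets us trade the uniform hypothesis of the metric CPO axiom for the weaker $\liminf$ conclusion. I would also double-check that the re-indexed sequences are genuine $\omega$-chains (they are, as subsequences of monotone sequences), so that the metric CPO axiom applies verbatim.
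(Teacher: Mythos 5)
Your proof is correct and follows essentially the same route as the paper's: the easy direction is identical, and for the hard direction the paper likewise fixes an arbitrary $r$ strictly above the $\liminf$, extracts an infinite set of indices where the distance is at most $r$, uses that the cofinal subchains have the same least upper bounds, applies the metric CPO axiom, and lets $r$ tend to the $\liminf$. Your added remarks (handling $L=\infty$ and verifying cofinality preserves suprema) are just slightly more explicit versions of steps the paper leaves implicit.
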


\begin{proof}
  ($\implies$) Consider an arbitrary $r > \liminf_i d(x_i, x'_i)$. There exists an
  infinite set $I \subseteq \mathbb{N}$ such that \[ {\forall} i \in I.\,d(x_i, x'_i) \leq r. \] Since $I$ is
  infinite, we get ${\omega}$-chains $(x_i)_{i \in I}$ and $(x'_i)_{i \in I}$, and because
  $X$ is a metric CPO, we find
  \[ d\left(\bigsqcup_{i \in \mathbb{N}} x_i, \bigsqcup_{i \in \mathbb{N}} x'_i\right) =
  d\left(\bigsqcup_{i \in I} x_i, \bigsqcup_{i \in I} x'_i\right) \leq r. \] Since $r$
  can be arbitrarily close to $\liminf_i d(x_i, x'_i)$, we conclude
  \[ d\left(\bigsqcup_{i \in \mathbb{N}} x_i, \bigsqcup_{i \in \mathbb{N}} x'_i\right) \leq \liminf_{i \in
    \mathbb{N}} d(x_i, x'_i). \]

  ($\impliedby$) Suppose that \[ d\left(\bigsqcup_{i \in \mathbb{N}} x_i, \bigsqcup_{i \in \mathbb{N}}
    x'_i\right) \leq \liminf_{i \in \mathbb{N}} d(x_i, x'_i). \] Suppose furthermore that
  there exists $r$ such that ${\forall} i.\,d(x_i, x'_i) \leq r$. This implies $\liminf_i
  d(x_i, x'_i) \leq r$, from which we conclude.
\end{proof}

\begin{lemma}
  \label{lem:metcpo-tensor}
  Let $X$ and $Y$ be metric CPOs. The space $X \otimes Y$ is a metric CPO over
  the standard product CPO.
\end{lemma}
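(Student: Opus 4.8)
The plan is to verify the compatibility condition of the metric CPO definition for the additive metric $d_{X \otimes Y}((x,y),(x',y')) = d_X(x,x') + d_Y(y,y')$ sitting over the standard product CPO, whose least upper bounds are computed componentwise, $\bigsqcup_i (x_i, y_i) = (\bigsqcup_i x_i, \bigsqcup_i y_i)$. That $X \otimes Y$ is a \emph{pre}-metric CPO is immediate: the additive metric satisfies the metric axioms (\Cref{sec:metric-spaces}), and the product order is a CPO, as recalled above. So the only real content is compatibility.

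The naive approach—given chains $(z_i)$ and $(z'_i)$ with $d_{X\otimes Y}(z_i, z'_i) \leq r$, bound each coordinate separately using that $X$ and $Y$ are metric CPOs—fails, because the per-index bound $d_X(x_i,x'_i) + d_Y(y_i,y'_i) \leq r$ gives no fixed bound on either summand individually: the way the budget $r$ is split between the two coordinates may vary with $i$. This is precisely where the $\liminf$ characterization of \Cref{lem:metcpo-liminf} is indispensable, and explains the remark preceding the statement that it ``comes in handy.''

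Accordingly, I will invoke the $(\Leftarrow)$ direction of \Cref{lem:metcpo-liminf}: it suffices to prove, for arbitrary $\omega$-chains $(z_i) = (x_i, y_i)$ and $(z'_i) = (x'_i, y'_i)$ on $X \otimes Y$, the inequality $d_{X\otimes Y}(\bigsqcup_i z_i, \bigsqcup_i z'_i) \leq \liminf_i d_{X\otimes Y}(z_i, z'_i)$. Since least upper bounds are computed componentwise, the projections of $(z_i)$ and $(z'_i)$ are themselves $\omega$-chains and the left-hand side equals $d_X(\bigsqcup_i x_i, \bigsqcup_i x'_i) + d_Y(\bigsqcup_i y_i, \bigsqcup_i y'_i)$. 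Applying the $(\Rightarrow)$ direction of \Cref{lem:metcpo-liminf} to the metric CPOs $X$ and $Y$ bounds these two summands by $\liminf_i d_X(x_i, x'_i)$ and $\liminf_i d_Y(y_i, y'_i)$ respectively.

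The remaining step—the one genuine inequality, and the main obstacle—is the \emph{superadditivity} of $\liminf$ on $\Rext$, namely $\liminf_i a_i + \liminf_i b_i \leq \liminf_i (a_i + b_i)$. This holds because $\inf_{n \geq N}(a_n + b_n) \geq \inf_{n \geq N} a_n + \inf_{n \geq N} b_n$, and the two inner infima are non-decreasing in $N$, so the supremum over $N$ of their sum coincides with the sum of their suprema. Chaining the estimates gives $d_{X\otimes Y}(\bigsqcup_i z_i, \bigsqcup_i z'_i) \leq \liminf_i d_X(x_i, x'_i) + \liminf_i d_Y(y_i, y'_i) \leq \liminf_i\bigl(d_X(x_i, x'_i) + d_Y(y_i, y'_i)\bigr) = \liminf_i d_{X\otimes Y}(z_i, z'_i)$, which is exactly the hypothesis of \Cref{lem:metcpo-liminf}. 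The one subtlety requiring care is that all quantities live in $\Rext$, where addition is monotone and suprema of monotone sequences are well-behaved even when some values equal $\infty{}$, so no $\infty{} - \infty{}$ indeterminacy can arise.
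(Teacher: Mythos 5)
Your proof is correct and follows essentially the same route as the paper's: reduce compatibility to the $\liminf$ characterization of \Cref{lem:metcpo-liminf}, compute least upper bounds componentwise, and conclude by superadditivity of $\liminf$ over $\Rext$. The extra justification you give for superadditivity and for why the naive per-index splitting fails is sound but not needed beyond what the paper states.
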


\begin{proof}
  We have to show that the above metric is compatible with the order on $X \times
  Y$. By~\Cref{lem:metcpo-liminf}, it suffices to show that for every pair of
  ${\omega}$-chains $(p_i)_{i \in \mathbb{N}}$ and $(p'_i)_{i \in \mathbb{N}}$,
  \[ d\left(\bigsqcup_i p_i, \bigsqcup_i p'_i\right) \leq \liminf_i d(p_i, p'_i). \]
  By definition, this is equivalent to
  \begin{align*}
    & d\left(\bigsqcup_i x_i, \bigsqcup_i x'_i\right) + d\left(\bigsqcup_i y_i,
    \bigsqcup_i y'_i\right) \\ & \leq \liminf_i \left(d(x_i, x'_i) + d(y_i,
    y'_i)\right),
  \end{align*}
  where $p_i = (x_i, y_i)$ and $p'_i = (x'_i, y'_i)$. Since $X$ and $Y$ are
  metric CPOs, it suffices to show that
  \begin{align*}
    & \liminf_i d(x_i, x'_i) + \liminf_i d(y_i, y'_i) \\
    & \leq \liminf_i \left(d(x_i, x'_i) + d(y_i, y'_i)\right),
  \end{align*}
  which always holds.
\end{proof}

As before, this metric yields a symmetric monoidal category $(\MetCPO, \otimes,
\termo)$ whose tensor unit is the terminal object.  Note that the forcing
natural transformation $t : X_{\bot} \times Y_{\bot} \to (X \times Y)_{\bot}$ of \labelcref{eq:forcing} is
compatible with this metric, as well as the metric from
\Cref{lem:metcpo-cartesian}:
\begin{align*}
  t : X_{\bot} \otimes Y_{\bot} \to (X \otimes Y)_{\bot} \\
  t : X_{\bot} \with Y_{\bot} \to (X \with Y)_{\bot}.
\end{align*}

Morphisms of metric CPOs form a metric CPO, as shown in the next result. As
expected, currying and function application have a similar than in $\Met$.

\begin{lemma}
  \label{lem:metcpo-exp}
  Let $X$ and $Y$ be metric CPOs. The set of morphisms $\MetCPO(X, Y)$ forms a
  metric CPO, inheriting its partial order from $\CPO(X, Y)$ and its metric
  structure from $\Met(X, Y)$.  The cartesian-closed structure of $\CPO$ induces
  an adjunction in $\MetCPO$: \[ (-) \otimes X \dashv \MetCPO(X, -), \] making it a
  symmetric monoidal closed category.
\end{lemma}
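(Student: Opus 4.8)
The plan is to exploit the fact that a morphism of $\MetCPO$ is precisely a function that is simultaneously a morphism of $\CPO$ (continuous) and a morphism of $\Met$ (non-expansive). Since $\CPO$ is cartesian closed and $\Met$ is symmetric monoidal closed, most of the required structure is already available on the underlying set-level functions; the work is to check that the two structures are compatible and that the metric CPO axiom survives. The argument splits into two parts: first, that $\MetCPO(X, Y)$ is a metric CPO; second, that evaluation and currying restrict to $\MetCPO$ and assemble into the claimed adjunction.

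For the first part, the order and least upper bounds are inherited from $\CPO(X, Y)$ and the metric axioms from $\Met(X, Y)$, so the only nontrivial points are (i) that $\MetCPO(X, Y)$ is closed under pointwise lubs of $\omega{}$-chains, and (ii) that it satisfies the metric CPO compatibility condition. For (i), given an $\omega{}$-chain $(f_i)$ of non-expansive continuous functions and inputs $x, x'$, the sequences $(f_i(x))$ and $(f_i(x'))$ are $\omega{}$-chains in $Y$ with $d_Y(f_i(x), f_i(x')) \leq{} d_X(x, x')$ for every $i$; since $Y$ is a metric CPO, $d_Y(\bigsqcup_i f_i(x), \bigsqcup_i f_i(x')) \leq{} d_X(x, x')$, so the pointwise lub $\bigsqcup_i f_i$ is again non-expansive. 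This is exactly where the compatibility condition on $Y$ rescues us from the failure exhibited by the earlier counterexample. For (ii), I would take two chains $(f_i), (f'_i)$ with $d(f_i, f'_i) \leq{} r$, fix an input $x$, observe that $d_Y(f_i(x), f'_i(x)) \leq{} d(f_i, f'_i) \leq{} r$ for all $i$, apply the metric CPO condition on $Y$ to get $d_Y(\bigsqcup_i f_i(x), \bigsqcup_i f'_i(x)) \leq{} r$, and finally take the supremum over $x$ to conclude $d(\bigsqcup_i f_i, \bigsqcup_i f'_i) \leq{} r$.

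For the adjunction, I would invoke \Cref{lem:metcpo-tensor} for the monoidal product and verify that the $\CPO$ unit and counit are non-expansive. The counit $\epsilon{} : \MetCPO(X, Y) \otimes{} X \rightarrow{} Y$, $\epsilon{}(f, x) = f(x)$, is continuous as a $\CPO$-map; non-expansiveness follows from the triangle inequality, $d_Y(f(x), g(x')) \leq{} d_Y(f(x), g(x)) + d_Y(g(x), g(x')) \leq{} d(f, g) + d_X(x, x')$, which is exactly the additive tensor metric on $\MetCPO(X, Y) \otimes{} X$. Dually, given a $\MetCPO$-morphism $h : Z \otimes{} X \rightarrow{} Y$, its curry $\lambda{}h$ lands in $\MetCPO(X, Y)$ because fixing $z$ and using $d((z,x),(z,x')) = d_X(x, x')$ makes each $(\lambda{}h)(z)$ non-expansive, and $\lambda{}h$ is itself non-expansive because $d((z,x),(z',x)) = d_Z(z, z')$ bounds each coordinate uniformly in $x$. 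Since curry and uncurry are the same underlying functions that witness the $\CPO$ and $\Met$ adjunctions, the bijection $\MetCPO(Z \otimes{} X, Y) \cong{} \MetCPO(Z, \MetCPO(X, Y))$ and its naturality follow immediately, yielding the symmetric monoidal closed structure.

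The main obstacle is step (i): closure of $\MetCPO(X, Y)$ under lubs. The earlier example shows this can fail for arbitrary pre-metric CPOs, and the argument pinpoints the metric CPO condition on the codomain $Y$ as exactly what is needed. Once closure and compatibility are in hand, the adjunction is essentially bookkeeping: everything reduces to checking that the already-established curry and uncurry bijections of $\CPO$ and $\Met$ preserve the opposite structure, which the distance computations above confirm.
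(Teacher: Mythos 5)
Your proof is correct and follows essentially the same route as the paper: both arguments reduce closure under lubs and the metric CPO axiom for $\MetCPO(X,Y)$ to the compatibility condition on the codomain $Y$ (the paper merely packages your steps (i) and (ii) into a single auxiliary estimate $d(f(x_1),g(x_2)) \leq r + d(x_1,x_2)$ and specializes it twice), and both treat the monoidal closure as a routine check that the $\CPO$ curry/uncurry maps are non-expansive. Your explicit verification of the counit and currying is a welcome expansion of what the paper dismisses as ``standard.''
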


\begin{proof}
  First, we must show that $\MetCPO(X, Y)$ is a pre-metric CPO, for which it
  suffices to show that it is closed under least upper bounds. We can then
  conclude by showing that this structure satisfies the metric CPO
  axiom. Showing that the monoidal structure is closed is standard.

  We prove both properties with the following auxiliary result. Consider two
  chains $(f_i)_{i \in \mathbb{N}}$ and $(g_i)_{i \in \mathbb{N}}$ in $\MetCPO(X, Y)$, and two
  elements $x_1, x_2 \in X$. Pose $f = \bigsqcup_i f_i$ and $g = \bigsqcup_i
  g_i$. Suppose that there exists $r$ such that $d(f_i, g_i) \leq r$ for every $i \in
  \mathbb{N}$. Since each $f_i$ and $g_i$ is non-expansive, we get $d(f_i(x_1), g_i(x_2)) \leq
  r + d(x_1, x_2)$ for every $i \in \mathbb{N}$. We then conclude
  \[ d(f(x_1), g(x_2)) = d\left(\bigsqcup_i f_i(x_1), \bigsqcup_i g_i(x_2)\right) \leq r + d(x_1,
  x_2). \]

  Now, we can see that $\MetCPO(X, Y)$ is closed under least upper bounds by
  taking $g_i = f_i$ and $r = 0$, since then $d(f(x_1), f(x_2)) \leq d(x_1,
  x_2)$. Furthermore, by setting $x_1$ and $x_2$ to the same value, we find
  $d(f(x_1), g(x_1)) \leq r + 0$ and, since $x_1$ is arbitrary, we conclude $d(f, g) \leq
  r$ and that $\MetCPO(X, Y)$ is indeed a metric CPO.
\end{proof}

Metric CPOs also support scaling.

\begin{lemma}
  \label{lem:scaling}
  Let $X$ be a metric CPO and $r \in \Rext$. Then $r \cdot X$ is also a metric
  CPO, under the same order as $X$.
\end{lemma}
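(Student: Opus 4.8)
The plan is to reduce the compatibility of the scaled metric to that of $X$ itself. The carrier and order of $r \cdot X$ are literally those of $X$, so nothing needs to be rechecked about the CPO structure; and that $d_{r \cdot X}(a,b) = r \cdot d_X(a,b)$ is still an extended pseudometric is immediate from the fact, recorded in \Cref{sec:metric-spaces}, that $r \cdot (-)$ is monotone and distributes over addition, so the triangle inequality survives scaling. Hence $r \cdot X$ is at least a pre-metric CPO, and the only real content is the metric CPO axiom, i.e.\ that balls are stable under suprema of $\omega$-chains.

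First I would fix two $\omega$-chains $(x_i)$ and $(x'_i)$ together with a bound $s \in \Rext$ satisfying $d_{r \cdot X}(x_i, x'_i) = r \cdot d_X(x_i, x'_i) \leq s$ for all $i$, aiming to prove $r \cdot d_X(\bigsqcup_i x_i, \bigsqcup_i x'_i) \leq s$. The natural route is to transfer this hypothesis to a bound on the unscaled distances $d_X(x_i, x'_i)$, invoke the metric CPO axiom of $X$, and then rescale the conclusion. For $0 < r < \infty$ this is clean: from $r \cdot d_X(x_i,x'_i) \leq s$ we obtain $d_X(x_i,x'_i) \leq s/r$ (with $s/r = \infty$ when $s = \infty$), the axiom for $X$ yields $d_X(\bigsqcup_i x_i, \bigsqcup_i x'_i) \leq s/r$, and multiplying back by $r$ recovers the bound $s$. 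The case $r = \infty$ is also easy: if $s = \infty$ there is nothing to prove, and otherwise $\infty \cdot d_X(x_i, x'_i) \leq s < \infty$ forces $d_X(x_i, x'_i) = 0$ for every $i$, whence $d_X(\bigsqcup_i x_i, \bigsqcup_i x'_i) = 0$ and $\infty \cdot 0 = 0 \leq s$. Equivalently, one could run the whole argument through the $\liminf$ characterization of \Cref{lem:metcpo-liminf}, reducing matters to the numeric inequality $r \cdot \liminf_i a_i \leq \liminf_i (r \cdot a_i)$ with $a_i = d_X(x_i, x'_i)$.

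The step I expect to be the main obstacle is the case $r = 0$, which is genuinely delicate precisely because of the convention $0 \cdot \infty = \infty$ fixed in \Cref{sec:metric-spaces}. Here a finite bound $s$ on the scaled distances only guarantees that each $d_X(x_i, x'_i)$ is finite, \emph{not} that they are uniformly bounded, so the metric CPO axiom of $X$ cannot be applied directly; in $\liminf$ form, the sticking point is that $t \mapsto 0 \cdot t$ is discontinuous at $\infty$, so $0 \cdot \liminf_i a_i \leq \liminf_i (0 \cdot a_i)$ need not hold when the $a_i$ are finite but unbounded. I would therefore isolate $r = 0$ and argue it by hand, inspecting exactly which configurations of the limits $\bigsqcup_i x_i$ and $\bigsqcup_i x'_i$ can occur. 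This is the one place where the non-commutativity of multiplication on $\Rext$ interacts nontrivially with completeness, and it is where I would concentrate the effort and scrutinize the argument most carefully.
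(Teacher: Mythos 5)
For $0 < r < \infty$ and for $r = \infty$ your argument is exactly the paper's: divide the bound by $r$ and invoke the metric CPO axiom of $X$, respectively observe that a finite bound forces $d_X(x_i,x'_i)=0$. The genuine gap is that you never actually prove the case $r=0$: you correctly diagnose why the reduction fails there (a finite bound on $0\cdot d_X(x_i,x'_i)$ gives only finiteness, not uniform boundedness, of the $d_X(x_i,x'_i)$, and $t\mapsto 0\cdot t$ is discontinuous at $\infty$), but then you defer to an unspecified inspection of ``which configurations of the limits can occur.'' No such inspection can succeed, because under the paper's stated conventions this case is false. Take $X=\mathbb{N}_\infty$, the CPO of naturals with a top element and the metric $d(n,m)=|n-m|$ (so $d(n,\infty)=\infty$ for finite $n$). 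This is a metric CPO: if $|n_i-m_i|\le s<\infty$ for all $i$, then either both chains stabilize at finite values or both have supremum $\infty$, and in either case the bound passes to the limit. Yet the constant chain $x_i=0$ and the chain $x'_i=i$ satisfy $0\cdot d(x_i,x'_i)=0$ for every $i$, while $0\cdot d\left(\bigsqcup_i x_i,\bigsqcup_i x'_i\right)=0\cdot\infty=\infty$. So $0\cdot X$ is not a metric CPO.

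For what it is worth, your suspicion is better calibrated than the paper's own proof, which dismisses $r=0$ as ``trivial''; that step silently computes $0\cdot d=0$, contradicting the convention $0\cdot\infty=\infty$ that the paper adopts elsewhere precisely so that $r\cdot(X+Y)=r\cdot X+r\cdot Y$ holds at $r=0$. A correct version of the lemma must either restrict to $r>0$, change the metric assigned to $0\cdot X$, or strengthen the definition of metric CPO (e.g., require that chains at pointwise finite distance have suprema at finite distance). As submitted, though, your proof has a hole exactly where you predicted one, and flagging a case as delicate is not the same as closing it.
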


\begin{proof}
  We just need to show that the new metric is compatible with the CPO
  order. Suppose that we are given two chains on $X$, $(x_i)$ and $(x'_i)$, and
  that there is $r' \in \Rext$ such that $r \cdot d(x_i, x'_i) \leq r'$ for every
  $i$; we must show that $r \cdot d\left(\bigsqcup_i x_i, \bigsqcup_i x'_i\right) \leq
  r'$. If $r = 0$ or $r' = \infty$, the inequality becomes trivial and we're done. If
  $r \notin \{ 0, \infty \}$, then $d(x_i, x'_i) \leq r'/r$ for every $i$, hence
  $d\left(\bigsqcup_i x_i, \bigsqcup_i x'_i\right) \leq r'/r$ and we're done. The
  remaining case is when $r = \infty$ and $r' < \infty$. It must be the case that $d(x_i,
  x'_i) = 0$ for every $i$, so $d\left(\bigsqcup_i x_i, \bigsqcup_i x'_i\right)
  = 0$ and we are done.
\end{proof}

All the scaling identities of \Cref{sec:metric-spaces} remain valid, with the
addition of
\begin{align*}
  r \cdot X_{\bot} & = (r \cdot X)_{\bot}.
\end{align*}
Similarly to \Cref{sec:metric-spaces}, we have inclusions
\begin{align*}
  \MetCPO(X, Y) & \subseteq \MetCPO(r \cdot X, s \cdot Y) \\
  \MetCPO_{\bot}(X, Y) & \subseteq \MetCPO_{\bot}(r \cdot X, s \cdot Y)
\end{align*}
whenever $r \geq s$. Thus, scaling extends once again to a functor on both
categories.

\jh{This might give the mistaken impression that we use this combinator to
  interpret recursive functions... Maybe move this down to the discussion later?}
Finally, we can interpret recursion by adding sensitivity information to the
Kleene fixed-point combinator of \Cref{thm:kleene}:

\begin{lemma}
  \label{lem:metcpo-kleene}
  Let $X$ be a pointed metric CPO, and $r \in \Rext$. The $\fix$ combinator is
  a morphism $s \cdot \MetCPO(r \cdot X, X) \to X$, where
  \begin{align*}
    s =
    \begin{cases}
      \frac{1}{1 - r} & \text{if $r < 1$} \\
      \infty & \text{otherwise.}
    \end{cases}
  \end{align*}
\end{lemma}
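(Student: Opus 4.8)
The plan is to separate the two conditions for $\fix$ to be a morphism of $\MetCPO$: continuity and non-expansiveness. Continuity comes essentially for free from \Cref{thm:kleene}. Indeed, by \Cref{lem:scaling} scaling leaves the underlying order unchanged, so $s \cdot{} \MetCPO(r \cdot{} X, X)$ and $\MetCPO(r \cdot{} X, X)$ carry the same CPO structure, which is in turn a sub-CPO of $\CPO(X, X)$ (its elements being exactly the $r$-sensitive continuous endofunctions, ordered pointwise, closed under lubs by \Cref{lem:metcpo-exp}). Since $\fix : \CPO(X,X) \rightarrow{} X$ is continuous, its restriction remains continuous. So the whole content of the lemma is the non-expansiveness bound
\[ d(\fix(f), \fix(g)) \leq{} s \cdot{} \delta, \qquad \delta \triangleq{} d(f, g) = \sup_{x} d(f(x), g(x)), \]
for any $f, g \in{} \MetCPO(r \cdot{} X, X)$.

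To establish this, I would track the distance between the Kleene approximants. Write $d_i \triangleq{} d(f^i(\perp{}), g^i(\perp{}))$, so that $d_0 = 0$. The key estimate is the recurrence $d_{i+1} \leq{} r \cdot{} d_i + \delta$, obtained by the triangle inequality through the intermediate point $f(g^i(\perp{}))$: the term $d(f(f^i(\perp{})), f(g^i(\perp{})))$ is at most $r \cdot{} d_i$ because $f$ is non-expansive from $r \cdot{} X$ (i.e., $r$-sensitive), while $d(f(g^i(\perp{})), g(g^i(\perp{}))) \leq{} \delta$ by definition of the supremum metric. Unfolding this geometric recurrence from $d_0 = 0$ gives $d_i \leq{} \delta \sum_{k=0}^{i-1} r^k$, so that when $r < 1$ we obtain the \emph{uniform} bound $d_i \leq{} \delta / (1 - r) = s\delta$ for every $i$.

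The final step passes to the limit. Since $\fix(f) = \bigsqcup_i f^i(\perp{})$ and $\fix(g) = \bigsqcup_i g^i(\perp{})$ and the two families of approximants are $\omega{}$-chains, the metric CPO compatibility axiom applied with the uniform bound $s\delta$ yields $d(\fix(f), \fix(g)) \leq{} s\delta$, finishing the case $r < 1$ (alternatively, \Cref{lem:metcpo-liminf} gives the same conclusion via $\liminf_i d_i \leq{} \delta/(1-r)$). It then remains to dispatch $r \geq{} 1$, where $s = \infty{}$: if $\delta > 0$ then $s\delta = \infty{}$ and the inequality is vacuous, whereas if $\delta = 0$ the recurrence collapses to $d_{i+1} \leq{} r \cdot{} d_i$ with $d_0 = 0$, forcing $d_i = 0$ for all $i$ and hence $d(\fix(f), \fix(g)) \leq{} 0$ by compatibility.

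I expect the main obstacle to lie not in the geometric estimate itself but in the careful bookkeeping of the extended arithmetic of $\Rext$ in the limiting and degenerate cases---in particular, reconciling the target value $s\delta$ with the non-commutative convention $\infty{} \cdot{} 0 = 0$, which is exactly what makes the $r \geq{} 1$, $\delta = 0$ case come out to the required $d(\fix(f),\fix(g)) = 0$ rather than a merely vacuous bound. One should also verify that a genuinely uniform bound (and not only a $\liminf$ bound) is available when $r < 1$, so that the metric CPO axiom can be invoked directly.
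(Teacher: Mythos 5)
Your proposal is correct and follows essentially the same route as the paper's proof: an induction establishing $d(f^i(\perp{}),g^i(\perp{})) \leq{} \bigl(\sum_{j<i} r^j\bigr)\cdot{} d(f,g)$ (your recurrence $d_{i+1}\leq{} r\cdot{} d_i+\delta{}$ is exactly the induction step), followed by the geometric-series bound and the metric-CPO compatibility axiom at the limit. You are in fact somewhat more careful than the paper, which treats only $r<1$ explicitly and leaves the $r\geq{}1$, $\delta{}=0$ case (where the convention $\infty{}\cdot{}0=0$ makes the statement non-vacuous) implicit in the same estimate.
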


\begin{proof}
  Let $f$ and $g$ be two morphisms $r \cdot X \to X$. We can show by induction that
  \begin{equation}
    \label{eq:metcpo-kleene}
    d(f^i({\bot}), g^i({\bot})) \leq \left({\Sigma}_{j<i} r^j\right) \cdot d(f, g).
  \end{equation}
  Furthermore, when $r < 1$, we have
  \[ {\Sigma}_{j < i} r^j = \frac{1 - r^i}{1 - r}. \] Therefore, the right-hand side
  of \labelcref{eq:metcpo-kleene} is bounded by $s \cdot d(f, g)$ for every
  $i$. Since $X$ is a metric CPO, we find that $d(\fix(f), \fix(g)) \leq s \cdot
  d(f, g)$ and conclude.
\end{proof}

\subsection{Domain Equations}
\label{sec:domain-equations}

Fuzz allows users to define data types recursively. To give a semantics to these
types, we must solve the following problem: given an operator $F$ that maps
types to types, find a type ${\mu}F$ such that $F({\mu}F) \cong {\mu}F$.  The theory of
\emph{algebraic compactness}~\citep{Freyd1991,Smyth:82,Fiore:1994} provides an
elegant framework for studying these so called \emph{domain equations}.  After a
short review of this framework, we show how it applies to $\MetCPO_{\bot}$, preparing
the way to model recursive types in Fuzz in the next section.

Solutions to domain equations usually exploit existing $\CPO$ structure on the
arrows of a category. A $\CPO$-category is a category whose hom sets are CPOs
and whose composition is continuous. There are many examples of such categories,
including $\CPO$ and $\CPO_{\bot}$, but also $\MetCPO$ and $\MetCPO_{\bot}$ by
\Cref{lem:metcpo-exp}. Additionally, $\CPO$-categories are closed under products
and opposites: in the first case, the order on arrows is just the product order,
while in the second one it is the same as in the original category.

We are interested in solving domain equations for type operators $F$ that can be
extended to \emph{$\CPO$-functors}: these are functors between $\CPO$-categories
whose action on morphisms is continuous. This includes identity functors,
constant functors, and the composition of $\CPO$-functors, as well as all the type
operators that we have considered in this section ($\with$, $\otimes$, etc.).  Thus,
$\CPO$-functors can describe many recursive data types.  For instance, the
functor $F : \MetCPO_{\bot} \to \MetCPO_{\bot}$ defined as
\begin{equation}
  \label{eq:additive-lists}
  F(X) \triangleq {\termo} + {\mathbb{R} \otimes X}
\end{equation}
is a $\CPO$-functor, and the solution of the corresponding domain equation is a
metric CPO of lists of real numbers.  By construction, the distance between two
lists of same length is the sum of the distances of corresponding pairs of
numbers, and lists of different length are infinitely apart.

We say that a $\CPO$-category $\mathcal{C}$ is \emph{algebraically compact} if,
for every $\CPO$-functor $F : \mathcal{C} \to \mathcal{C}$, there exists an object
${\mu}F$ and an isomorphism
\begin{equation}
  \label{eq:algebraically-compact}
  i : F({\mu}F) \cong {\mu}F
\end{equation}
such that $i$ is an initial algebra and $i^{-1}$ is a final coalgebra.  As
usual, this universal property of $i$ translates into powerful induction and
coinduction principles~\cite{Pitts:1996} that characterize the solution ${\mu}F$ up
to isomorphism.  However, it does not play a major role in our analysis, so we
will not worry about it in what follows.

Two basic facts about algebraic compactness will be useful later on.  First, if
$\mathcal{C}$ is algebraically compact and $T$ is a finite set, then the product
$\mathcal{C}^T$ is also algebraically compact.  This allows us to describe
\emph{mutually recursive types} as solutions to domain equations of the form
$F(X_1,\ldots,X_n) \cong (X_1,\ldots,X_n)$.

Second, algebraic compactness also provides solutions to domain equations given
in terms of \emph{mixed-variance} $\CPO$-functors.  More precisely, suppose
$\mathcal{C}$ is algebraically compact, and $F : \mathcal{C}^{\star} \to \mathcal{C}$ is
a $\CPO$-functor, where $\mathcal{C}^{\star} \triangleq \mathcal{C}^{op} \times \mathcal{C}$.  Then
we can find ${\mu}F \in \mathcal{C}$ and an isomorphism
\begin{equation}
  \label{eq:minimal-invariant}
  i : F({\mu}F, {\mu}F) \cong {\mu}F.
\end{equation}
Such domain equations allow us to consider type operators involving exponentials
$\mathcal{C}(-, -)$, which cannot be modeled directly as covariant functors as
was done for \labelcref{eq:additive-lists}.

The following classic result provides useful sufficient conditions for showing
that $\MetCPO_{\bot}$ is algebraically compact.

\begin{theorem}[\citet{Smyth:82}]
  \label{thm:algebraic-compactness}
  Let $\mathcal{C}$ be a $\CPO$-category with a terminal object. Suppose that
  $\mathcal{C}(X, Y)$ is pointed for every $X$ and $Y$, and that $f \circ {\bot} = {\bot}$ for
  every $f$. Suppose furthermore that $\mathcal{C}$ has colimits of ${\omega}$-chains
  of embeddings; that is, of diagrams of the form
  \begin{center}
    \begin{tikzcd}
      X_0 \arrow{r}{} & X_1 \arrow{r}{} & X_2 \arrow{r}{} & \cdots,
    \end{tikzcd}
  \end{center}
  where every arrow $e$ has an arrow $e^\#$ such that $e^\#e = id$ and $ee^\# \sqsubseteq
  id$. Then, $\mathcal{C}$ is algebraically compact.
\end{theorem}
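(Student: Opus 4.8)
The plan is to fix an arbitrary $\CPO$-functor $F : \mathcal{C} \to \mathcal{C}$ and follow the classical limit-colimit coincidence argument of Smyth and Plotkin, organized around the category of embeddings. Call a morphism $e : X \to Y$ an \emph{embedding} when there is a \emph{projection} $e^\# : Y \to X$ with $e^\# e = id$ and $e e^\# \sqsubseteq id$, as named in the hypotheses. A one-line monotonicity calculation shows the projection is uniquely determined by the embedding: if $p, p'$ are both projections for $e$ then $p' = (pe)p' = p(ep') \sqsubseteq p$, and symmetrically $p \sqsubseteq p'$, so $p = p'$; dually $e$ is determined by $e^\#$. Embeddings contain the identities and are closed under composition (with $(e'e)^\# = e^\# (e')^\#$), forming a subcategory $\mathcal{C}^e$, and passage to projections is a bijection between embeddings and projections. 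Local continuity of $F$, hence local monotonicity, shows $F$ restricts to $\mathcal{C}^e$: from $Fe^\# \circ Fe = F(id) = id$ and $Fe \circ Fe^\# = F(ee^\#) \sqsubseteq F(id) = id$, the pair $(Fe, Fe^\#)$ is again an ep-pair, so in particular $(F\mu)^\# = F(\mu^\#)$.

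Next I would establish the limit-colimit coincidence: for an $\omega$-chain of embeddings $X_0 \to X_1 \to \cdots$, a cocone of embeddings $(\mu_n : X_n \to X)_n$ is colimiting in $\mathcal{C}^e$ if and only if $\bigsqcup_n \mu_n \mu_n^\# = id_X$ in the CPO $\mathcal{C}(X, X)$, in which case the projections $(\mu_n^\# : X \to X_n)_n$ form a limiting cone over the chain of projections in $\mathcal{C}$. The hypothesis that $\mathcal{C}$ has colimits of $\omega$-chains of embeddings supplies the vertex $X$; this characterization is the technical core and uses continuity of composition throughout. It also shows $F$ preserves such colimits, since local continuity gives $\bigsqcup_n F\mu_n (F\mu_n)^\# = \bigsqcup_n F(\mu_n \mu_n^\#) = F\left(\bigsqcup_n \mu_n \mu_n^\#\right) = F(id) = id$.

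With this in hand I would identify the initial object of $\mathcal{C}^e$ as the terminal object $\termo$. The key observation is that, since $\mathcal{C}(X, \termo)$ is a singleton, the terminal map $!_X$ is also the bottom of its hom-set; hence $e = \perp_{\termo, X}$ and $e^\# = !_X$ form an ep-pair, as $e^\# e = id_\termo$ because $\termo$ is terminal and $e e^\# = \perp_{\termo,X} \circ !_X = \perp \sqsubseteq id_X$ by the strictness law $f \circ \perp = \perp$. Uniqueness of this initial embedding follows from the projection-embedding bijection. I would then form the chain $\termo \to F\termo \to F^2\termo \to \cdots$ of embeddings, take its colimit $\mu F$ in $\mathcal{C}^e$, and read off the isomorphism $i : F(\mu F) \cong \mu F$ of \labelcref{eq:algebraically-compact} from colimit-preservation, with injections satisfying $\mu_{n+1} = i \circ F\mu_n$.

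The genuinely hard part, upgrading algebraic completeness to algebraic compactness, is Freyd's minimal-invariant property: the identity $id_{\mu F}$ is the least fixed point of the continuous endofunction $\Phi(\delta) = i \circ F(\delta) \circ i^{-1}$ on the pointed CPO $\mathcal{C}(\mu F, \mu F)$. Using $\mu_{n+1} = i \circ F\mu_n$ one proves by induction that the Kleene iterates recover the partial projections, $\Phi^n(\perp) = \mu_n \mu_n^\#$, the base case $\Phi^0(\perp) = \perp = \mu_0 \mu_0^\#$ again invoking $f \circ \perp = \perp$; hence $\bigsqcup_n \Phi^n(\perp) = \bigsqcup_n \mu_n \mu_n^\# = id$. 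From this identity both universal properties follow uniformly: given any algebra $a : FA \to A$, the operator $\Psi(g) = a \circ F(g) \circ i^{-1}$ on $\mathcal{C}(\mu F, A)$ is continuous, its least fixed point is an algebra morphism from $i$ to $a$, and any algebra morphism $h$ satisfies $h \circ \mu_n\mu_n^\# = \Psi^n(\perp)$ by the same induction (using strictness for the base), forcing $h = \bigsqcup_n h \circ \mu_n\mu_n^\# = \bigsqcup_n \Psi^n(\perp)$. This establishes the initial-algebra property of $i$, and the dual argument over coalgebras gives the final-coalgebra property of $i^{-1}$. I expect the bookkeeping here --- verifying $\Phi^n(\perp) = \mu_n\mu_n^\#$ and extracting the two universal properties --- to be the main obstacle, the earlier steps being routine once the ep-pair calculus is in place.
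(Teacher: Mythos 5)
The paper offers no proof of this theorem: it is quoted from \citet{Smyth:82}, with the algebraic-compactness packaging due to \citet{Freyd1991}, so there is no in-paper argument to compare against. Your reconstruction is the standard limit--colimit coincidence proof from exactly those sources, and it is essentially correct: the ep-pair calculus, initiality of the terminal object $\termo$ in the subcategory of embeddings (via $!_X = \perp_{X,\termo}$ and left strictness), the chain $\termo \to F\termo \to F^2\termo \to \cdots$, preservation of its colimit by local continuity, and Freyd's minimal-invariant identity $id_{\mu F} = \bigsqcup_n \mu_n\mu_n^\# = \bigsqcup_n \Phi^n(\perp)$ driving both universal properties. The one step that is not literally ``dual'' is the base case of the coalgebra-uniqueness induction, which needs $\perp \circ h = \perp$, whereas only the left strictness law $f \circ \perp = \perp$ is assumed; this is repaired by observing that every bottom map factors through the terminal object, $\perp_{X,Y} = \perp_{\termo,Y} \circ {!_X}$ (since $\mathcal{C}(X,\termo)$ is a singleton, ${!_X}$ is its bottom), whence $\perp_{X,Y} \circ g = \perp_{\termo,Y} \circ {!_W} = \perp_{W,Y}$ for any $g : W \to X$. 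That observation is one of the reasons the terminal object appears among the hypotheses, and is worth making explicit; otherwise the sketch matches the cited argument.
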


Most of these conditions can be easily checked. (The terminal object in
$\MetCPO_{\bot}$ is the empty metric CPO $\inito$.) The most difficult one is showing
that $\MetCPO_{\bot}$ has colimits of ${\omega}$-chains of embeddings. For this purpose,
we introduce a fibrational construction that will let us lift colimits in $\CPO$
to $\MetCPO$, where they can be easily transferred to $\MetCPO_{\bot}$. Later
(\Cref{sec:recursive-types}), we will reuse this machinery to show that the
denotational semantics of Fuzz is adequate.

Let $F:\mathcal{E} \to \mathcal{D}$ be a functor.  The \emph{fiber category} over
an object $X \in \mathcal{D}$ is the subcategory $\mathcal{E}_X$ of $\mathcal{E}$
consisting of objects and morphisms that are mapped to $X$ and $\mathrm{id}_X$
by $F$, respectively.  If $A, B \in \mathcal{E}$, we write $f : A \supset B$ to mean
that there exists $f' : A \to B$ such that $f = Ff'$.  We say that $F$ is a
\emph{$\CLat$-fibration}\footnote{%
  The name $\CLat$-fibration stems from the fact that these structures
  correspond uniquely (via the Grothendieck construction) to a functor
  $\mathcal{D}^{op} \to \CLat$, where the codomain is the category of complete
  lattices and meet-preserving functions.
  %
}
over $\mathcal{D}$ if it is a \emph{posetal fibration with fibered limits}, or,
more explicitly, if it satisfies the following properties.
\begin{enumerate}
\item For each $X \in \mathcal{D}$, the fiber category $\mathcal{E}_X$ is a poset,
  and every subset $S \subseteq \mathcal{E}_X$ has a meet, denoted by $\bigcap S$.

\item For each arrow $f : X \to Y$ in $\mathcal{D}$ and $B \in \mathcal{E}_Y$, there
  is a element $f^*B \in \mathcal{E}_X$ (called the {\em inverse image of $B$ by
    $f$}) such that
  \begin{equation}
    \label{eq:inverse-image}
    g : A \supset f^*B \iff fg : A \supset B
  \end{equation}
  for all arrows $g$.  Furthermore, $f^*\left(\bigcap S\right) = \bigcap  \{ f^*B \mid B \in S \}$
  for any set $S \subseteq \mathcal{E}_X$.
\end{enumerate}

Intuitively, we use elements of $\mathcal{E}_X$ to represent abstract predicates
or relations over $X$, with the partial order of $\mathcal{E}_X$ corresponding
to logical implication.  We think of an arrow $f : A \supset B$ as taking elements
related by $A$ to elements related by $B$.  Note that the above properties imply
that $F$ is a faithful functor, and that each inverse image $f^*B$ is the unique
element satisfying \labelcref{eq:inverse-image}.

One example of $\CLat$-fibration is the canonical forgetful functor $p : \Met \to
\Set$. Each fiber $\Met_X$ corresponds to the poset of metrics on $X$, ordered
by
\[ d \leq d' \iff {\forall}x,x' \in X.\,d(x, x') \geq d'(x, x'). \] %
Thus, the intersection of a family of metrics $\{d_i\}_{i \in I}$ on a set is just
their point-wise supremum $(\sup_i d_i)(x, y) = \sup_i d_i(x, y)$, and the
inverse image of a metric $d$ by a function $f$ is given by $f^*d(x, y) =
d(f(x), f(y))$.  In terms of the relational intuition above, each metric $d$ on
$X$ yields a family of relations $\{R_r\}_{r \in \Rext}$, defined by $(x, x') \in
R_r \iff d(x, x') \leq r$.  Non-expansiveness then simply means that elements related
at distance $r$ are mapped to elements related at distance $r$.

If $\mathcal{D}$ is also a $\CPO$-category, it is useful to require more
structure of $F$. An object $B \in \mathcal{E}$ is called
\emph{admissible}~\cite[Definition 4.3]{Pitts:1996} if the image of
$\mathcal{E}(A, B)$ under $F$ is closed under limits of ${\omega}$-chains for every
$A$. We say that $F$ itself is admissible if every object in $\mathcal{E}$ is
admissible; this gives $\mathcal{E}$ a canonical structure of
$\CPO$-category.\footnote{%
  The terminology is reminiscent of Pitts' work on relational properties of
  domains~\citep{Pitts:1996}. In fact, $\CLat$-fibrations are closely related to
  his notion of normal relational structure with inverse images and
  intersections.}  Alternatively, $F$ is admissible if both $\mathcal{E}$ and
$\mathcal{D}$ are $\CPO$-categories and $F$ is a $\CPO$-functor.  \jh{not sure
  if to move previous sentence into the footnote.}

The following summarizes useful facts about $\CLat$-fibrations.

\begin{lemma}\strut
  \label{lem:fibrations}
  \begin{enumerate}
  \item $\CLat$-fibrations preserve and create limits and colimits.
  \item $\CLat$-fibrations are closed under products, opposites, and pullbacks
    along any functor.  The same conclusions hold for admissible
    $\CLat$-fibrations over $\CPO$-categories, restricting pullbacks along
    $\CPO$-functors.
  \item Let $\mathcal{D}$ be a $\CPO$-category, and $F : \mathcal{E} \to
    \mathcal{D}$ a $\CLat$-fibration.  Admissible objects of $F$ are closed
    under inverse images and intersections~\cite[Lemma 4.14]{Pitts:1996}. In
    particular, restricting $F$ to the full subcategory $\mathcal{E}_{adm}$ of
    admissible objects of $\mathcal{E}$ yields an admissible $\CLat$-fibration.
  \end{enumerate}
\end{lemma}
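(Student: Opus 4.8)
The plan is to exploit the fact that a $\CLat$-fibration is really a \emph{posetal bifibration} whose fibers are complete lattices. Each fiber $\mathcal{E}_X$ is a poset with all meets, hence a complete lattice, and therefore also has all joins; moreover, since every inverse image $f^*$ preserves all meets between complete lattices, the adjoint functor theorem for posets supplies a left adjoint $f_!$ (the \emph{direct image}), characterized by $f_! A \subseteq B \Leftrightarrow A \subseteq f^* B$ in the appropriate fibers and automatically preserving joins. This direct image is the single ingredient that powers the colimit and opposite cases below; everything else is a routine transcription of the inverse-image characterization~\labelcref{eq:inverse-image} together with the facts that $F$ is faithful and its fibers are posets, so that any morphism is uniquely determined by its image under $F$.

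For part (1) I would establish \emph{creation} directly and observe that preservation follows. Given $D : J \to \mathcal{E}$ whose image $FD$ has a limit $(L, (p_j))$ in $\mathcal{D}$, I set $\hat{L} \triangleq \bigcap_j p_j^* D(j) \in \mathcal{E}_L$; the cone legs $\hat{p}_j : \hat{L} \supset D(j)$ arise by composing the fiber inclusion $\hat{L} \subseteq p_j^* D(j)$ with the Cartesian arrow $p_j^* D(j) \supset D(j)$. For universality, any cone $(A, (q_j))$ over $D$ maps under $F$ to a cone over $FD$, inducing a unique base arrow $u : FA \to L$; the characterizing property of inverse images gives $A \supset p_j^* D(j)$ over $u$ for each $j$, and meet-preservation of $u^*$ then yields a mediating arrow $A \supset \hat{L}$ over $u$, necessarily unique by posetality and faithfulness. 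Colimits are handled dually, replacing $p_j^* D(j)$ by the join $\bigvee_j (\iota_j)_! D(j)$ of direct images along the colimit injections $\iota_j$ of $FD$; here the adjunction $f_! \dashv f^*$ and the completeness of the fibers are exactly what make the dual argument go through.

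For part (2), products and pullbacks are essentially bookkeeping: in a product fibration the fibers, meets, and inverse images are computed componentwise, and in a pullback $G^* F$ over $\mathcal{D}'$ the fiber over $X'$ is $\mathcal{E}_{GX'}$ with inverse images along $f'$ given by $(Gf')^*$, so the axioms transfer immediately. The \emph{opposite} case is the one that genuinely uses the bifibration structure: the fibers of $F^{op} : \mathcal{E}^{op} \to \mathcal{D}^{op}$ over $X$ are the opposite lattices $(\mathcal{E}_X)^{op}$, which again have all meets (the joins of $\mathcal{E}_X$), and the inverse image along $f$ in $\mathcal{D}^{op}$ is precisely the direct image $f_!$ in $F$, which preserves joins and hence preserves meets in the opposite. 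For the admissible refinements I would use the characterization that, over $\CPO$-categories, an admissible $\CLat$-fibration is exactly a $\CLat$-fibration that is also a $\CPO$-functor; since opposite and finite-product $\CPO$-categories carry the same hom-CPO structure, and the opposite, product, or pullback along a $\CPO$-functor of a $\CPO$-functor is again a $\CPO$-functor, admissibility is preserved in each construction.

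For part (3) I would invoke~\cite[Lemma 4.14]{Pitts:1996} for the closure of admissible objects under inverse images and intersections. Granting this, $\mathcal{E}_{adm}$ is a full subcategory whose fiber over each $X$ is the sub-poset of admissible objects of $\mathcal{E}_X$; this sub-poset is closed under arbitrary meets (intersections) and the inverse images $f^*$ restrict to it, so the two $\CLat$-fibration axioms hold for $F$ restricted to $\mathcal{E}_{adm}$, and the restriction is admissible by construction since every object of $\mathcal{E}_{adm}$ is admissible. The main obstacle throughout is isolating the direct-image operation $f_!$ and checking it behaves correctly: once the bifibration and complete-lattice structure are in place, the colimit half of part (1) and the opposite case of part (2) follow by dualization, while all remaining verifications reduce to the posetal, faithful nature of $F$.
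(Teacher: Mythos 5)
Your proposal is correct. Note that the paper does not actually prove this lemma --- it is stated as a collection of standard facts, with only a citation to Pitts (Lemma 4.14) for part (3) --- so there is no in-paper argument to compare against; your write-up supplies the details the paper leaves implicit, and it does so along what is surely the intended route. The one genuinely non-trivial observation, which you correctly isolate, is that a $\CLat$-fibration is automatically a posetal \emph{bi}fibration: each $f^*$ preserves all meets between complete lattices, so the adjoint functor theorem for posets yields $f_! \dashv f^*$, and this is exactly what makes the colimit half of part (1) and the opposite case of part (2) work (the reindexing of $F^{op}$ along $f^{op}$ is $f_!$, whose meet-preservation in the opposite fibers is join-preservation of a left adjoint). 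Your limit/colimit constructions $\bigcap_j p_j^*D(j)$ and $\bigvee_j (\iota_j)_! D(j)$ are the standard ones, and the uniqueness claims follow from posetality of the fibers plus faithfulness of $F$, as you say. Two small points worth being aware of: ``preserve'' in part (1) should be read relative to diagrams whose image admits a (co)limit in the base (which is all the paper ever uses, since the base is $\CPO$ or a product thereof); and for the admissible variants your appeal to the characterization of admissible $\CLat$-fibrations as $\CPO$-functors between $\CPO$-categories matches the paper's ``alternative'' definition, with the restriction to pullbacks along $\CPO$-functors being precisely what keeps the preimage of a chain-closed hom-image chain-closed.
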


We want to use this result to compute colimits in $\MetCPO$.  To do this, we
characterize $\MetCPO$ as the full subcategory of admissible objects of $\CPO
\times_{\Set} \Met$, the category of pre-metric CPOs and non-expansive, continuous
functions. The latter arises as the following pullback of functors, and $r$
below is a $\CLat$-fibration:
\begin{center}
  \begin{tikzcd}
    \CPO \times_{\Set} \Met \arrow{d}{r} \arrow{r}{} & \Met \arrow{d}{p} \\
    \CPO \arrow{r}{U} & \Set
  \end{tikzcd}
\end{center}

\begin{proposition}
  $(\CPO \times_{\Set} \Met)_{adm}=\MetCPO$.
\end{proposition}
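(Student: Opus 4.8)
The plan is to show that the admissible objects of the $\CLat$-fibration $\CPO \times{}_{\Set} \Met \to \CPO$ are \emph{exactly} the metric CPOs, by unfolding the definition of admissibility in this concrete setting and matching it with the metric CPO axiom. Since $(\CPO \times{}_{\Set} \Met)_{adm}$ is by construction the \emph{full} subcategory of admissible objects, and both it and $\MetCPO$ have continuous non-expansive functions as morphisms, it suffices to prove the equality on objects: a pre-metric CPO $(X, d)$ is admissible if and only if it is a metric CPO. Writing $A = (Y, d')$ and $B = (X, d)$ for pre-metric CPOs, the hom-set $(\CPO \times{}_{\Set} \Met)(A, B)$ consists of the continuous non-expansive functions $(Y, d') \to (X, d)$; and since the projection to $\CPO$ is faithful, its image inside the CPO $\CPO(Y, X)$ is exactly this set of functions regarded as continuous maps $Y \to X$. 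Thus $B$ is admissible precisely when, for every pre-metric CPO $(Y, d')$, the least upper bound $\bigsqcup_i g_i$ of any $\omega$-chain $(g_i)$ of continuous non-expansive maps $(Y,d') \to (X,d)$ is again non-expansive.

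For the direction ``metric CPO $\Rightarrow$ admissible,'' I would take such a chain $(g_i)$ and a pair $y_1, y_2 \in Y$. Because the $g_i$ increase pointwise, $(g_i(y_1))_i$ and $(g_i(y_2))_i$ are $\omega$-chains in $X$, and non-expansiveness of each $g_i$ gives $d(g_i(y_1), g_i(y_2)) \leq{} d'(y_1, y_2)$. Applying the metric CPO axiom with bound $r = d'(y_1, y_2)$, and using that lubs in $\CPO(Y, X)$ are computed pointwise, yields $d\bigl((\bigsqcup_i g_i)(y_1), (\bigsqcup_i g_i)(y_2)\bigr) \leq{} d'(y_1, y_2)$, so $\bigsqcup_i g_i$ is non-expansive and $B$ is admissible.

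For the converse, ``admissible $\Rightarrow$ metric CPO,'' the key idea is to manufacture a test object that turns the metric CPO axiom into an instance of admissibility. Given $\omega$-chains $(x_i)$ and $(x'_i)$ on $X$ with $d(x_i, x'_i) \leq{} r$ for all $i$, I would take $A = (Y, d')$ to be the two-point discrete CPO $Y = \{a, b\}$ equipped with the metric $d'(a, b) = r$ (and $d'(a,a) = d'(b,b) = 0$), which is a legitimate pseudo-metric, and define $f_i : Y \to X$ by $f_i(a) = x_i$ and $f_i(b) = x'_i$. Each $f_i$ is trivially continuous (the domain is discrete) and non-expansive (since $d(x_i, x'_i) \leq{} r = d'(a,b)$), and $(f_i)$ is an $\omega$-chain because $(x_i)$ and $(x'_i)$ are. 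Admissibility then forces $\bigsqcup_i f_i$ to be non-expansive, and reading off its values at $a$ and $b$ gives $d(\bigsqcup_i x_i, \bigsqcup_i x'_i) \leq{} r$, which is exactly the metric CPO axiom (the case $r = \infty{}$ being vacuous).

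The main obstacle is genuinely just the correct translation of the abstract ``admissible object'' definition into the concrete closure-under-lubs statement about non-expansive maps; once that bookkeeping is in place, both implications are short. The only creative ingredient is the choice of the two-element discrete test space in the converse, which is exactly rich enough to encode an arbitrary pair of chains together with their prescribed distance bound.
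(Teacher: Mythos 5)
Your proposal is correct and follows essentially the same route as the paper: the forward direction is the specialization of the argument in \Cref{lem:metcpo-exp} showing that pointwise lubs of non-expansive chains stay non-expansive, and the converse uses exactly the paper's test object $\bool_r$ (the two-point discrete metric CPO at distance $r$) to recover the metric CPO axiom from admissibility. The only difference is that you spell out the bookkeeping that the paper leaves implicit.
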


\begin{proof}
  Every metric CPO is admissible, by an argument analogous to
  \Cref{lem:metcpo-exp}. To see the converse, we can observe that a pre-metric
  CPO is a metric CPO if and only if the set of continuous, non-expansive
  functions $\bool_r \to X$ is closed under least upper bounds for every $r \in
  \Rext$, where $\bool_r$ is the discrete metric CPO consisting of two points at
  distance $r$.
\end{proof}
\begin{corollary}\label{cor:colim}
  The forgetful functor $q : \MetCPO \to \CPO$ is an admissible
  $\CLat$-fibration, and $\MetCPO$ is cocomplete.
\end{corollary}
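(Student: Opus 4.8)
The plan is to derive both halves of the corollary purely by bookkeeping, feeding the preceding proposition and the three parts of \Cref{lem:fibrations} into one another, rather than constructing anything new by hand. The two facts already in place that do the real work are: the identification $\MetCPO = (\CPO \times_{\Set} \Met)_{adm}$ from the preceding proposition, and the observation, recorded just below the pullback square, that $r : \CPO \times_{\Set} \Met \rightarrow \CPO$ is a $\CLat$-fibration (being the pullback of the $\CLat$-fibration $p : \Met \rightarrow \Set$ along $U$, using closure under pullbacks, \Cref{lem:fibrations}(2)).

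For the first claim I would argue as follows. The base $\CPO$ is a $\CPO$-category, and $r$ is a $\CLat$-fibration over it, so \Cref{lem:fibrations}(3) applies verbatim: restricting a $\CLat$-fibration over a $\CPO$-category to its full subcategory of admissible objects yields an admissible $\CLat$-fibration. By the preceding proposition that subcategory of admissible objects is exactly $\MetCPO$, and the restriction of $r$ to it is precisely the forgetful functor $q : \MetCPO \rightarrow \CPO$. Hence $q$ is an admissible $\CLat$-fibration, and in particular it is a genuine $\CLat$-fibration (its fibers are complete lattices closed under the meets guaranteed by \Cref{lem:fibrations}(3)).

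For cocompleteness I would invoke \Cref{lem:fibrations}(1): every $\CLat$-fibration preserves and creates colimits, so in particular $q$ creates colimits. Since $\CPO$ is cocomplete (a standard fact: continuous functions between CPOs, ordered pointwise, admit all small colimits), any small diagram in $\MetCPO$ is sent by $q$ to a diagram in $\CPO$ that has a colimit, and $q$ lifts that colimit uniquely to a colimit in $\MetCPO$. Therefore $\MetCPO$ has all small colimits.

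The whole argument is essentially mechanical once \Cref{lem:fibrations} is available, so there is no genuinely hard step; the only points needing care are minor. First, one must be sure to apply \Cref{lem:fibrations}(1) to $q$ itself — which requires knowing that the restriction to admissible objects really is a full $\CLat$-fibration, and this is exactly what \Cref{lem:fibrations}(3) supplies, not merely a property of $r$. Second, the appeal to cocompleteness of $\CPO$ sits outside the fibrational development and has to be imported as a standard domain-theoretic fact. The payoff of routing everything through ``create'' is that a concrete description of colimits in $\MetCPO$ falls out for free: compute the colimit of the underlying CPOs in the base, and let the fibration structure (inverse images along the cocone legs, intersected) determine the metric on top, with no need to build limits of metrics by hand.
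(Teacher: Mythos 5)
Your argument is correct and is exactly the paper's proof, which is stated there as a one-line appeal to \Cref{lem:fibrations}: part (2) gives the fibration $r$ on the pullback, part (3) plus the preceding proposition identify its restriction to admissible objects with $q : \MetCPO \rightarrow{} \CPO$, and part (1) together with cocompleteness of $\CPO$ yields cocompleteness of $\MetCPO$. The only slip is in your closing aside: ``inverse images along the cocone legs, intersected'' is how a $\CLat$-fibration computes \emph{limits}; colimits are instead lifted via the joins (direct images) that the complete-lattice fibers provide, which for metrics amounts to the largest metric making all cocone legs non-expansive.
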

\begin{proof}
  By \Cref{lem:fibrations}.
\end{proof}

To conclude, we just need to show that $\omega$-colimits of embeddings in
$\MetCPO_{\bot}$ can be transferred from $\MetCPO$.  The key observation is that
every embedding is the image of a morphism by the left adjoint $J : \MetCPO \to
\MetCPO_{\bot}$ associated to the Kleisli category.
\begin{lemma}\label{lem:embrep}
  For any embedding $e \in \MetCPO_{\bot}(X,Y)$, there exists a
  unique morphism $m \in \MetCPO(X,Y)$ such that $e=Jm$.
\end{lemma}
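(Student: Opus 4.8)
The plan is to show that an embedding $e$, viewed as an ordinary $\MetCPO$-morphism $X \to Y_\perp$, never takes the value $\perp$, and therefore factors through the unit $\eta_Y$. First I would recall that $J$ is the free functor into the Kleisli category: it is the identity on objects and sends $m \in \MetCPO(X,Y)$ to $\eta_Y \circ m \in \MetCPO_\perp(X,Y)$. Thus asserting $e = Jm$ amounts to saying that $e$, as a map $X \to Y_\perp$, factors as $\eta_Y \circ m$ for some continuous non-expansive $m : X \to Y$. Since $\eta_Y$ is injective, such an $m$ is unique, which settles uniqueness immediately (and also shows $J$ is faithful).

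For existence, let $e^{\#}$ be the projection associated to the embedding $e$, so that $e^{\#}e = \mathrm{id}$ in $\MetCPO_\perp$. Unfolding Kleisli composition, this says $(e^{\#})^{\dagger} \circ e = \eta_X$ as honest $\MetCPO$-morphisms. Evaluating at a point $x \in X$ is the crux: if $e(x) = \perp$, then $(e^{\#})^{\dagger}(e(x)) = \perp$ by the defining clause of the Kleisli lifting, contradicting $(e^{\#})^{\dagger}(e(x)) = \eta_X(x) \neq \perp$. Hence $e(x) \neq \perp$ for every $x$, so $e$ lands in the copy of $Y$ inside $Y_\perp$, and I may define $m : X \to Y$ by the equation $\eta_Y(m(x)) = e(x)$. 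Note that only the section identity $e^{\#}e = \mathrm{id}$ is needed; the retraction condition $ee^{\#} \sqsubseteq \mathrm{id}$ plays no role.

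It then remains to verify that $m$ is a morphism of metric CPOs. The unit $\eta_Y$ embeds $Y$ as the sub-CPO of non-$\perp$ elements of $Y_\perp$, on which both the order and the metric coincide with those of $Y$ (recall that $\perp$ is infinitely apart from every other point). Consequently $\eta_Y$ is an order-embedding reflecting suprema of $\omega$-chains and is an isometry onto its image. Continuity of $m$ then follows from continuity of $e$: applying $\eta_Y$ to $m(\bigsqcup_i x_i)$ gives $e(\bigsqcup_i x_i) = \bigsqcup_i e(x_i) = \eta_Y(\bigsqcup_i m(x_i))$, and injectivity of $\eta_Y$ yields $m(\bigsqcup_i x_i) = \bigsqcup_i m(x_i)$. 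Non-expansiveness is immediate, since $d_Y(m(x), m(x')) = d_{Y_\perp}(e(x), e(x')) \leq d_X(x, x')$ by the isometry property of $\eta_Y$ together with non-expansiveness of $e$.

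The main thing to get right is the bookkeeping between Kleisli composition and ordinary composition; once the equation $e^{\#}e = \mathrm{id}$ is correctly unfolded to $(e^{\#})^{\dagger} \circ e = \eta_X$, the pointwise argument that $e$ avoids $\perp$ is short, and everything else reduces to the routine observation that $\eta_Y$ is an order- and metric-preserving embedding of $Y$ onto the non-$\perp$ part of $Y_\perp$.
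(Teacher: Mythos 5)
Your proof is correct and follows essentially the same route as the paper's: both exploit only the split-mono half of the embedding condition ($e^{\#}e=\mathrm{id}$) to show $e$ never takes the value $\perp$, hence factors uniquely through the unit $\eta_Y$. The only difference is presentational --- the paper phrases this via the right adjoint $K$ sending $e$ to the strict split mono $e^\dagger$, while you unfold the Kleisli identity pointwise and, helpfully, spell out the routine check that the resulting $m$ is continuous and non-expansive, which the paper leaves implicit.
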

\begin{proof}
  We write $K$ for a right adjoint of $J$.  Let $e$ be an embedding in
  $\MetCPO_\bot(X,Y)$.  Since it is a split monomorphism, $Ke = e^{\dagger} \in
  \MetCPO(X_{\bot},Y_{\bot})$ is also a (split) monomorphism. Moreover, $Ke({\bot}) = {\bot}$;
  therefore, there exists a unique $m \in \MetCPO(X,Y)$ such that $e^{\dagger}=(m)_{\bot}$. By
  composing the unit ${\eta}$ of the lifting monad, we conclude $e={\eta}_Y \circ m=Jm$.
\end{proof}
\begin{theorem}
  The category $\MetCPO_{\bot}$ has colimits of ${\omega}$-chains of embeddings.
\end{theorem}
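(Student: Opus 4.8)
The plan is to lift the given $\omega$-chain from $\MetCPO_\perp$ back to $\MetCPO$, compute its colimit there using cocompleteness, and then transport the result forward along the left adjoint $J$, which preserves colimits. Concretely, suppose we are given an $\omega$-chain of embeddings
\[ X_0 \xrightarrow{e_0} X_1 \xrightarrow{e_1} X_2 \xrightarrow{e_2} \cdots \]
in $\MetCPO_\perp$. By \Cref{lem:embrep}, each $e_i$ is the image under $J$ of a unique morphism $m_i \in \MetCPO(X_i, X_{i+1})$, so that $e_i = Jm_i$. Since $J$ acts as the identity on objects, these morphisms assemble into an $\omega$-chain
\[ X_0 \xrightarrow{m_0} X_1 \xrightarrow{m_1} X_2 \xrightarrow{m_2} \cdots \]
in $\MetCPO$ whose image under $J$ is precisely the original chain of embeddings.

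By \Cref{cor:colim}, $\MetCPO$ is cocomplete, so this lifted chain has a colimit $(X_\infty, (\mu_i : X_i \to X_\infty)_i)$ in $\MetCPO$. Since $J$ is a left adjoint, it preserves all colimits; applying $J$ to the colimiting cocone $(\mu_i)$ therefore produces a colimiting cocone $(J\mu_i)$ for the image diagram $(X_i, e_i)$ in $\MetCPO_\perp$. Hence $X_\infty$, equipped with the cocone $(J\mu_i)$, is the desired colimit of the given $\omega$-chain of embeddings.

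This argument carries essentially no obstacle of its own: the real work has been front-loaded into \Cref{lem:embrep}, which shows that every embedding factors through $J$, and \Cref{cor:colim}, which establishes cocompleteness of $\MetCPO$. The only points requiring care are verifying that the lifted morphisms $(m_i)$ form a genuine diagram mapped by $J$ onto the original one---immediate, since $J$ is the identity on objects---and invoking preservation of colimits by the left adjoint $J$. I expect the subtlest conceptual step, were one to unfold it, to be the uniqueness clause in \Cref{lem:embrep}, since it is what guarantees the lifted diagram is well-defined; given that lemma, the universal property of the colimit transfers wholesale along $J$.
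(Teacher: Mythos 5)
Your proposal is correct and follows essentially the same route as the paper's proof: lift the chain of embeddings to $\MetCPO$ via \Cref{lem:embrep}, take the colimit there using the cocompleteness established in \Cref{cor:colim}, and push it forward along the colimit-preserving left adjoint $J$. The only cosmetic difference is that you explicitly note $J$ is the identity on objects and flag the uniqueness clause of \Cref{lem:embrep}, neither of which changes the substance.
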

\begin{proof}
  From Lemma \ref{lem:embrep}, every $\omega$-chain $(X_i,e_i)$ of embeddings in
  $\MetCPO_\bot$ is the $J$-image of an $\omega$-chain $(X_i,m_i)$ in $\MetCPO$.
  Moreover, $J$ preserves any colimit.  Therefore the $J$-image of a colimiting
  cone over $(X_i,m_i)$, which exists by Corollary \ref{cor:colim},
  gives a colimiting cone over $(X_i,e_i)$.
\end{proof}

Having checked this result, we can apply \Cref{thm:algebraic-compactness} to
show that $\MetCPO_{\bot}$ is algebraically compact.

\section{Full Fuzz}
\label{sec:recursive-types}

Now, we are ready to model full Fuzz with recursive types
(\Cref{fig:fuzz-typing-ext}). We will extend the basic setup of
\Cref{sec:core-fuzz} and prove a metric preservation property analogous to
\Cref{thm:core-metric-preservation}.

The full Fuzz language is parameterized by a finite set $T$ of \emph{type
  identifiers}, and a \emph{definition environment} ${\phi}$ mapping identifiers ${\alpha}$
to type expressions ${\phi}({\alpha})$, which may themselves contain
identifiers.\footnote{This is slightly different from the original presentation
  of Fuzz, which has anonymous recursive types ${\mu}{\alpha}.\,{\sigma}$ instead of globally
  defined ones.} Identifiers behave as \emph{iso-recursive types}: programs can
freely cast between ${\alpha}$ and ${\phi}({\alpha})$ with the $\bfold$ and $\bunfold$ operators
(cf. $\ruleemu$ and $\ruleimu$).

\newcommand{\dfuzzimu}{
  \inferrule
  { {\Gamma} \vdash e : {\phi}({\alpha}) }
  { {\Gamma} \vdash \bfold e : {\alpha} } \rname{\ruleimu}
}
\newcommand{\dfuzzemu}{
  \inferrule
  { {\Gamma} \vdash e : {\alpha} }
  { {\Gamma} \vdash \bunfold e : {\phi}({\alpha}) } \rname{\ruleemu}
}
\begin{figure}
  \centering
\begin{align*}
  {\sigma}, {\tau} & ::= \cdots \mid {\alpha} \in T &
  e & ::= \cdots \mid \bfold e \mid \bunfold e \\
  {\phi} & ::= ({\alpha} \mapsto {\phi}({\alpha}))_{{\alpha} \in T} &
  v & ::= \cdots \mid \bfold v
\end{align*}
\begin{mathpar}
  \dfuzzimu \and
  \dfuzzemu
\end{mathpar}
  \caption{Fuzz Recursive Types}
  \label{fig:fuzz-typing-ext}
\end{figure}

\subsection{Adapting the Model}
\label{sec:metcpo-model}

Ideally, we would like to extend the interpretation of types in
\Cref{sec:core-fuzz} by setting
\begin{equation}
  \label{eq:type-var-sem}
  \llbracket {\alpha}\rrbracket  \triangleq \llbracket {\phi}({\alpha})\rrbracket .
\end{equation}
Since ${\phi}({\alpha})$ is not smaller than ${\alpha}$, this definition is not well-founded.
However, we can still give it a formal meaning by appealing to algebraic
compactness.

The first step, following \Cref{sec:domain-equations}, is to express the
interpretation of recursive types as the solution of a system of domain
equations
\begin{equation}
  \label{eq:recursive-type-interp}
  i : F_{\phi}({\mu}F_{\phi}, {\mu}F_{\phi}) \cong {\mu}F_{\phi},
\end{equation}
where $F_{\phi} : (\MetCPO_{\bot}^T)^{\star} \to \MetCPO_{\bot}^T$, and ${\mu}F_{\phi} \in \MetCPO_{\bot}^T$ maps each
recursive type ${\alpha}$ to its interpretation ${\mu}F_{\phi}({\alpha})$. To define $F_{\phi}$, we assign
to each ${\sigma}$ a mixed-variance $\CPO$-functor $F_{\sigma} : (\MetCPO_{\bot}^T)^{\star} \to \MetCPO_{\bot}$
defined by recursion on ${\sigma}$:
\begin{align*}
  F_{\alpha}(X, Y) & \triangleq Y({\alpha}) \\
  F_{{\sigma} \multimap {\tau}}(X, Y) & \triangleq \MetCPO_{\bot}(F_{\sigma}(Y, X), F_{\tau}(X, Y))
\end{align*}
The other cases essentially follow the definition of $\llbracket -\rrbracket $ in
\Cref{sec:core-fuzz}, and are omitted for brevity.  We can now define
\begin{align*}
  F_{\phi}(X, Y)({\alpha}) \triangleq F_{{\phi}({\alpha})}(X, Y).
\end{align*}
Since $\MetCPO_{\bot}$ is algebraically compact, so is $\MetCPO_{\bot}^T$, implying that a
solution to \labelcref{eq:recursive-type-interp} exists.  With this solution in
hand, we can finally interpret types as
\[ \llbracket {\sigma}\rrbracket  \triangleq F_{\sigma}({\mu}F_{\phi}, {\mu}F_{\phi}). \] %
All the equations describing the interpretation of types for Core Fuzz carry
over to this definition.  Additionally, the isomorphism $i$
of \labelcref{eq:recursive-type-interp} corresponds to a family of isomorphims
\[ i_{\alpha} : \llbracket {\phi}({\alpha})\rrbracket  \cong \llbracket {\alpha}\rrbracket , \]
which give recursive types their intended semantics.

Now that we know how to interpret types, we can proceed with the rest of the
semantics.  The interpretation of environments ${\Gamma}$ remains the same: an iterated
tensor product of scaled metric CPOs.  As before, we scale and split
environments with an analog of \Cref{lem:env-scaling-addition}:
\begin{mathpar}
  \llbracket r{\Gamma}\rrbracket  = r \cdot \llbracket {\Gamma}\rrbracket  \and
  {\delta} : \llbracket {\Gamma} + {\Delta}\rrbracket  \to \llbracket {\Gamma}\rrbracket  \otimes \llbracket {\Delta}\rrbracket .
\end{mathpar}

The biggest difference with respect to Core Fuzz is that the new semantics is
monadic, in order to accommodate the presence of non-termination in a
call-by-value discipline. Judgments ${\Gamma} \vdash e : {\sigma}$ now correspond to Kleisli arrows
$\llbracket e\rrbracket  : \llbracket {\Gamma}\rrbracket  \to \llbracket {\sigma}\rrbracket _{\bot}$ in $\MetCPO$, defined recursively by adapting the semantics
of \Cref{sec:core-fuzz}.  For instance, consider the rule \ruleiamp: we want to
interpret a typed term
\[ {\Gamma} \vdash \langle e_1, e_2\rangle  : {\sigma} \with {\tau}, \] given interpretations for both subterms, $\llbracket e_1\rrbracket  :
\llbracket {\Gamma}\rrbracket  \to \llbracket {\sigma}\rrbracket _{\bot}$ and $\llbracket e_2\rrbracket  : \llbracket {\Gamma}\rrbracket  \to \llbracket {\tau}\rrbracket _{\bot}$. We define $\llbracket \langle e_1,e_2\rangle \rrbracket $ as the composite
\begin{center}
  \begin{tikzcd}
    \llbracket {\Gamma}\rrbracket  \arrow{r}{\langle \llbracket e_1\rrbracket , \llbracket e_2\rrbracket \rangle } & \llbracket {\sigma}\rrbracket _{\bot} \with \llbracket {\tau}\rrbracket _{\bot} \arrow{r}{t} & (\llbracket {\sigma}\rrbracket  \with
    \llbracket {\tau}\rrbracket )_{\bot},
  \end{tikzcd}
\end{center}
where $t$ is the forcing morphism from \labelcref{eq:forcing}.  The
interpretation of other term constructors of Core Fuzz is adapted to this new
setting analogously. To conclude, we interpret $\bfold$ and $\bunfold$ using the
isomorphisms provided by algebraic compactness:
\begin{description}
\item[\ruleimu] $\llbracket \bfold e\rrbracket  = i_{\alpha} \circ \llbracket e\rrbracket $
\item[\ruleemu] $\llbracket \bunfold e\rrbracket  = i_{\alpha}^{-1} \circ \llbracket e\rrbracket $
\end{description}

\subsection{Metatheory}

The basic properties of Core Fuzz
(\Cref{lem:core-weakening,lem:core-substitution,lem:core-preservation})
generalize without difficulty to this new setting. As in other call-by-value
languages, we also obtain:

\begin{lemma}
  Let $\vdash v : {\sigma}$ be a value. Then $\llbracket v\rrbracket  = {\eta}(x)$ for some $x \in \llbracket {\sigma}\rrbracket $.
\end{lemma}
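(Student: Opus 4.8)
The plan is to prove this by structural induction on the value $v$. Since $\vdash{} v : \sigma{}$ is a closed, well-typed value, its typing derivation must end in the introduction rule matching the outermost constructor of $v$---the elimination rules and the variable rule cannot apply to a closed value---so the induction is syntax-directed with one case per value form: $k$, $()$, $\lambda{}x.\, e$, $(v_1, v_2)$, $\langle v_1, v_2 \rangle $, ${!v'}$, $\binl v'$, $\binr v'$, and $\bfold v'$.

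Two facts about the lifting monad drive every case. First, the unit $\eta{}$ is natural, so that post-composing (in the Kleisli sense) a denotation already of the form $\eta{}(x)$ with the lift of a total morphism $h$ of $\MetCPO$ again lands in the image of $\eta{}$, yielding $\eta{}(h(x))$. Second, the forcing morphism $t$ of \labelcref{eq:forcing} returns $\perp{}$ precisely when one of its inputs is $\perp{}$; on a pair of total inputs it satisfies $t(\eta{}(x), \eta{}(y)) = \eta{}(x, y)$. With these in hand I would dispatch the cases as follows. The constant, unit, and lambda cases are immediate, since their denotations are literally $\eta{}$ applied to the corresponding element or curried function. The injection, bang, and fold cases all have the shape $h \circ{} \llbracket v' \rrbracket $ for a total $h$---an injection $\iota{}_i$, the scaling map (which is the identity on underlying elements), or the isomorphism $i_\alpha{}$ of \labelcref{eq:recursive-type-interp}---so the induction hypothesis $\llbracket v' \rrbracket  = \eta{}(x')$ together with naturality of $\eta{}$ closes them, giving $\eta{}(h(x'))$.

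The only case with any real content is the pair constructors $(v_1, v_2)$ and $\langle v_1, v_2 \rangle $, since these are the sole points where the monadic semantics forces a tuple of computations and could in principle introduce divergence. Here the induction hypothesis gives $\llbracket v_1 \rrbracket  = \eta{}(x_1)$ and $\llbracket v_2 \rrbracket  = \eta{}(x_2)$, and the denotation is $t \circ{} \langle \llbracket v_1 \rrbracket , \llbracket v_2 \rrbracket  \rangle $ (with the appropriate pairing for $\otimes{}$ or $\with$); applying the behaviour of $t$ on total inputs gives $\eta{}(x_1, x_2)$. I expect this to be the main---and essentially the only---obstacle, though it is a short computation rather than a deep one: the whole lemma rests on observing that forcing never introduces $\perp{}$ when every component is already total.
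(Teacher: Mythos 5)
Your proof is correct and is exactly the standard argument the paper leaves implicit (it states the lemma without proof, remarking only that it holds ``as in other call-by-value languages''): a syntax-directed induction on closed values, using that the denotation of each value form is either literally in the image of $\eta{}$ or the Kleisli composite of a total morphism with denotations already of that form, with the forcing morphism $t$ preserving totality in the two pair cases. No gaps.
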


Thanks to this result, we can treat the denotation of a value $\vdash v : {\sigma}$ as an
element $\llbracket v\rrbracket  \in \llbracket {\sigma}\rrbracket $. These properties lead to our main soundness result:

\begin{theorem}[Metric Preservation]
  \label{thm:metric-preservation}
  Suppose that we have a well-typed program
  \[ {\Gamma} \vdash e : {\sigma}, \] and well-typed substitutions $\vec{v} : {\Gamma}$ and $\vec{v}' :
  {\Gamma}$.  Then
  \[ d_{\llbracket {\sigma}\rrbracket _{\bot}}(\llbracket e[\vec{v}]\rrbracket , \llbracket e[\vec{v}']\rrbracket ) \leq d_{\llbracket {\Gamma}\rrbracket }(\llbracket \vec{v}\rrbracket , \llbracket \vec{v}'\rrbracket ). \]
\end{theorem}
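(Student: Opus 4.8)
The plan is to follow the same two-step strategy as the proof of \Cref{thm:core-metric-preservation}, but carried out entirely denotationally: since the lifted metric on $\llbracket \sigma{} \rrbracket _\perp{}$ already accounts for divergence (a terminating and a diverging computation sit infinitely apart, and two diverging ones at distance $0$), we never need to appeal to normalization or to the operational semantics. The only two ingredients are a substitution lemma and the non-expansiveness of the interpretation map.

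First I would invoke the monadic generalization of \Cref{lem:core-substitution}. Because $\vec{v} : \Gamma{}$ is a substitution for the \emph{whole} environment $\Gamma{}$, its residual environment is empty, so the lemma yields closed derivations $\vdash{} e[\vec{v}] : \sigma{}$ and $\vdash{} e[\vec{v}'] : \sigma{}$ together with the equations
\[ \llbracket e[\vec{v}] \rrbracket  = \llbracket e \rrbracket (\llbracket \vec{v} \rrbracket ) \qquad\text{and}\qquad \llbracket e[\vec{v}'] \rrbracket  = \llbracket e \rrbracket (\llbracket \vec{v}' \rrbracket ), \]
where $\llbracket e \rrbracket  : \llbracket \Gamma{} \rrbracket  \rightarrow{} \llbracket \sigma{} \rrbracket _\perp{}$ is the Kleisli arrow interpreting the derivation of $e$. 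Here it is crucial that a substitution maps variables to \emph{values}: by the preceding lemma every value denotes a pure element $\eta{}(x)$, so $\llbracket \vec{v} \rrbracket $ is a genuine point of $\llbracket \Gamma{} \rrbracket $ (rather than of $\llbracket \Gamma{} \rrbracket _\perp{}$), and plugging it into $\llbracket e \rrbracket $ raises no order-of-evaluation subtleties. This is exactly why the substitution lemma transfers from the pure to the monadic setting without change.

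Second, I would observe that $\llbracket e \rrbracket $ is, by construction, a morphism of $\MetCPO$: its definition is built up from combinators---forcing $t$, scaling, the tensor and product structures, currying and application, the fixed-point combinator of \Cref{lem:metcpo-kleene}, and the fold/unfold isomorphisms $i_\alpha{}$ supplied by algebraic compactness---each of which was shown to be continuous and non-expansive earlier in the paper. In particular $\llbracket e \rrbracket $ is non-expansive, so applying it to the two environments gives
\[ d_{\llbracket \sigma{} \rrbracket _\perp{}}(\llbracket e \rrbracket (\llbracket \vec{v} \rrbracket ), \llbracket e \rrbracket (\llbracket \vec{v}' \rrbracket )) \leq{} d_{\llbracket \Gamma{} \rrbracket }(\llbracket \vec{v} \rrbracket , \llbracket \vec{v}' \rrbracket ), \]
and combining this with the substitution equations above yields the claim.

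The genuinely hard work lies not in this theorem but in justifying the second step: the non-expansiveness of $\llbracket e \rrbracket $ is a standing invariant of the whole model, and its validity rests on the cumulative development of \Cref{sec:metric-cpos}---that $\MetCPO$ and $\MetCPO_\perp{}$ are (monoidal) closed $\CPO$-categories (\Cref{lem:metcpo-exp}), that the fixed-point combinator respects the metric, and that $\MetCPO_\perp{}$ is algebraically compact so that the recursive-type isomorphisms $i_\alpha{}$ exist and are themselves non-expansive. Given all of that, the only point demanding care within the proof itself is confirming the monadic substitution lemma; once it is in place, metric preservation is an immediate consequence of non-expansiveness, just as in the terminating fragment.
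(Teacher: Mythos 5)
Your proposal is correct and follows exactly the route the paper intends: the paper gives no explicit proof of this theorem, relying on the remark that the Core Fuzz lemmas (in particular substitution) ``generalize without difficulty,'' after which the bound is immediate from the non-expansiveness of the Kleisli arrow $\llbracket e \rrbracket : \llbracket \Gamma{} \rrbracket \rightarrow{} \llbracket \sigma{} \rrbracket_\perp{}$. You also correctly identify why normalization and preservation drop out of the argument here (the statement is purely denotational, with termination handled separately by adequacy), which matches the paper's discussion following the theorem.
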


\begin{figure}
\begin{mathpar}
\hat{F}_{\sigma} : (\catname{Rel}_V^T)^{\star} \to \catname{Rel}_V \\

\inferrule
  { k \in \mathbb{R} }
  { (k, k) \in \hat{F}_\mathbb{R}(A, B) }

\inferrule
  { (a, v) \in \hat{F}_{\sigma}(A, B) }
  { ({\iota}_1(a), \binl v) \in \hat{F}_{{\sigma} + {\tau}}(A, B) } \\

\inferrule
  { }
  { ({\star}, ()) \in \hat{F}_1(A, B) }

\inferrule
  { (b, v) \in \hat{F}_{\tau}(A, B) }
  { ({\iota}_2(b), \binr v) \in \hat{F}_{{\sigma} + {\tau}}(A, B) } \\

\inferrule
  { \bullet \in \{{\otimes}, {\times}\} \and
    (a, v_a) \in \hat{F}_{\sigma}(A, B) \and
    (b, v_b) \in \hat{F}_{\tau}(A, B) }
  { ((a, b), (v_a, v_b)) \in \hat{F}_{{\sigma} \bullet {\tau}}(A, B) } \\

\inferrule
  { {\forall}(a, v) \in \hat{F}_{\sigma}(B, A).\,(f(a), e[x \mapsto v]) \in \hat{F}_{\tau}(A, B)^{\bot}
    \text{ (as in \labelcref{eq:logrel-lift})} }
  { (f, {\lambda}x.\,e) \in \hat{F}_{{\sigma} \multimap {\tau}}(A, B) } \\

\inferrule
  { (a, v) \in \hat{F}_{\sigma}(A, B) }
  { (a, {!}v) \in  \hat{F}_{!{\sigma}}(A, B) }

\inferrule
  { (a, v) \in B({\alpha}) }
  { (a, \bfold v) \in \hat{F}_{\alpha}(A, B) }
\end{mathpar}
\caption{Relational lifting of the $F_{\sigma}$ functors.  We implicitly use an object
  $(X, P) \in \catname{Rel}_V$ to denote the relation $P \subseteq X \times V$, so that
  $\hat{F}_{\sigma}(A, B)$ stands for a relation between $F_{\sigma}(R^TA, R^TB)$ and $V$.}
\label{fig:adequacy-logrel}
\end{figure}

Unlike the previous statement of metric preservation, this result doesn't allow
us to conclude anything about the termination behavior of the programs
$e[\vec{v}]$ and $e[\vec{v}']$. For that we need the following property, which
connects the domain-theoretic and operational views of termination:

\begin{lemma}[Adequacy]
  \label{lem:adequacy}
  Let $\vdash e : {\sigma}$ be a well-typed term. If $\llbracket e\rrbracket  \neq {\bot}$, there exists a value $\vdash v :
  {\sigma}$ such that $e \hookrightarrow v$.
\end{lemma}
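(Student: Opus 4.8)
The plan is to prove adequacy by the standard logical-relations argument, but with the type-indexed relation constructed as a \emph{minimal invariant} in a fibration of relations, following \citet{Pitts:1996}. Concretely, I would work in the category $\catname{Rel}_V$ whose objects over a metric CPO $X$ are relations $P \subseteq X \times V$ between points of $X$ and closed values $V$, with the forgetful functor $R : \catname{Rel}_V \rightarrow \MetCPO_\perp$ carrying $(X, P)$ to $X$. By the relational reading of \Cref{sec:domain-equations}, $R$ is an admissible $\CLat$-fibration, so by \Cref{lem:fibrations} it preserves and creates the colimits of embeddings used to build $\mu F_\Phi$; hence, via \Cref{thm:algebraic-compactness}, $\catname{Rel}_V$ (and so $\catname{Rel}_V^T$) is again algebraically compact.

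The first real step is to pass from value relations to computation relations. For a relation $P \subseteq X \times V$ I would define its \emph{observational lifting} $P^\perp \subseteq X_\perp \times E_\sigma$ (as in \labelcref{eq:logrel-lift}, where $E_\sigma$ denotes the closed terms of type $\sigma$) by declaring $(d, e) \in P^\perp$ exactly when $d = \perp$, or $d = \eta(x)$ and there is a value $v$ with $e \hookrightarrow v$ and $(x, v) \in P$. The relational lifting $\hat{F}_\Phi$ of \Cref{fig:adequacy-logrel} then assembles the clauses into a $\CPO$-functor $(\catname{Rel}_V^T)^\star \rightarrow \catname{Rel}_V^T$ lying over $F_\Phi$, using $P^\perp$ precisely in the arrow case. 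Applying algebraic compactness of $\catname{Rel}_V^T$ to $\hat{F}_\Phi$ yields a minimal invariant that projects, under $R$, onto the invariant $\mu F_\Phi$ of \labelcref{eq:recursive-type-interp}. Unpacking it gives a family of admissible relations $R_\sigma \subseteq \llbracket \sigma \rrbracket \times V$, one per type, satisfying the defining clauses of $\hat{F}_\sigma$ and, at type identifiers, the unwinding equation relating $R_\alpha$ to $R_{\Phi(\alpha)}$ along the isomorphism $i_\alpha$.

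With the relation in hand, I would prove the \emph{Fundamental Lemma}: for every derivation $\Gamma \vdash e : \sigma$ and every pair of an environment $\gamma \in \llbracket \Gamma \rrbracket$ and a substitution $\vec{v} : \Gamma$ related variable-by-variable through the $R$'s of their types, one has $(\llbracket e \rrbracket(\gamma), e[\vec{v}]) \in R_\sigma^\perp$. This goes by induction on the typing derivation, with one routine case per rule; the monadic structure and the forcing morphism $t$ of \labelcref{eq:forcing} make the lifting compose correctly, and $\bfold$/$\bunfold$ are handled by the type-variable clause together with $i_\alpha$ and $i_\alpha^{-1}$. The only delicate case is recursion, where I would use that each $R_\sigma^\perp$ is admissible (closed under least upper bounds of $\omega$-chains) and that $\perp$ is vacuously related, so that relatedness passes to the Kleene fixed point $\fix(f) = \bigsqcup_i f^i(\perp)$ of \Cref{thm:kleene}. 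Adequacy then follows by instantiating the Fundamental Lemma at the empty environment and empty substitution: for closed $\vdash e : \sigma$ we obtain $(\llbracket e \rrbracket, e) \in R_\sigma^\perp$, so if $\llbracket e \rrbracket \neq \perp$ the definition of $(-)^\perp$ forces $e \hookrightarrow v$ for some $v$, and \Cref{lem:core-preservation} gives $\vdash v : \sigma$.

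The main obstacle is the construction of $R_\sigma$ for recursive types: because $\alpha$ and $\Phi(\alpha)$ have the same size, the relation cannot be defined by structural recursion on types, and a naive approach is driven to step-indexing. The crux is therefore the fibrational argument — checking that $\hat{F}_\Phi$ is a genuine $\CPO$-functor over $F_\Phi$, that $R$ creates the embedding-colimits, and hence that the minimal invariant exists and comes with the induction principle that justifies the unwinding equation for $R_\alpha$. Getting the mixed variance right in the arrow clause (so that $\hat{F}_\sigma$ is contravariant in its domain argument, matching $F_{\sigma \multimap \tau}$) and confirming admissibility of the invariant relation, which powers the fixed-point case of the Fundamental Lemma, are the points that require the most care.
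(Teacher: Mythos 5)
Your proposal follows the paper's strategy almost exactly: the same category $\catname{Rel}_V$ of admissible relations over $\MetCPO_\perp$, the same observational lifting $(-)^\perp$, the same lifted functor $\hat{F}_\sigma$ from \Cref{fig:adequacy-logrel}, and the same fundamental lemma \labelcref{eq:adequacy-gen} proved by induction on typing derivations, with adequacy obtained by instantiating at the empty environment. The one place you genuinely diverge is in how the invariant relation is obtained. You propose to show that $\catname{Rel}_V^T$ is itself algebraically compact (via creation of embedding-colimits along the fibration $R$ and \Cref{thm:algebraic-compactness}) and to take the minimal invariant of $\hat{F}_\Phi$ upstairs, arguing it projects onto $\mu F_\Phi$. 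The paper instead invokes \Cref{thm:logical-relations} (Pitts' fixed-point-in-the-fiber theorem), which constructs $\mu\hat{F}_\Phi$ directly in the fiber over the \emph{given} $\mu F_\Phi$, satisfying $\mu\hat{F} = (i^{-1})^*\hat{F}(\mu\hat{F},\mu\hat{F})$ with the exact isomorphism $i$ used in the denotational semantics. Your route is workable, but note two costs it incurs: you must verify that $\omega$-chains of embeddings (not just plain colimits) lift along $R$, and the initial algebra you obtain upstairs projects to something only \emph{canonically isomorphic} to $\mu F_\Phi$, so the relation and its unwinding equation must be transported along that comparison isomorphism before they mesh with the interpretation of $\bfold$/$\bunfold$; Pitts' theorem is designed precisely to sidestep both issues, which is also why the mixed variance of $\hat{F}_{\sigma\multimap\tau}$ prevents any naive Knaster--Tarski argument in the fiber. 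One small inaccuracy: you single out ``the fixed-point case of the Fundamental Lemma'' as delicate, but full Fuzz has no primitive $\fix$ rule --- recursion is encoded through recursive types --- so the induction has no such case; admissibility of the relations (condition \labelcref{eq:logrel-admissible}) is still essential, but it is used in the construction of $\mu\hat{F}_\Phi$ rather than in a recursion case of the induction.
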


Adequacy implies that programs $e[\vec{v}]$ and $e[\vec{v}']$ in the statement
of \Cref{thm:metric-preservation} have the same termination behavior if
$d_{\llbracket {\Gamma}\rrbracket }(\vec{v}, \vec{v}') < \infty$. Indeed, supposing that the inputs are at
finite distance, metric preservation yields
\[ d_{\llbracket {\sigma}\rrbracket _{\bot}}(\llbracket e[\vec{v}]\rrbracket , \llbracket e[\vec{v}']\rrbracket ) < \infty. \] Now, imagine that $e[\vec{v}]$
terminates in a value $v$. By preservation, $\llbracket e[\vec{v}]\rrbracket  = \llbracket v\rrbracket  \neq {\bot}$. This
implies $\llbracket e[\vec{v}']\rrbracket  \neq {\bot}$, because $d(\llbracket v\rrbracket , {\bot}) = \infty$. Finally, by adequacy, we
find $v'$ such that $e[\vec{v}'] \hookrightarrow v'$.  The symmetric case follows similarly.

Following \citet{PlotkinGD:lecppf}, we prove \Cref{lem:adequacy} by
constructing, for each type ${\sigma}$, a logical relation $S_{\sigma} \subseteq \llbracket {\sigma}\rrbracket  \times V$ such that if
${\Gamma} \vdash e : {\sigma}$, $\vec{a} \in \llbracket {\Gamma}\rrbracket $, and $\vec{v} : {\Gamma}$, then
\begin{equation}
  \label{eq:adequacy-gen}
  (\vec{a}, \vec{v}) \in S_{\Gamma} \implies (\llbracket e\rrbracket (\vec{a}), e[\vec{v}]) \in S_{\sigma}^{\bot},
\end{equation}
where
\begin{align}
  (\vec{a}, \vec{v}) \in  S_{\Gamma} & \iff ({\forall} (x :_r {\tau}) \in {\Gamma}.\,(\vec{a}(x), \vec{v}(x)) \in S_{\tau}) \\
  \label{eq:logrel-lift}
  (a, e) \in S_{\sigma}^{\bot} & \iff (a \neq {\bot} \implies {\exists}v.\,e \hookrightarrow v \wedge (a, v) \in S_{\sigma}).
\end{align}
Adequacy follows from \labelcref{eq:adequacy-gen} by instantiating ${\Gamma}$ with the
empty environment. Our goal is to define $S_{\sigma}$ so
that \labelcref{eq:adequacy-gen} is strong enough to be established by a simple
induction on the typing derivation. This almost completely determines how $S_{\sigma}$
should be defined; it must satisfy equations including
\begin{align}
  \label{eq:logrel-ex-real}
  S_{\mathbb{R}}
  & = \{ (k, k) \mid k \in \mathbb{R} \} \\
  \label{eq:logrel-ex-with}
  S_{{\sigma} \with {\tau}}
  & = \{ ((a, b), \langle v_a, v_b\rangle ) \mid (a, v_a) \in S_{\sigma}, (b, v_b) \in S_{\tau} \} \\
  \label{eq:logrel-ex-mu}
  S_{\alpha}
  & = \{ (i_{\alpha}(a), \bfold v) \mid (a, v) \in S_{{\phi}({\alpha})} \}.
\end{align}
Once again, we cannot define $S$ by structural recursion,
since \labelcref{eq:logrel-ex-mu} expresses $S_{\alpha}$ in terms of $S_{{\phi}({\alpha})}$. To
overcome this circularity, we use a method due to Pitts~\cite[Theorem
4.16]{Pitts:1996}, originally stated in terms of his relational structures and
adapted here to $\CLat$-fibrations.

\begin{theorem}
  \label{thm:logical-relations}
  Let $\mathcal{D}$ be algebraically compact, $F : \mathcal{D}^{\star} \to \mathcal{D}$
  be a $\CPO$-functor, and $G : \mathcal{E} \to \mathcal{D}$ be an admissible
  $\CLat$-fibration.  Suppose we can lift $F$ to $\mathcal{E}$, in the sense
  that there exists a functor $\hat{F} : \mathcal{E}^{\star} \to \mathcal{E}$ such that
  the following diagram commutes:
  \begin{center}
    \begin{tikzcd}
      \mathcal{E}^{\star} \arrow{d}{G^{\star}} \arrow{r}{\hat{F}}
      & \mathcal{E} \arrow{d}{G} \\
      \mathcal{D}^{\star} \arrow{r}{F}
      & \mathcal{D}
    \end{tikzcd}
  \end{center}
  Suppose furthermore that the hom sets of $\mathcal{E}$ and $\mathcal{D}$ are
  pointed, and that $G$ preserves these least elements.  Then, we can construct
  ${\mu}\hat{F} \in \mathcal E_{{\mu}F}$ such that ${\mu}\hat{F} = (i^{-1})^*
  \hat{F}({\mu}\hat{F}, {\mu}\hat{F}),$ where $i : F({\mu}F, {\mu}F) \cong {\mu}F$ is the isomorphism
  given by algebraic compactness, as in \labelcref{eq:minimal-invariant}.
\end{theorem}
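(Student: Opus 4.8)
The plan is to construct $\mu\hat{F}$ as an \emph{invariant relation}, following \citet{Pitts:1996} but phrased for $\CLat$-fibrations: I combine the Knaster--Tarski theorem on the fibers of $G$ with the minimal-invariant property of the algebraically compact category $\mathcal{D}$.

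Since $G$ is a $\CLat$-fibration, each fiber $\mathcal{E}_X$ is a complete lattice; I work in $\mathcal{E}_{\mu{}F}$. The commuting square $G\hat{F} = F G^\star{}$ guarantees that $\hat{F}(R, S)$ lies over $F(\mu{}F, \mu{}F)$ whenever $R, S \in{} \mathcal{E}_{\mu{}F}$, so pulling back along $i^{-1} : \mu{}F \rightarrow{} F(\mu{}F, \mu{}F)$ gives a well-defined operator $\Theta : \mathcal{E}_{\mu{}F}^{op} \times{} \mathcal{E}_{\mu{}F} \rightarrow{} \mathcal{E}_{\mu{}F}$, $\Theta(R, S) \triangleq{} (i^{-1})^* \hat{F}(R, S)$. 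Because $\hat{F}$ is a functor on $\mathcal{E}^\star{}$ and inverse images are monotone, $\Theta$ is antitone in $R$ and monotone in $S$. A solution of the theorem is then exactly a fixed point $R = \Theta(R, R)$. To cope with the mixed variance, I pass to the complete lattice $\mathcal{E}_{\mu{}F}^{op} \times{} \mathcal{E}_{\mu{}F}$ and consider the monotone map $\Psi(R, S) \triangleq{} (\Theta(S, R), \Theta(R, S))$. By Knaster--Tarski it has a least fixed point $(R^-, R^+)$, so that $R^- = \Theta(R^+, R^-)$ and $R^+ = \Theta(R^-, R^+)$; setting $\mu\hat{F} \triangleq{} R^+$ will finish the proof once I show $R^- = R^+$.

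The collapse rests on the minimal-invariant property available in any algebraically compact category whose hom sets are pointed and satisfy $f \circ{} \perp{} = \perp{}$: the identity is the least fixed point of $e \mapsto{} i \circ{} F(e, e) \circ{} i^{-1}$, so $\mathrm{id}_{\mu{}F} = \bigsqcup_n e_n$ with $e_0 = \perp{}$ and $e_{n+1} = i \circ{} F(e_n, e_n) \circ{} i^{-1}$. I prove by induction that each $e_n$ simultaneously witnesses $e_n : R^- \supset{} R^+$ and $e_n : R^+ \supset{} R^-$ — both directions must be carried together, since the contravariant slot of $\hat{F}$ turns one inclusion into the other. The base case holds because $G$ preserves least morphisms, so $\perp{}^* R^+$ is the top of the fiber and $R^- \leq{} \perp{}^* R^+$. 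For the step, I feed $e_n : R^+ \supset{} R^-$ into the contravariant argument and $e_n : R^- \supset{} R^+$ into the covariant argument of $\hat{F}$; functoriality produces a morphism over $F(e_n, e_n)$ between the two instances of $\hat{F}$, and composing with $i$ and $i^{-1}$ and rewriting through the fixed-point equations and the inverse-image correspondence \labelcref{eq:inverse-image} yields $e_{n+1} : R^- \supset{} R^+$, and symmetrically the other direction. Finally, admissibility of $R^+$ — closure of the $G$-image of the relevant hom sets under limits of $\omega{}$-chains — lets me transport the two inclusions along the chain, giving $\mathrm{id} : R^- \supset{} R^+$ and $\mathrm{id} : R^+ \supset{} R^-$, that is $R^- \leq{} R^+ \leq{} R^-$; hence $R^- = R^+$ and $\mu\hat{F} = (i^{-1})^* \hat{F}(\mu\hat{F}, \mu\hat{F})$.

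The main obstacle is precisely this inductive collapse: lining up the variances so that a single chain $e_n$ witnesses both $\supset{}$-inclusions at once, and checking that admissibility is exactly the closure property needed to pass the relation across the limit $\bigsqcup_n e_n = \mathrm{id}_{\mu{}F}$. The remaining ingredients — well-definedness of $\Theta$, its monotonicity, and the application of Knaster--Tarski — are routine.
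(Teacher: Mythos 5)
Your argument is correct and is exactly the intended one: the paper gives no proof of this theorem, deferring to Pitts's Theorem 4.16, and your construction --- Knaster--Tarski on the complete lattice $\mathcal{E}_{\mu{}F}^{op} \times{} \mathcal{E}_{\mu{}F}$ for the symmetrized operator, followed by the inductive collapse of $(R^-, R^+)$ along the minimal-invariant chain $e_{n+1} = i \circ{} F(e_n, e_n) \circ{} i^{-1}$ with $\bigsqcup_n e_n = \mathrm{id}$, using pointedness of hom sets for the base case and admissibility to pass to the limit --- is precisely that proof transcribed into the language of $\CLat$-fibrations. The only cosmetic remark is that, because the source and target relations are swapped in the inductive step, each of the two $\supset{}$-inclusions actually propagates from itself alone (and one of them already follows from leastness plus the symmetry of $\Psi$), so carrying both through the induction is sufficient but slightly more than necessary.
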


Analogously to our interpretation of types, we will use $\hat{F}$ to express the
logical relations $S_{\alpha}$ as the solution of fixed-point equations, and then
define the other logical relations $S_{\sigma}$ in terms of these solutions.  To apply
\Cref{thm:logical-relations}, we use the following category $\catname{Rel}_V$.
\begin{enumerate}
\item Objects are pairs $(X, P)$, where $X$ is a metric CPO, and $P \subseteq X \times V$ is
  a relation such that
  \begin{align}
    \label{eq:logrel-admissible}
    ({\forall}i.\,(x_i, v) \in P) \implies \left(\bigsqcup_i x_i, v\right) \in P,
  \end{align}
  for all ${\omega}$-chains $(x_i)$ in $X$ and $v \in V$.
\item Arrows $(X, P) \to (Y, Q)$ are continuous, non-expansive functions $f : X \to
  Y_{\bot}$ such that, whenever $(x, v) \in P$ and $f(x) \neq {\bot}$, we have $(f(x), v) \in Q$.
\end{enumerate}
We let $R$ denote the forgetful functor $\catname{Rel}_V \to \MetCPO_{\bot}$; this
results in an admissible $\CLat$-fibration. Intersections are given by
intersections of relations, and the inverse image of $(X, P) \in \catname{Rel}_V$
along $f \in \MetCPO_{\bot}(Y,X)$ is given by
\[ f^*(X, P) \triangleq (Y, \{(x,v) \mid (f(x),v) \in P \vee f(x) = {\bot}\}). \]
Furthermore, both $\MetCPO_{\bot}$ and $\catname{Rel}_V$ have pointed hom sets, and
$R$ preserves least elements.

We build the logical relations $(S_{\alpha} \subseteq {\mu}F_{\phi}({\alpha}) \times V)_{{\alpha} \in T}$ by building an
object $({\mu}F_{\phi}, S_{\alpha})_{{\alpha} \in T}$ in the fiber of $R^T$ over ${\mu}F_{\phi} \in \MetCPO_{\bot}^T$.
Since $R^T$ is also an admissible $\CLat$-fibration, we just need to lift $F_{\phi}$
across $R^T$ and apply \Cref{thm:logical-relations}. It suffices to find, for
each type ${\sigma}$, a functor $\hat{F}_{\sigma} : (\catname{Rel}_V^T)^{\star} \to \catname{Rel}_V$
such that
\begin{align}
  \label{eq:lifting-individual-type}
  R \circ \hat{F}_{\sigma} = F_{\sigma} \circ R^T,
\end{align}
and then set $\hat{F}_{\phi}(A, B)({\alpha}) \triangleq \hat{F}_{{\phi}({\alpha})}(A, B)$; the complete
definition is in \Cref{fig:adequacy-logrel}.  With the fixed point ${\mu}\hat{F}_{\phi}$,
we can finally define the logical relations $S_{\sigma}$ as (the relation component of)
$\hat{F}_{\sigma}({\mu}\hat{F}_{\phi}, {\mu}\hat{F}_{\phi})$.  With the definition in
\Cref{fig:adequacy-logrel}, and the characterization of ${\mu}\hat{F}_{\phi}$ in
\Cref{thm:logical-relations}, we can validate all the properties needed for
proving \labelcref{eq:adequacy-gen} (and thus \Cref{lem:adequacy}) by induction,
including \labelcref{eq:logrel-ex-real,eq:logrel-ex-with,eq:logrel-ex-mu}.

\begin{remark}
  Alternatively, we could have characterized $\catname{Rel}_V$ reusing the
  machinery of \Cref{lem:fibrations}, specifically by pulling back $\SubCPO_{\bot}$,
  the category of admissible subobjects of $\CPO_{\bot}$, as depicted below.
\begin{center}
  \begin{tikzcd}
    \catname{Rel}_V \arrow{rr}{} \arrow{d}{R} &  & \SubCPO_{\bot} \arrow{d}{} \\
    \MetCPO_{\bot} \arrow{r}{q_\bot} & \CPO_{\bot} \arrow{r}{(-) \times V} & \CPO_{\bot}
  \end{tikzcd}
\end{center}
In this diagram, by $I \times V$ we mean the \emph{coproduct} of $V$-many copies of
$I$ in $\CPO_{\bot}$, which is inherited from $\CPO$ via the Kleisli adjunction.
\end{remark}

\subsection{A Remark on Recursive Functions}
\label{sec:recursive-functions}

Now that we have interpreted the full version of Fuzz, we show how our semantics
gives a different perspective on fixed points.  Using a standard encoding based
on recursive types, \citet{Reed:2010} showed how to type the call-by-value $Y$
combinator in Fuzz as follows:
\begin{align*}
  Y & : {{!_\infty}({!_\infty} ({\tau} \multimap {\sigma}) \multimap {\tau} \multimap {\sigma}) \multimap {\tau} \multimap {\sigma}} \\
  Y & \triangleq {\lambda}F. \blet f : {\alpha} = {\lambda}fx.\,F\,(f\,f)\,x \bin f\,f,
\end{align*}
where ${\alpha}$ is a recursive type defined as ${!_\infty} {\alpha} \multimap {\tau} \multimap {\sigma}$.  (To improve
readability, we have elided the wrapping and unwrapping of recursive and scaled
types, and we use a derived $\blet$ form.)  With this combinator, we can
construct the fixed-point expression $\bfix f.\,e \triangleq Y ({\lambda}f.\,e)$, and derive a
corresponding typing rule.
\begin{mathpar}
  \inferrule
    { {\Gamma}, f :_\infty {\tau} \multimap {\sigma} \vdash e : {\tau} \multimap {\sigma} }
    { \infty{\Gamma} \vdash \bfix f.\,e : {\tau} \multimap {\sigma} }
\end{mathpar}

This rule makes it possible to define functions of finite sensitivity by
recursion.  It places little restrictions on how the recursive function calls
itself, since it allows the body $e$ to be infinitely sensitive on $f$; however,
it also requires scaling the typing environment by infinity.  \citet{Reed:2010}
justified this by arguing that ``we can't [...]  establish any bound on how
sensitive the overall function is from just one call to it''.

Somewhat surprisingly, \Cref{lem:metcpo-kleene} allows us to define fixed points
directly on metric CPOs with a more precise sensitivity than the one above.
This suggests that we might be able to improve the encoding of $Y$ if we assume
that its argument $F$ is a finitely sensitive function (i.e., if the body $e$ is
finitely sensitive on $f$). After some thought, we obtain
\begin{align*}
  Y_r & : {{!_{1/(1 - r)}}({!_r} ({\tau} \multimap {\sigma}) \multimap {\tau} \multimap {\sigma}) \multimap {\tau} \multimap {\sigma}} \\
  Y_r & \triangleq {\lambda}F. \blet f : {\alpha}_r = {\lambda}fx.\,F\,(f\,f)\,x \bin f\,f,
\end{align*}
where $r < 1$, and ${\alpha}_r$ is now defined as $!_{r/(1-r)} {\alpha}_r \multimap {\tau} \multimap {\sigma}$.  This
leads to the following typing rule:
\begin{mathpar}
  \inferrule
    { {\Gamma}, f :_r {\tau} \multimap {\sigma} \vdash e : {\tau} \multimap {\sigma} \and r < 1 }
    { \frac{1}{1 - r}{\Gamma} \vdash {\bfix}_r f.\,e : {\tau} \multimap {\sigma} }
\end{mathpar}
where $\bfix_r f.\,e \triangleq Y_r ({\lambda}f.\,e)$. We see that the scaling factor $1/(1 - r)$ is unbounded as $r$ approaches $1$,
when we recover the original rule.

One situation where this fixed point can be useful is for typing functions where
recursive calls are guarded by a scaling factor smaller than $1$.  For instance,
suppose that we define a type of lists with exponentially decaying distances:
\[ \rlist {\tau} \triangleq () + {\tau} \otimes {{!_r} \rlist {\tau}} \]
If $r < 1$, we can type the $map$ function with a finite sensitivity on its
function argument:
\begin{align*}
  map & : {{!_{1/(1 - r)}}} ({\tau} \multimap {\sigma}) \multimap \rlist {\tau} \multimap \rlist {\sigma} \\
  map & = {\lambda}f.\,{\bfix}_r m.\,{\lambda}l.\,\\
      & \ \ \ \ \ \ \ \ \ \ \ \ \bcase l \bof \\
      & \ \ \ \ \ \ \ \ \ \ \ \ \mid \binl () \implies \binl () \\
      & \ \ \ \ \ \ \ \ \ \ \ \ \mid \binr (x, l') \implies \binr (f\,x, m\,l')
\end{align*}
This stands in contrast to the typical $map$ function, which has infinite
sensitivity on its function argument. Exploring applications of this new, more
precise type for the fixed point is an intriguing direction for future work.

\section{Related Work}
\label{sec:related-work}

Since the seminal works of \citet{DBLP:journals/tcs/ArnoldN80}, and
\citet{DBLP:conf/stoc/DeBakkerZ82}, several authors have used metric spaces as a
foundation for denotational semantics. The technical motivations are often
similar to those for order-based structures, such as CPOs, since the
\emph{Banach fixed-point theorem} yields a natural interpretation of recursive
functions and types.

A theme in many of these approaches is the use of \emph{ultrametric} spaces,
where the triangle inequality is replaced with the stronger variant
\[ d(x, z) \leq \max(d(x, y), d(y, z)). \] %
Typically, ultrametrics express that two objects (e.g., execution traces, sets
of terms, etc.) are equal up to a finite approximation: the bigger the
approximation, the closer the two objects are. For instance, we can define an
ultrametric on the set of sequences of program states by posing $d(\vec{s}_1,
\vec{s}_2) = 2^{-c(\vec{s}_1, \vec{s}_2)}$, where $c(\vec{s}_1, \vec{s}_2)$ is the
length of the largest common prefix of $\vec{s}_1$ and $\vec{s}_2$.

Ultrametrics on traces and trees appear in much of the earlier work on the
subject, where they can model language features such as non-determinism and
concurrency~\citep{DBLP:conf/stoc/DeBakkerZ82,DBLP:conf/mfps/AmericaR87,DBLP:journals/ipl/Majster-Cederbaum88,DBLP:journals/iandc/CederbaumZ91}.
(See \citet{DBLP:journals/tcs/Breugel01} for a good introduction to the subject,
and~\citet{Baier1994,DBLP:journals/tcs/Majster-CederbaumZ94} for a comparison
between the metric approaches and their order-based counterparts.) A similar use
of ultrametric spaces appears in a denotational model of PCF given by
\citet{Escardo98ametric}, where the metric structure describes
intensional temporal aspects of PCF programs, and its extensional collapse
recovers the standard Scott model. Such intensional uses contrast
with our metric CPOs, where the metrics describe mostly extensional aspects of
programs.



A different use of ultrametrics emerged for modeling recursive types in
functional languages, starting with~\citet{DBLP:conf/popl/MacQueenPS84}, and
continuing
with~\citet{DBLP:conf/lics/AbadiPP89,DBLP:conf/lics/AbadiP90,DBLP:journals/iandc/Amadio91};
see also \citet{DBLP:conf/tlca/Chroboczek01} for a similar approach based on
game semantics. An interesting aspect of these models is that the metric
structure is often used in conjunction with the CPO structure. These approaches
have been extended recently to model more advanced language features
(e.g. references), providing a semantic framework for investigating guardedness,
step-indexing and Kripke possible-world semantics. Works in this direction
include those
by~\citet{DBLP:conf/fossacs/BirkedalST09,DBLP:journals/tcs/BirkedalST10,DBLP:conf/lics/BirkedalMSS11,DBLP:conf/fossacs/SchwinghammerBS11}.
In these works, the metric structure expresses convergence properties that
underlie syntactic structures used in languages with guarded definitions, e.g.
Nakano's recursion modality~\citep{DBLP:conf/lics/Nakano00}.  A similar approach
has also been used by~\citet{DBLP:conf/lics/KrishnaswamiB11} in the context of
reactive and event-based programming, which models interactive programs as
operating on streams; stream functions are contractive maps in their model. Our
model differs from these works, e.g. contractivity plays a different role and
our requirement on the domain structure is a sort of compatibility. However we
plan to explore whether our model can be used for similar goals in future work.

In a separate line of work, unrelated to ultrametrics, \citet{Kozen:1981} uses
Banach lattices---a special kind of metric space---and non-expansive linear
operators between them to model probabilistic programs. Spaces of subprobability
distributions over a set of values are modeled as Banach lattices. Although this
is similar in spirit to our use of metrics, there is still a crucial conceptual
difference: Kozen uses non-expansiveness to model the loss of mass of a
distribution as a program executes, due to the possibility of non-termination.
Indeed, he shows how non-expansiveness in this setting corresponds to the usual
monotonicity of domain-theoretic functions.

\section{Conclusion}
\label{sec:conclusion}

In this work we have introduced a domain-theoretic structure for studying
program sensitivity in higher-order languages with recursive types and
non-termination.  We have shown the effectiveness of our approach by
interpreting the deterministic fragment of Fuzz~\citep{Reed:2010}.

As future work, we plan to extend our approach to cover the probability monad of
Fuzz. While metric interpretations of probabilities are widespread in the
programming-languages literature,
e.g.~\citep{DBLP:journals/entcs/BaierK97,DBLP:journals/tcs/VinkR99,DBLP:journals/entcs/HartogVB00,DBLP:conf/lics/DesharnaisJGP02,Kozen:1981},
we are not aware of any similar work that models the metric
of~\citet{Reed:2010}, used for reasoning about differential privacy.
Interpreting this metric could also hint at how to interpret a larger class of
metric-like functions called $f$-\emph{divergences}~\citep{csiszarS04}. An
orthogonal direction is to study an interpretation of
\emph{DFuzz}~\citep{DBLP:conf/popl/GaboardiHHNP13}, a dependently typed version
of Fuzz for proving differential privacy for programs whose privacy depends on
values provided at runtime.  This may require an extension of our framework to
accommodate their use of sized types.

Metric CPOs could also give meaning to the program analysis studied
by~\citet{DBLP:conf/sigsoft/ChaudhuriGLN11,DBLP:journals/cacm/ChaudhuriGL12}.
Their notion of \emph{robustness} is analogous to the notion of sensitivity we
consider in this paper. However, their program analysis is based on previous
work by the same authors for analyzing \emph{program
  continuity}~\citep{DBLP:conf/popl/ChaudhuriGL10}. Considering restrictions or
relaxations of metric CPOs for describing these notions of continuity and
robustness is also an interesting avenue for future work.

\section*{Acknowledgments}

We thank the anonymous reviewers for their detailed comments, which improved
earlier versions of this work. This work was partially supported by NSF grants
TC-1065060, TWC-1513694, TWC-1565365 and TWC-1513854, a grant from the Simons Foundation
($\#360368$ to Justin Hsu), and JSPS KAKENHI Grant Number JP15K00014 (to Shin-ya Katsumata).

\bibliographystyle{abbrvnat}
\bibliography{header.bib,refs.bib}

\newcommand{\SortNoop}[1]{}
\begin{thebibliography}{41}
\providecommand{\natexlab}[1]{#1}
\providecommand{\url}[1]{\texttt{#1}}
\expandafter\ifx\csname urlstyle\endcsname\relax
  \providecommand{\doi}[1]{doi: #1}\else
  \providecommand{\doi}{doi: \begingroup \urlstyle{rm}\Url}\fi

\bibitem[Abadi and Plotkin(1990)]{DBLP:conf/lics/AbadiP90}
M.~Abadi and G.~D. Plotkin.
\newblock \href{http://dx.doi.org/10.1109/LICS.1990.113761}{A {PER} model of
  polymorphism and recursive types}.
\newblock In \emph{{IEEE} {S}ymposium on {L}ogic in {C}omputer {S}cience
  ({LICS}), Philadelphia, Pennsylvania}, pages 355--365, 1990.

\bibitem[Abadi et~al.(1989)Abadi, Pierce, and
  Plotkin]{DBLP:conf/lics/AbadiPP89}
M.~Abadi, B.~C. Pierce, and G.~D. Plotkin.
\newblock \href{http://dx.doi.org/10.1109/LICS.1989.39176}{Faithful ideal
  models for recursive polymorphic types}.
\newblock In \emph{{IEEE} {S}ymposium on {L}ogic in {C}omputer {S}cience
  ({LICS}), Asilomar, California}, pages 216--225, 1989.

\bibitem[Amadio(1991)]{DBLP:journals/iandc/Amadio91}
R.~M. Amadio.
\newblock \href{http://dx.doi.org/10.1016/0890-5401(91)90074-C}{Recursion over
  realizability structures}.
\newblock \emph{Information and Computation}, 91\penalty0 (1):\penalty0 55--85,
  1991.

\bibitem[America and Rutten(1987)]{DBLP:conf/mfps/AmericaR87}
P.~America and J.~J. M.~M. Rutten.
\newblock \href{http://dx.doi.org/10.1007/3-540-19020-1_13}{Solving reflexive
  domain equations in a category of complete metric spaces}.
\newblock In \emph{Workshop on the Mathematical Foundations of Programming
  Semantics (MFPS), New Orleans, Louisiana}, volume 298 of \emph{Lecture Notes
  in Computer Science}, pages 254--288. Springer-Verlag, 1987.

\bibitem[Arnold and Nivat(1980)]{DBLP:journals/tcs/ArnoldN80}
A.~Arnold and M.~Nivat.
\newblock \href{http://dx.doi.org/10.1016/0304-3975(80)90045-6}{Metric
  interpretations of infinite trees and semantics of non-deterministic
  recursive programs}.
\newblock \emph{Theoretical Computer Science}, 11\penalty0 (2):\penalty0
  181--205, 1980.

\bibitem[Baier and Kwiatkowska(1997)]{DBLP:journals/entcs/BaierK97}
C.~Baier and M.~Z. Kwiatkowska.
\newblock \href{http://dx.doi.org/10.1016/S1571-0661(05)80465-7}{Domain
  equations for probabilistic processes}.
\newblock \emph{Electronic Notes in Theoretical Computer Science}, 7:\penalty0
  34--54, 1997.

\bibitem[Baier and Majster{-}Cederbaum(1994)]{Baier1994}
C.~Baier and M.~E. Majster{-}Cederbaum.
\newblock \href{http://dx.doi.org/10.1016/0304-3975(94)00046-8}{Denotational
  semantics in the {CPO} and metric approach}.
\newblock \emph{Theoretical Computer Science}, 135\penalty0 (2):\penalty0
  171--220, 1994.

\bibitem[Barthe et~al.(2012)Barthe, K{\"{o}}pf, Olmedo, and
  Zanella~B{\'{e}}guelin]{BartheKOB12}
G.~Barthe, B.~K{\"{o}}pf, F.~Olmedo, and S.~Zanella~B{\'{e}}guelin.
\newblock \href{http://doi.acm.org/10.1145/2103656.2103670}{Probabilistic
  relational reasoning for differential privacy}.
\newblock In \emph{{ACM} {SIGPLAN--SIGACT} {S}ymposium on {P}rinciples of
  {P}rogramming {L}anguages ({POPL}), Philadelphia, Pennsylvania}, pages
  97--110, 2012.

\bibitem[Barthe et~al.(2015)Barthe, Gaboardi, Gallego~Arias, Hsu, Roth, and
  Strub]{BGGHRS15}
G.~Barthe, M.~Gaboardi, E.~J. Gallego~Arias, J.~Hsu, A.~Roth, and P.-Y. Strub.
\newblock \href{http://arxiv.org/abs/1407.6845}{Higher-order approximate
  relational refinement types for mechanism design and differential privacy}.
\newblock In \emph{{ACM} {SIGPLAN--SIGACT} {S}ymposium on {P}rinciples of
  {P}rogramming {L}anguages ({POPL}), Mumbai, India}, pages 55--68, 2015.

\bibitem[Birkedal et~al.(2009)Birkedal, St{\o}vring, and
  Thamsborg]{DBLP:conf/fossacs/BirkedalST09}
L.~Birkedal, K.~St{\o}vring, and J.~Thamsborg.
\newblock \href{http://dx.doi.org/10.1007/978-3-642-00596-1_32}{Realizability
  semantics of parametric polymorphism, general references, and recursive
  types}.
\newblock In \emph{International Conference on Foundations of Software Science
  and Computation Structures {(FoSSaCS)}, York, England}, volume 5504 of
  \emph{Lecture Notes in Computer Science}, pages 456--470. Springer-Verlag,
  2009.

\bibitem[Birkedal et~al.(2010)Birkedal, St{\o}vring, and
  Thamsborg]{DBLP:journals/tcs/BirkedalST10}
L.~Birkedal, K.~St{\o}vring, and J.~Thamsborg.
\newblock \href{http://dx.doi.org/10.1016/j.tcs.2010.07.010}{The
  category-theoretic solution of recursive metric-space equations}.
\newblock \emph{Theoretical Computer Science}, 411\penalty0 (47):\penalty0
  4102--4122, 2010.

\bibitem[Birkedal et~al.(2011)Birkedal, M{\o}gelberg, Schwinghammer, and
  St{\o}vring]{DBLP:conf/lics/BirkedalMSS11}
L.~Birkedal, R.~E. M{\o}gelberg, J.~Schwinghammer, and K.~St{\o}vring.
\newblock \href{http://dx.doi.org/10.1109/LICS.2011.16}{First steps in
  synthetic guarded domain theory: Step-indexing in the topos of trees}.
\newblock In \emph{{IEEE} {S}ymposium on {L}ogic in {C}omputer {S}cience
  ({LICS}), Toronto, Ontario}, pages 55--64, 2011.

\bibitem[Chaudhuri et~al.(2010)Chaudhuri, Gulwani, and
  Lublinerman]{DBLP:conf/popl/ChaudhuriGL10}
S.~Chaudhuri, S.~Gulwani, and R.~Lublinerman.
\newblock \href{http://doi.acm.org/10.1145/1706299.1706308}{Continuity analysis
  of programs}.
\newblock In \emph{{ACM} {SIGPLAN--SIGACT} {S}ymposium on {P}rinciples of
  {P}rogramming {L}anguages ({POPL}), Madrid, Spain}, pages 57--70, 2010.

\bibitem[Chaudhuri et~al.(2011)Chaudhuri, Gulwani, Lublinerman, and
  NavidPour]{DBLP:conf/sigsoft/ChaudhuriGLN11}
S.~Chaudhuri, S.~Gulwani, R.~Lublinerman, and S.~NavidPour.
\newblock \href{http://doi.acm.org/10.1145/2025113.2025131}{Proving programs
  robust}.
\newblock In \emph{Joint Meeting of the European Software Engineering
  Conference and the {ACM} {SIGSOFT} Symposium on the Foundations of Software
  Engineering (ESEC/FSE), Szeged, Hungary}, pages 102--112, 2011.

\bibitem[Chaudhuri et~al.(2012)Chaudhuri, Gulwani, and
  Lublinerman]{DBLP:journals/cacm/ChaudhuriGL12}
S.~Chaudhuri, S.~Gulwani, and R.~Lublinerman.
\newblock \href{http://doi.acm.org/10.1145/2240236.2240262}{Continuity and
  robustness of programs}.
\newblock \emph{Communications of the {ACM}}, 55\penalty0 (8):\penalty0
  107--115, 2012.

\bibitem[Chroboczek(2001)]{DBLP:conf/tlca/Chroboczek01}
J.~Chroboczek.
\newblock \href{http://dx.doi.org/10.1007/3-540-45413-6_9}{Subtyping recursive
  games}.
\newblock In \emph{International Conference on Typed Lambda Calculi and
  Applications (TLCA), Krak{\'{o}}w, Poland}, volume 2044 of \emph{Lecture
  Notes in Computer Science}, pages 61--75. Springer-Verlag, 2001.

\bibitem[Csiszár and Shields(2004)]{csiszarS04}
I.~Csiszár and P.~C. Shields.
\newblock \href{http://dx.doi.org/10.1561/0100000004}{Information theory and
  statistics: A tutorial}.
\newblock \emph{Foundations and Trends® in Communications and Information
  Theory}, 1\penalty0 (4):\penalty0 417--528, 2004.

\bibitem[de~Bakker and Zucker(1982)]{DBLP:conf/stoc/DeBakkerZ82}
J.~W. de~Bakker and J.~I. Zucker.
\newblock \href{http://doi.acm.org/10.1145/800070.802188}{Denotational
  semantics of concurrency}.
\newblock In \emph{{ACM} {SIGACT} {S}ymposium on {T}heory of {C}omputing
  (STOC), San Francisco, California}, pages 153--158, 1982.

\bibitem[de~Vink and Rutten(1999)]{DBLP:journals/tcs/VinkR99}
E.~P. de~Vink and J.~J. M.~M. Rutten.
\newblock \href{http://dx.doi.org/10.1016/S0304-3975(99)00035-3}{Bisimulation
  for probabilistic transition systems: {A} coalgebraic approach}.
\newblock \emph{Theoretical Computer Science}, 221\penalty0 (1--2):\penalty0
  271--293, 1999.

\bibitem[den Hartog et~al.(2000)den Hartog, de~Vink, and
  de~Bakker]{DBLP:journals/entcs/HartogVB00}
J.~den Hartog, E.~P. de~Vink, and J.~W. de~Bakker.
\newblock \href{http://dx.doi.org/10.1016/S1571-0661(05)80038-6}{Metric
  semantics and full abstractness for action refinement and probabilistic
  choice}.
\newblock \emph{Electronic Notes in Theoretical Computer Science}, 40:\penalty0
  72--99, 2000.

\bibitem[Desharnais et~al.(2002)Desharnais, Jagadeesan, Gupta, and
  Panangaden]{DBLP:conf/lics/DesharnaisJGP02}
J.~Desharnais, R.~Jagadeesan, V.~Gupta, and P.~Panangaden.
\newblock \href{http://dx.doi.org/10.1109/LICS.2002.1029849}{The metric
  analogue of weak bisimulation for probabilistic processes}.
\newblock In \emph{{IEEE} {S}ymposium on {L}ogic in {C}omputer {S}cience
  ({LICS}), Copenhagen, Denmark}, pages 413--422, 2002.

\bibitem[Dwork et~al.(2006)Dwork, McSherry, Nissim, and Smith]{DMNS06}
C.~Dwork, F.~McSherry, K.~Nissim, and A.~D. Smith.
\newblock \href{http://dx.doi.org/10.1007/11681878_14}{Calibrating noise to
  sensitivity in private data analysis}.
\newblock In \emph{{IACR} {T}heory of {C}ryptography {C}onference (TCC), New
  York, New York}, volume 3876 of \emph{Lecture Notes in Computer Science},
  pages 265--284. Springer-Verlag, 2006.

\bibitem[Escard\'o(1999)]{Escardo98ametric}
M.~H. Escard\'o.
\newblock \href{http://www.cs.bham.ac.uk/~mhe/papers/metricpcf.pdf}{A metric
  model of {PCF}}, 1999.
\newblock Workshop on Realizability Semantics and Applications, Trento, Italy.

\bibitem[Fiore and Plotkin(1994)]{Fiore:1994}
M.~P. Fiore and G.~D. Plotkin.
\newblock An axiomatization of computationally adequate domain theoretic models
  of {FPC}.
\newblock In \emph{{IEEE} {S}ymposium on {L}ogic in {C}omputer {S}cience
  ({LICS}), Paris, France}, pages 92--102, 1994.

\bibitem[Freyd(1990)]{Freyd1991}
P.~Freyd.
\newblock \href{http://dx.doi.org/10.1007/BFb0084215}{Algebraically complete
  categories}.
\newblock In \emph{International Category Theory Conference (CT), Como, Italy},
  volume 1488 of \emph{Lecture Notes in Mathematics}, pages 95--104.
  Springer-Verlag, 1990.
\newblock ISBN 978-3-540-46435-8.

\bibitem[Gaboardi et~al.(2013)Gaboardi, Haeberlen, Hsu, Narayan, and
  Pierce]{DBLP:conf/popl/GaboardiHHNP13}
M.~Gaboardi, A.~Haeberlen, J.~Hsu, A.~Narayan, and B.~C. Pierce.
\newblock \href{http://doi.acm.org/10.1145/2429069.2429113}{Linear dependent
  types for differential privacy}.
\newblock In \emph{{ACM} {SIGPLAN--SIGACT} {S}ymposium on {P}rinciples of
  {P}rogramming {L}anguages ({POPL}), Rome, Italy}, pages 357--370, 2013.

\bibitem[Haeberlen et~al.(2011)Haeberlen, Pierce, and Narayan]{HaeberlenPN11}
A.~Haeberlen, B.~C. Pierce, and A.~Narayan.
\newblock
  \href{http://static.usenix.org/events/sec11/tech/full_papers/Haeberlen.pdf}{Differential
  privacy under fire}.
\newblock In \emph{{USENIX} Security Symposium, San Francisco, USA}, 2011.

\bibitem[Hofmann et~al.(2014)Hofmann, Seal, and Tholen]{Monoidal}
D.~Hofmann, G.~J. Seal, and W.~Tholen, editors.
\newblock \emph{Monoidal Topology}.
\newblock Cambridge University Press, 2014.

\bibitem[Kozen(1981)]{Kozen:1981}
D.~Kozen.
\newblock
  \href{http://www.sciencedirect.com/science/article/pii/0022000081900362}{Semantics
  of probabilistic programs}.
\newblock \emph{Journal of Computer and System Sciences}, 22\penalty0
  (3):\penalty0 328--350, 1981.

\bibitem[Krishnaswami and Benton(2011)]{DBLP:conf/lics/KrishnaswamiB11}
N.~R. Krishnaswami and N.~Benton.
\newblock \href{http://dx.doi.org/10.1109/LICS.2011.38}{Ultrametric semantics
  of reactive programs}.
\newblock In \emph{{IEEE} {S}ymposium on {L}ogic in {C}omputer {S}cience
  ({LICS}), Toronto, Ontario}, pages 257--266, 2011.

\bibitem[MacQueen et~al.(1984)MacQueen, Plotkin, and
  Sethi]{DBLP:conf/popl/MacQueenPS84}
D.~B. MacQueen, G.~D. Plotkin, and R.~Sethi.
\newblock \href{http://doi.acm.org/10.1145/800017.800528}{An ideal model for
  recursive polymorphic types}.
\newblock In \emph{{ACM} {S}ymposium on {P}rinciples of {P}rogramming
  {L}anguages ({POPL}), Salt Lake City, Utah}, pages 165--174, 1984.

\bibitem[Majster{-}Cederbaum(1988)]{DBLP:journals/ipl/Majster-Cederbaum88}
M.~E. Majster{-}Cederbaum.
\newblock \href{http://dx.doi.org/10.1016/0020-0190(88)90224-4}{On the
  uniqueness of fixed points of endofunctors in a category of complete metric
  spaces}.
\newblock \emph{Information Processing Letters}, 29\penalty0 (6):\penalty0
  277--281, 1988.

\bibitem[Majster{-}Cederbaum and
  Zetzsche(1991)]{DBLP:journals/iandc/CederbaumZ91}
M.~E. Majster{-}Cederbaum and F.~Zetzsche.
\newblock \href{http://dx.doi.org/10.1016/0890-5401(91)90005-M}{Towards a
  foundation for semantics in complete metric spaces}.
\newblock \emph{Information and Computation}, 90\penalty0 (2):\penalty0
  217--243, 1991.

\bibitem[Majster{-}Cederbaum and
  Zetzsche(1994)]{DBLP:journals/tcs/Majster-CederbaumZ94}
M.~E. Majster{-}Cederbaum and F.~Zetzsche.
\newblock \href{http://dx.doi.org/10.1016/0304-3975(94)90052-3}{The comparison
  of a {CPO}-based semantics with a {CMS}-based semantics for {CSP}}.
\newblock \emph{Theoretical Computer Science}, 124\penalty0 (1):\penalty0
  1--40, 1994.

\bibitem[Nakano(2000)]{DBLP:conf/lics/Nakano00}
H.~Nakano.
\newblock \href{http://dx.doi.org/10.1109/LICS.2000.855774}{A modality for
  recursion}.
\newblock In \emph{{IEEE} {S}ymposium on {L}ogic in {C}omputer {S}cience
  ({LICS}), Santa Barbara, California}, pages 255--266, 2000.

\bibitem[Pitts(1996)]{Pitts:1996}
A.~M. Pitts.
\newblock Relational properties of domains.
\newblock \emph{Information and Computation}, 127\penalty0 (2):\penalty0
  66--90, 1996.

\bibitem[Plotkin(1985)]{PlotkinGD:lecppf}
G.~Plotkin.
\newblock Lectures on predomains and partial functions.
\newblock Notes for a course given at the Center for the Study of Language and
  Information, Stanford, 1985.

\bibitem[Reed and Pierce(2010)]{Reed:2010}
J.~Reed and B.~C. Pierce.
\newblock \href{http://doi.acm.org/10.1145/1863543.1863568}{Distance makes the
  types grow stronger: A calculus for differential privacy}.
\newblock In \emph{{ACM} {SIGPLAN} {I}nternational {C}onference on {F}unctional
  {P}rogramming ({ICFP}), Baltimore, Maryland}, pages 157--168, 2010.
\newblock ISBN 978-1-60558-794-3.

\bibitem[Schwinghammer et~al.(2011)Schwinghammer, Birkedal, and
  St{\o}vring]{DBLP:conf/fossacs/SchwinghammerBS11}
J.~Schwinghammer, L.~Birkedal, and K.~St{\o}vring.
\newblock \href{http://dx.doi.org/10.1007/978-3-642-19805-2_21}{A step-indexed
  {K}ripke model of hidden state via recursive properties on recursively
  defined metric spaces}.
\newblock In \emph{International Conference on Foundations of Software Science
  and Computation Structures {(FoSSaCS)}, Saarbr{\"u}cken, Germany}, volume
  6604 of \emph{Lecture Notes in Computer Science}, pages 305--319.
  Springer-Verlag, 2011.

\bibitem[Smyth and Plotkin(1982)]{Smyth:82}
M.~B. Smyth and G.~D. Plotkin.
\newblock
  \href{http://dblp.uni-trier.de/db/journals/siamcomp/siamcomp11.html#SmythP82}{The
  category-theoretic solution of recursive domain equations}.
\newblock \emph{SIAM Journal on Computing}, 11\penalty0 (4):\penalty0 761--783,
  1982.

\bibitem[van Breugel(2001)]{DBLP:journals/tcs/Breugel01}
F.~van Breugel.
\newblock \href{http://dx.doi.org/10.1016/S0304-3975(00)00403-5}{An
  introduction to metric semantics: operational and denotational models for
  programming and specification languages}.
\newblock \emph{Theoretical Computer Science}, 258\penalty0 (1--2):\penalty0
  1--98, 2001.

\end{thebibliography}

\end{document}
